\newtheorem{theorem}{Theorem}[section]
\newtheorem{lemma}[theorem]{Lemma}
\newtheorem{assumption}[theorem]{Assumption}
\newtheorem{claim}[theorem]{Claim}
\newtheorem{corollary}[theorem]{Corollary}
\newtheorem{definition}[theorem]{Definition}
\newtheorem{proposition}[theorem]{Proposition}
\newif\ifFULL
\newcommand{\eps}{\varepsilon}
\newcommand{\sse}{\subseteq}
\newcommand{\I}{{\cal I}}
\newcommand{\poly}{\operatorname{poly}}
\newcommand{\opt}{{\mathsf opt}}
\newcommand{\cB}{{\cal B}}
\newcommand{\calS}{{\cal S}}
\newcommand{\cost}{{\mathsf {cost}}}
\newcommand{\tm}{{\widetilde m}}
\newcommand{\tM}{{\widetilde M}}
\newcommand{\tw}{{\widetilde w}}
\newcommand{\ells}{{\ell^\star}}
\newcommand{\ellone}{{\ell_1}}
\newcommand{\elltwo}{{\ell_2}}
\newcommand{\cA}{{\cal A}}
\newcommand{\tilt}{{\widetilde t}}
\newcommand{\tF}{{\widetilde F}}
\newcommand{\tf}{{\widetilde f}}
\newcommand{\vst}{{v^\star}}
\newcommand{\barp}{{\bar p}}
\newcommand{\const}{{64}}
\newcommand{\U}{U}
\newcommand{\IGNORE}[1]{}
\newcommand{\speed}{\sigma}
\renewcommand{\emptyset}{\varnothing}
\newcommand{\ones}[1]{\textbf{1}_{(#1)}}
\newcommand{\tsty}{\textstyle}
\newcommand{\mysubsection}[1]{\subsection{#1}}
\newcommand{\initOneLiners}{%
    \setlength{\itemsep}{0pt}
    \setlength{\parsep }{0pt}
    \setlength{\topsep }{0pt}
}
\newenvironment{OneLiners}[1][\ensuremath{\bullet}]
    {\begin{list}
        {#1}
        {\initOneLiners}}
    {\end{list}}
\title{Bag-of-Tasks Scheduling on Related Machines}
\author{ Anupam Gupta\thanks{
        (anupamg@cmu.edu)
        Computer Science Department,
        Carnegie Mellon University.
        }
    \and Amit Kumar\thanks{
        (amitk@cse.iitd.ac.in)
        Computer Science and Engineering Department,
        Indian Institute of Technology, Delhi.
        }
	\and Sahil Singla\thanks{
        (singla@cs.princeton.edu)
        Department of Computer Science,
        Princeton University.   
             }
    }
\begin{document}

\maketitle

\begin{abstract}
  We consider online scheduling to minimize weighted completion time
  on related machines, where each \emph{job} consists of several
  \emph{tasks} that can be concurrently executed.  A job gets
  {completed} when all its component tasks finish. We obtain an
  $O(K^3 \log^2 K)$-competitive algorithm in the
  \emph{non-clairvoyant} setting, where $K$ denotes the number of
  distinct machine speeds.  The analysis is based on dual-fitting on a
 precedence-constrained LP relaxation that may be of independent interest.
\end{abstract}

\section{Introduction} \label{sec:intro}

Scheduling to minimize the weighted completion time is a fundamental
problem in scheduling. Many algorithms have been developed in both the
online and offline settings, and for the cases where machines are
identical, related, or unrelated. Most of the work, however, focuses on
the setting where each job is a monolithic entity, and has to be
processed in a sequential manner.

In this work, we consider the online setting with multiple related
machines, where each \emph{job} consists of several {\em tasks}. These
tasks are independent of each other, and can be executed concurrently
on different machines. (Tasks can be preempted and migrated.) A job is said to have \emph{completed} 
 when all its component tasks finish processing. We consider the
\emph{non-clairvoyant} setting where the algorithm does not know the
size of a task up-front, but only when the task finishes 
processing. Such instances arise in operating system schedulers, where
a job and its tasks correspond to a process and its threads that can be
executed in parallel. This setting is sometimes called a \emph{``bag
  of tasks''} (see e.g.~\cite{4536445,MoschakisK15,BenoitMPRV10}). 

The bag-of-tasks model can be modeled using precedence
constraints. Indeed, each job is modeled as a star graph, where the
tasks correspond to the leaves (and have zero weight), and the root is
an auxiliary task with zero processing requirement but having weight
$w_j$. Hence the root can be processed only after all leaf tasks have
completed processing. The goal is to minimize total \emph{weighted
  completion time}.  Garg et al.~\cite{GGKS-ICALP19} gave a
constant-competitive algorithm for this problem for \emph{identical
  machines}, in a more general setting where tasks form arbitrary
precedence DAGs.

We extend this result to the setting of \emph{related machines} where
 machine $i$ has speed $s_i$. By losing a constant factor, we  assume that all speeds are powers of some constant $C$. Let $K$ denote the number of distinct machine speeds.  In~\S\ref{sec:stmt-lower-bound}, we show that this problem is
strictly more challenging than in the identical machines setting:

\begin{theorem}[Lower Bound]
  Any online non-clairvoyant algorithm has  $\Omega(K)$ competitive ratio for bags-of-tasks on
   related machines.
\end{theorem}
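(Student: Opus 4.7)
The plan is to construct an adversarial instance on $K$ distinct machine speeds that forces any online non-clairvoyant algorithm to incur $\Omega(K)$ times the optimum weighted completion time. I would take $K$ machines with geometrically growing speeds $s_k = C^k$ for $k = 0,1,\ldots,K-1$ (with $C$ a large constant), write $R = \sum_k s_k$, and release a batch of $K$ single-task jobs at time $0$. The task sizes are drawn from the matched set $\{s_0, s_1, \ldots, s_{K-1}\}$, but the correspondence between task identities and sizes is chosen by an adaptive adversary after observing the algorithm's initial rate allocation.

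The core observation is a pigeonhole/symmetry argument: at time $0$ the tasks look identical to the non-clairvoyant algorithm (all have identical release times, zero elapsed work, and the algorithm has no size information), so whatever rate allocation $\rho_1,\ldots,\rho_K$ the algorithm picks must satisfy $\sum_j \rho_j \leq R$, and hence some task $j^\star$ receives rate at most $R/K$. The adversary labels this $j^\star$ with the largest size $s_{K-1}$. Under the algorithm's schedule, $j^\star$'s nominal completion time is then $s_{K-1} \cdot K/R = \Omega(K)$ (taking $C$ large so that $R \approx s_{K-1}$). In contrast, the offline optimum matches the size-$s_k$ task to machine $M_k$, completing every task in unit time, and so $\mathsf{OPT} = O(1)$.

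The main technical obstacle is that the algorithm can re-balance rates as smaller tasks complete: once the tasks labeled with sizes $s_0,\ldots,s_{K-2}$ finish, task $j^\star$ may suddenly receive the full speed $s_{K-1}$, which would appear to undo the gap. I would counter this in one of two ways, either continuously re-releasing fresh ``decoy'' single-task jobs so that the algorithm is always faced with $K$ indistinguishable active tasks (pinning the bottleneck task's rate at $R/K$ throughout), or assigning non-uniform weights that concentrate most of the weighted completion time on the slow-to-complete task. The calibration must be tuned so that the offline optimum can dispose of the decoys on the slower $K-1$ machines (whose combined speed is $R - s_{K-1}$) without ever needing to touch the fastest machine $M_{K-1}$, thereby preserving $\mathsf{OPT} = O(1)$ while the algorithm's cost grows to $\Omega(K) \cdot \mathsf{OPT}$.
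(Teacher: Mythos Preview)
Your construction has a genuine gap, and neither of your proposed fixes repairs it. With one machine per speed class and geometrically growing speeds $s_k = C^k$, the total capacity $R$ is dominated by the single fastest machine. If you trace what actually happens when the algorithm runs all alive tasks at equal rate (even with a single job of $K$ tasks, which is what you would need since single-task jobs reduce to ordinary scheduling), the size-$C^j$ task finishes after the previous one in additional time roughly $C^{j-1}(C-1)(K-j)/R$. Summing these phase lengths gives
\[
\sum_{j=1}^{K-1} \frac{C^{j-1}(C-1)(K-j)}{R} \;\approx\; (C-1)\sum_{i\ge 1} i\,C^{-i} \;=\; \frac{C}{C-1} \;=\; O(1),
\]
so the makespan is $O(1)$, not $\Omega(K)$. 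The phases telescope away precisely because the fast machine does almost all the work. The paper's instance avoids this by taking \emph{many} machines in each slower class so that $m_\ell \sigma_\ell$ is at least the total capacity of all faster classes (the increasing-capacity assumption). Then each phase---where the tasks matched to one speed class finish---takes $\Omega(1)$ time, and there are $K$ phases.

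Your two fixes also fail. Continuously releasing fresh decoy tasks does not keep the bottleneck task $j^\star$ indistinguishable: the algorithm sees the elapsed work on each task, and $j^\star$ has strictly more elapsed work than any freshly released decoy, so the algorithm can simply assign the fastest machine to the task with the most elapsed work. The non-uniform-weights fix fails for a more basic reason: in the non-clairvoyant model only the \emph{sizes} are hidden, not the weights, so the algorithm would immediately put the fastest machine on the high-weight task. The paper sidesteps both issues by using a \emph{single} job with $m$ tasks (so the objective is the makespan and weights play no role), arguing that by symmetry all alive tasks are run at equal rate throughout, and using the increasing-capacity machine profile to make each of the $K$ completion phases cost $\Omega(1)$.
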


The lower bound %
arises because we want to
process larger tasks on fast machines, but we have no idea about the
sizes of the tasks, so we end up clogging the fast machines with small
tasks: this issue did not arise when machines were
identical. Given the lower bound, we now look for a non-clairvoyant scheduling algorithm
with a competitive ratio that depends %
on $K$, the number of distinct speeds. 
This number may be small in
many settings,  %
e.g., when we use commodity hardware of a limited number of types
(say, CPUs and GPUs). Our main result is a positive answer to this question:

\begin{theorem}[Upper Bound]
  The online non-clairvoyant algorithm for bags-of-tasks on
   related machines has a competitive ratio of \S\ref{sec:scheduling-algorithm} is
  $O(\min\{K^3 \log^2 K, K + \log n\})$.
\end{theorem}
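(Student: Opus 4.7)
The plan is to design a non-clairvoyant policy that, at every instant, distributes the alive tasks of each job across the $K$ speed classes in a balanced way, and then to bound its cost against a precedence-constrained LP relaxation by dual fitting. Since all speeds are powers of $C$, I would group machines into $K$ classes, and within each class run a weighted proportional-share discipline over the tasks currently assigned to that class, where job $j$ contributes weight $w_j$ split among its alive tasks on that class. As tasks complete, the algorithm reshuffles the assignment so each alive job continues to occupy roughly the same ``slot'' on every class where it has tasks. This generalizes the identical-machines algorithm of Garg et al.~\cite{GGKS-ICALP19} by treating each speed class as its own pool of identical machines, and pays for cross-class imbalance only in the analysis.

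For the analysis, I would write a time-indexed LP with variables $x_{j,k,t}$ representing the amount of processing done on job $j$ by class $k$ at time $t$. The bag-of-tasks precedence is encoded via the zero-work root task carrying weight $w_j$, which lower-bounds $\sum_j w_j C_j$ through the standard mean-busy-time inequality. In the dual, I would introduce a per-job variable $\alpha_{j,t}$ for each moment $j$ is alive, together with capacity duals $\beta_{k,t}$ for each class and time. The dual values are read off from the algorithm's own state: $\alpha_{j,t}$ scales with the pending weight contribution of $j$ at time $t$, and $\beta_{k,t}$ scales with the total weight of jobs currently sharing class $k$. The dual-feasibility constraints reduce to per-$(j,k)$ covering inequalities that should follow from the per-class proportional-share guarantee, while the dual objective recovers the algorithm's cost up to the competitive factor.

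The main obstacle is controlling the feasibility slack to be $O(K^3 \log^2 K)$: the $\Omega(K)$ lower bound shows we cannot avoid the cost of placing small tasks on fast machines, so the charging argument must cleanly handle misassignment. My plan would be to bucket jobs by a doubling classification on both their weight and their number of alive tasks, losing a $\log K$ factor for each; and to charge the slack of a job on a ``wrong'' class using a potential that tracks how many classes the job currently straddles. Three factors of $K$ plausibly arise: one from the speed ratio across the range of classes a job spans, one from the number of distinct classes a job's tasks occupy over their lifetimes, and one from the mismatch between the rate the algorithm affords a job on a slow class and what the optimum would use.

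The alternative $O(K + \log n)$ bound should come from a cruder accounting where, instead of bucketing by $\log K$ weight/count groups, one sums directly over the at most $n$ live jobs, producing an additive $\log n$ on top of a basic $K$ overhead for speed heterogeneity. Taking the minimum of the two analyses then yields the stated guarantee.
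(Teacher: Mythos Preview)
Your proposal has genuine gaps at both the algorithmic and analytic level.

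First, the algorithm you sketch is not the one the paper analyzes, and it is not clear it can achieve the bound. You propose to \emph{assign} alive tasks to speed classes and run proportional share within each class; but the non-clairvoyant lower bound arises precisely because you cannot tell which task belongs on which class. The paper's algorithm does \emph{not} assign tasks to classes. It uses a global water-filling procedure: at each time $t$ it raises a parameter $\tau$ and gives every alive task $v$ of job $j$ the tentative rate $(w_j/|T^t(j)|)\cdot\tau$, freezing maximal tight sets whenever the Hall-type constraint $\sum_{v\in V}L^t_v\le\gamma S_{|V|}$ becomes tight. All alive tasks of a job freeze simultaneously and at the same rate; this induces a partition of the alive tasks into \emph{blocks} at each time, and the entire analysis is organized around these blocks.

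Second, the analysis you outline misses the paper's main structural idea. The source of the $\poly(K)$ bound is a classification of blocks at each time $t$ into four types: \emph{simple} (average rate within a constant of some $\speed_\ell$), \emph{long} (total capacity within a constant of some $m_\ell\speed_\ell$ and weight at least $w(A^t)/(10K)$), \emph{cheap} (non-simple with weight below $w(A^t)/(10K)$), and \emph{short} (the remainder). One shows that cheap and short blocks together carry only a constant fraction of $w(A^t)$, so it suffices to set duals for simple and long blocks. The two $\log K$ factors come from a doubling argument on the weight of long blocks over time (there can be at most $O(\log K)$ doublings because long blocks already weigh $\ge w(A^t)/(10K)$), not from bucketing jobs by weight and task count. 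The three $K$ factors are: one from preprocessing the instance to satisfy an \emph{increasing capacity} assumption (slower classes have geometrically larger total capacity), one from the speed-up $\gamma=\Theta(K\log K)$ needed for dual feasibility, and one from the $1/(K\log K)$ scaling in the dual objective. Your ``speed ratio / classes spanned / rate mismatch'' heuristic does not correspond to any of these.

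Third, the LP you describe is too weak. The paper's relaxation introduces $U_{j,t}:=\max_{v\in T(j)}(\text{fraction of }v\text{ remaining at }t)$ and uses $\sum_t U_{j,t}$ as (a second copy of) the completion time; the resulting dual has per-task variables $\delta_{j,v}$ and $\delta_{j,v,t}$ whose careful setting is exactly where the block structure enters. A mean-busy-time LP with a single $\alpha_{j,t}$ per job does not give you a handle on which task of $j$ is the bottleneck, and the $O(K+\log n)$ bound in the paper already requires distributing $w_j$ non-uniformly over tasks via $\delta_{j,v}=w_j/(2^{h-1}\gamma)$ for the $h$-th doubling group of tasks, which is where the $\log n$ appears.
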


Our algorithm uses a greedy strategy. Instead of explicitly building a
schedule, it assigns \emph{(processing) rates} to tasks at each time
$t$. Such a rate assignment is called feasible if for every $k$, the
rate assigned to any subset of $k$ tasks is at most the total speed of
the $k$ fastest machines. Using an argument based on Hall's matching theorem, a
schedule exists if and only if such a rate assignment can be found.
To assign these rates, each alive task gets a ``priority'', which
is the ratio of the weight of the job containing it to the number of
alive tasks of this job. In other words, a task with low weight  or
with many tasks gets a low priority. We assign feasible rates to
 alive tasks in a ``fair manner'', i.e., we cannot
increase the rate of a high priority task by decreasing the rate of a
lower priority task.  To efficiently find such feasible rates, we use a
water-filling procedure.

The analysis proceeds using the popular dual-fitting approach, but we need new ideas:
\textbf{(i)}~we adapt the precedence-constrained LP relaxation for completion time in~\cite{ChudakS99} to our setting. A naive relaxation would define the completion time of 
  a task as the maximum of the (fractional) completion times of each
  of the tasks, where the fractional completion time of a task is the
  sum over times $t$ of the fraction of the task remaining at this
  time
 Instead,  we define $U_{j,t}$, for a job $j$ and
  time $t$ as the maximum over all tasks $v$ for $j$ of the fraction
  of $v$ which remains to be completed at time $t$, the completion
  time of $j$ as $\sum_t U_{jt}$. (See \S\ref{sec:lp} for details.) 
 \textbf{(ii)}~Although it is natural to divide the machines into
classes based on their speeds, we need a finer partitioning, which
drives our setting of dual variables. Indeed, the usual idea of
dividing up the job's weight equally among the tasks that are still
alive only leads to an $O(\log n)$-competitiveness (see \S\ref{sec:weaker}). To do better, we
first preprocess the instance so that distinct machine speeds differ
by a constant factor, but the total processing capacity of a slower
speed class is far more than that of all faster machines. Now, at each
time, we divide the machines into {\em blocks}. A constant fraction of
the blocks have the property that either the average speed of the
machines in the block is close to one of the speed classes, or the
total processing  capacity of a block is close to that of {\em all}
the machines of a speed class. It turns out that our dual-fitting
approach works for accounting  the weight of jobs which get
processed by such blocks; proving this constitutes the bulk of technical part of the analysis. Finally, we show that most jobs (in terms of weight) get processed by such blocks, and hence we are able to bound the overall weighted completion time. 
We present the proofs in stages, giving intuition for the
new components in each of the sections.

\mysubsection{Related Work} Minimizing weighted completion time
on parallel machines with precedence constraints has
$O(1)$-approx\-imation in the \emph{offline} setting: Li~\cite{Li17}
improves on~\cite{Hall,MQS} to give a $3.387+\eps$-approximation. For
\emph{related} machines the precedence constraints make the problem
harder: there is an $O(\log m/\log \log
m)$-approximation~\cite{Li17} improving on a prior $O(\log K)$
result~\cite{ChudakS99}, and an $\omega(1)$ hardness  under
certain complexity assumptions~\cite{BNF15}. Here $m$ denotes the number of machines. These results are for offline and hence clairvoyant
settings, and  do not apply to our setting of non-clairvoyant
scheduling.

In the setting of parallel machines, there has been recent work on minimizing weighted completion time in DAG scheduling, where each job consists of a set of tasks with precedence constraints between them given by a DAG~\cite{RS,ALLM}. \cite{GGKS-ICALP19} generalized this to the non-clairvoyant setting and gave an $O(1)$-competitive algorithm. Our algorithm for the related case is based on a similar water-filling rate assignment idea. Since the machines have different speeds, a set of rates assigned to tasks need to satisfy a more involved feasibility condition. Consequently, its analysis becomes much harder; this forms the main technical contribution of the paper. Indeed, even for the special case considered in this paper where every DAG is a star, we can show a lower bound of $\Omega(K)$ on the competitive ratio of any non-clairvoyant algorithm. 
In \ifFULL \S\ref{sec:AppendixOne} \else the full version, \fi we show that any non-clairvoyant algorithm for related machines DAG scheduling must have $\Omega \big( \frac{\log m}{\log \log m} \big)$-competitive ratio. 

Our problem also has similarities to open shop scheduling. In
  open shop scheduling, each jobs consists of several tasks, where
  each task $v$ (for job $j$) needs to be processed on a distinct
  machine for $p_{vj}$ amount of time. However, unlike our setting,
  two tasks for a job cannot be processed simultaneously on different
  machines. \cite{QueyranneS01} considered open shop scheduling in the
  offline setting for related machines and gave a
  $(2+\eps)$-approximation. \cite{CorreaSV09} considered a further
  generalization of our problem to unrelated machines, where the tasks
  corresponding to distinct jobs need not be disjoint. They gave a
  constant-factor approximation algorithm, again offline. 

\mysubsection{Paper Organization} In this extended abstract, we first give
the algorithm in \S\ref{sec:scheduling-algorithm}, and the linear
program in \S\ref{sec:lp}. A simpler proof of
$O(K + \log n)$-competitiveness is in \S\ref{sec:weaker}. 
We show $\poly(K)$-competitiveness for the 
case of a single job (which corresponds to makespan minimization) in
\S\ref{sec:one-star}, and then give  the complete proof for the general case in
\S\ref{sec:stronger}. %

\section{Problem Statement and the $\Omega(K)$ Hardness}
\label{sec:stmt-lower-bound}

Each \emph{job} $j$ has a \emph{weight} $w_j$ and consists of
\emph{tasks} $T(j) = \{ (j,1), (j,2), \ldots, (j,k_j)\}$ for some
$k_j$. Each task $v =(j,\ell)$ has an associated \emph{processing
  requirement/size} $p_v = p_{(j,\ell)}$. The job $j$ completes when
all its associated tasks finish processing. We use letters $j, j'$,
etc.\ to denote jobs, and $v, v'$, etc.\ to denote tasks $(j,
\ell)$.

There are $m$ machines with speeds
$s_1 \geq s_2 \geq \ldots \geq s_m$. The goal is to minimize the
weighted completion time  of the jobs. We allow task \emph{preemption} and
\emph{migration}, and different tasks of a job can be processed
concurrently on different machines. However, a task itself can be
processed on at most one machine at any time.
In this extended abstract we consider the
special case when all release dates are 0,
but our results also extend to the more general setting 
of arbitrary release dates \ifFULL(see \S\ref{sec:genrj} for details)\else(details in the full version)\fi.
Let $S_k := s_1 + \ldots + s_k$ denote the total speed of the fastest
$k$ machines. Since we care about the number of \emph{distinct}
speeds, we assume there are $K$ \emph{speed classes}, with speeds
$\speed_1 > \speed_2 > \ldots > \speed_K$. There are $m_i$ machines
having speed  $\speed_i$, where $\sum_i m_i = m$. 

\begin{assumption}[Increasing Capacity Assumption] \label{assump}  For parameter $\gamma \geq 1$:
  \begin{OneLiners}
  \item[(1)] \emph{(Falling Speeds.)} For each $\ell$, we have
    $\speed_i/\speed_{i+1} \geq \const$. 
  \item[(2)] \emph{(Increasing Capacity.)} For each $\ell$, the total processing
    capacity of speed class $\ell$ is at least twice that of the previous
    (faster) speed classes. I.e., %
    $ m_\ell \speed_\ell \geq 2(m_1 \speed_1 + \ldots + m_{\ell-1} \speed_{\ell-1}) . $
  \item[(3)] \emph{(Speed-up.)} The algorithm uses  machines that are
    $\gamma$ times faster than the adversary's machines.
  \end{OneLiners}
\end{assumption}

\begin{restatable}{proposition}{incap}
\label{prop:incap}
 An arbitrary instance can be transformed into one satisfying
  \Cref{assump} by losing a factor $O(\gamma K)$ in the competitive
  ratio. 
\end{restatable}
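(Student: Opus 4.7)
The plan is a two-stage transformation of the arbitrary instance $I$ into a new instance $I'$ satisfying the three properties. The first stage enforces Falling Speeds by rounding: I would round every machine's speed $s_i$ down to the largest power of $\const$ not exceeding it. This creates at most $K$ classes whose consecutive speed ratios are at least $\const$, and shrinks each individual machine speed by at most a factor $\const$, a constant that will be absorbed at the end.

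The second stage enforces Increasing Capacity by a greedy class-selection, scanning from the fastest class downward. I initialize $S = \{1\}$ and, for $\ell = 2,\ldots,K$, add $\ell$ to $S$ exactly when $m_\ell \speed_\ell \geq 2 \sum_{i \in S,\, i < \ell} m_i \speed_i$; otherwise I discard all the machines of class $\ell$. Let $I'$ be the instance consisting only of the kept classes' machines. By construction $I'$ inherits Falling Speeds (as a sub-collection of the rounded classes) and satisfies Increasing Capacity; moreover, class $1$ is always retained, so the maximum speed $\speed_1$ is preserved.

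The key lemma I would then prove is $\opt(I') \leq O(K) \cdot \opt(I)$. For every discarded class $\ell$ the greedy test fails, so $m_\ell \speed_\ell < 2 \sum_{i \in S} m_i \speed_i$; summing over the at most $K$ discarded classes shows that the total discarded capacity is at most $2K$ times the kept capacity, so $I'$ retains an $\Omega(1/K)$ fraction of $I$'s total processing capacity. Since $\speed_1$ is preserved, no individual task of size $p_v$ is rate-bottlenecked: it can still be run on the fastest kept machine in time $p_v/\speed_1$. Combined with the $\Omega(1/K)$ capacity ratio, using preemption and migration one can simulate any schedule on $I$ on the kept machines by re-time-sharing the concurrent work, inflating every completion time by at most an $O(K)$ factor, which proves the lemma.

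Finally, I would feed $I'$ to the assumed $\alpha$-competitive algorithm running under the $\gamma$-speed-up of (3), and port the resulting schedule back to the original, un-sped-up machines of $I$; this costs an extra factor $O(\gamma)$ and absorbs the constant from Step 1, yielding an $O(\alpha \gamma K)$-competitive algorithm on $I$. The main obstacle is the simulation step $\opt(I') \leq O(K) \opt(I)$: converting the $\Omega(1/K)$ capacity loss into an $O(K)$ loss in \emph{weighted completion time} (and not something worse) requires simultaneously preserving $\speed_1$ (so single tasks are unaffected) and leveraging preemption/migration to spread concurrent work uniformly across the kept classes; the choice of the fast-to-slow greedy rule with geometric factor $2$ is what keeps this cost accounting clean.
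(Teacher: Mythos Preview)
Your approach is sound in outline but differs from the paper's, and the step you yourself flag as the ``main obstacle'' is left underargued.

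\textbf{Comparison with the paper.} The paper also greedily selects a subset $X$ of speed classes (comparing each candidate class to the \emph{last} kept class rather than to the running sum), but then it does the opposite of what you do: instead of restricting to the kept machines, it \emph{inflates} each kept class to have $\Theta(K) m_\ell$ machines. This makes the transformed instance $\I'$ strictly more powerful than $\I$, so $\opt(\I') \le \opt(\I)$ is immediate---no simulation lemma is needed. The $O(K)$ loss then appears only when one observes that $\I$ with an $O(K)$ speed-up has at least as much capacity in every kept class as $\I'$, so the algorithm's schedule on $\I'$ can be run on the real machines. Your route trades this clean ``more machines'' direction for a harder ``fewer machines'' direction, where the $O(K)$ must be extracted from a simulation bound $\opt(\I') \le O(K)\,\opt(\I)$.

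\textbf{The gap.} The two facts you invoke---that $\I'$ retains an $\Omega(1/K)$ fraction of total capacity and that $\speed_1$ is preserved---are not by themselves enough to conclude that every schedule on $\I$ can be time-dilated by $O(K)$ onto $\I'$. Feasibility of a rate vector (\Cref{lem:speedfeas}) is governed by the prefix sums $S_k$ for \emph{every} $k$, not just $k=1$ and $k=m$. What you actually need is that $S_k^{\I} \le O(K)\, S_k^{\I'}$ for all $k$. This \emph{does} follow from your greedy rule, but via the stronger per-prefix fact: for every discarded class $\ell$, the greedy test gives $m_\ell \speed_\ell < 2 \sum_{i \in S,\, i<\ell} m_i \speed_i$, so for any cutoff class $\ell^\star$ the discarded capacity in classes $\le \ell^\star$ is at most $2K$ times the kept capacity in classes $<\ell^\star$. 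From this one checks $S_k^{\I} \le (2K{+}1)\, S_k^{\I'}$ for every $k$, and then scaling all rates by $2K{+}1$ converts any feasible rate assignment on $\I$ into a feasible one on $\I'$, giving the desired $O(K)$ time dilation. Your write-up states only the global capacity ratio, which is the $k=m$ case; you should make the prefix bound explicit.
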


\begin{proof}{\em (Sketch)}
   For the first part, we round down the speed of each machine to a
  power of $\const$. This changes the completion time by at most a
  factor of $\const$.  The second increasing capacity assumption is
  not without loss of generality--- we greedily
  find a subset of speed classes  by losing
   $O( K)$ factor in  competitive ratio (see details in~\Cref{sec:missing2}).
   Finally, the $\gamma$-speedup can only change the competitive ratio by $\gamma$ factor. 
\end{proof}

Next we show that any online algorithm has to be $\Omega(K)$-competitive even for a single job with the machines satisfying increasing capacity \Cref{assump}.

\begin{proposition}
  \label{prop:lb}
  Any online algorithm is  $\Omega(K)$-competitive  even for a single job
  under increasing capacity \Cref{assump}.
\end{proposition}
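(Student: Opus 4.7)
The plan is to exhibit a single-job instance with $\opt = 1$ on which every non-clairvoyant algorithm has completion time $\Omega(K)$. The instance I would use contains, for each speed class $i \in \{1,\ldots,K\}$, exactly $m_i$ tasks of size $\sigma_i$, for a total of $M = \sum_i m_i$ tasks. The offline optimum pairs each size-$\sigma_i$ task with a distinct speed-$\sigma_i$ machine, so every task finishes in unit time and $\opt = 1$.

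For the lower bound I would use an adaptive adversary that keeps sizes unassigned and maintains a size multiset $P$ initialized with $m_i$ copies of $\sigma_i$ for every $i$. Letting $\sigma^\star$ denote the smallest value currently in $P$, the rule is \emph{complete-as-soon-as-possible}: whenever some alive (not-yet-assigned) task's accumulated work first reaches $\sigma^\star$, the adversary assigns that task size $\sigma^\star$, removes the copy from $P$, and the task completes. This is always consistent with a legitimate adversarial instance because every assigned size is at least the task's work-to-date. The execution then splits naturally into phases $i = K, K-1, \ldots, 1$, where phase $i$ begins when $\sigma^\star$ first equals $\sigma_i$ and ends once $m_i$ tasks of size $\sigma_i$ have completed.

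The core is a two-sided estimate for each phase. For the work lower bound I would first establish the invariant that, with the convention $\sigma_{K+1}=0$, at the start of phase $i$ every alive task has accumulated work at most $\sigma_{i+1}$; otherwise it would already have been assigned a size $\le \sigma_{i+1}$ in an earlier phase. This forces each of the $m_i$ tasks that complete in phase $i$ to gain at least $\sigma_i - \sigma_{i+1} \ge \sigma_i/2$ work during the phase, using the falling-speeds bound $\sigma_{i+1} \le \sigma_i/64$, so the total processing done in phase $i$ is at least $m_i \sigma_i/2$. For the rate upper bound, at any moment in phase $i$ only $N_i := m_1 + \cdots + m_i$ tasks are alive, and since a task occupies at most one machine the total rate is at most $S_{N_i} = \sum_{j \le i} m_j \sigma_j \le \tfrac{3}{2}\, m_i \sigma_i$, where the last inequality uses $\sum_{j < i} m_j \sigma_j \le m_i \sigma_i/2$ from the increasing-capacity hypothesis. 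Dividing, phase $i$ lasts at least $\tfrac{1}{3}$, and summing over the $K$ phases gives completion time at least $K/3 = \Omega(K)$.

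The main obstacle I expect is making the invariant on works at phase boundaries fully precise, especially when many tasks cross $\sigma^\star$ simultaneously (as happens under equal sharing, where all alive tasks have identical work). That case requires checking that when the number of simultaneous threshold crossers exceeds the remaining $\sigma^\star$-budget, the left-over tasks simply sit at work exactly $\sigma^\star$ until the adversary is forced to use a larger size—an inequality still compatible with the bound ``work $\le \sigma_{i+1}$'' at the next phase boundary. This is essentially bookkeeping; once it is done, the rest of the argument is an immediate application of the falling-speeds and increasing-capacity parts of \Cref{assump}.
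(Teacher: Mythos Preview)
Your proposal is correct and uses the same hard instance as the paper: for each speed class $i$, exactly $m_i$ tasks of size $\sigma_i$, with $\opt=1$ via the obvious matching. The difference lies in how you handle an arbitrary algorithm. The paper invokes a symmetry argument---since a non-clairvoyant algorithm cannot distinguish tasks, it ``can at best'' process all alive tasks at equal rates---and then analyzes that single equal-rate schedule, showing each group $T_\ell(j)$ takes roughly $1/2$ additional time to clear. You instead use an adaptive adversary that assigns the smallest remaining size to whichever task first accumulates that much work, and then do a direct work-versus-capacity accounting per phase: at least $m_i\sigma_i/2$ work must be done in phase $i$ while the available rate is at most $S_{N_i}\le \tfrac32 m_i\sigma_i$.

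Your route buys rigor: it applies to \emph{any} deterministic non-clairvoyant algorithm without the informal ``at best equal rates'' step, and the invariant you state (work $\le \sigma_{i+1}$ at the start of phase $i$) together with the falling-speeds and increasing-capacity parts of \Cref{assump} makes the $1/3$-per-phase bound immediate. The paper's symmetry argument, on the other hand, is what lets it claim the bound even for randomized algorithms against oblivious adversaries; your adaptive-adversary construction as stated gives the deterministic (or randomized-vs-adaptive) bound, and would need Yao's principle plus a fixed distribution to recover the oblivious case. The simultaneous-crossing bookkeeping you flag is indeed benign: ties leave the surviving tasks at work exactly $\sigma_{i+1}$, which still satisfies your invariant.
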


\begin{proof}{\em(Sketch)}
  Consider a single job $j$ with $m$ tasks, where $m$ is the number of
  machines. For every speed class $\ell$, there are $m_\ell$ tasks of
  size $\sigma_\ell$---call these tasks $T_\ell(j).$ Since there is
  only one job, the objective is to minimize the makespan. The offline
  (clairvoyant) objective is $1$, since all tasks can be assigned to
  machines with matching speeds. However, any online algorithm incurs a
  makespan of $\Omega(K)$. 
 Here is an informal  argument, which can be proved even for
 randomized algorithms against oblivious adversaries: since there is no way to distinguish between
 the tasks, the algorithm can at best
  run all the alive tasks at the same speed. The tasks in $T_K(j)$
  will be the first to finish by time
  $ \frac{ m_K \sigma_K}{\sum_\ell m_\ell \sigma_\ell} \geq
  \frac{1}{2},$ where the inequality follows from the increasing
  capacity assumption. At this time, the processing on tasks from
  $T_\ell(j)$ for $\ell < K$ has been very small, and so tasks in
  $T_{K-1}(j)$ will require about $1/2$ more units of time to finish,
  and so on. %
\end{proof}

\section{The Scheduling Algorithm}
\label{sec:scheduling-algorithm}

The scheduling algorithm assigns, at each time $t$, a \emph{rate}
$L^t_v$ to each unfinished task $v$. The following lemma (whose proof is deferred to the appendix) characterizes
rates that correspond to schedules: %
\begin{restatable}{lemma}{feas}
  \label{lem:speedfeas}
  A schedule $\cal S$ is feasible %
  if for every time $t$ and every value of $k$: %
  \begin{quote}
    $(\star)$ the total rate assigned to any
    subset of $k$ tasks is at most $\gamma \cdot S_k$.
  \end{quote}
\end{restatable}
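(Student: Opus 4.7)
Sufficiency is the content of the lemma; necessity is immediate since at each instant any $k$ tasks occupy at most $k$ machines and so collect total rate at most $\gamma S_k$, and time-averaging yields $(\star)$. My plan for the nontrivial direction is to build the schedule on each maximal time window $[t, t+\delta]$ over which the rate vector $L^t$ is constant---i.e., between successive task arrivals, task completions, and algorithmic rate updates---and then to concatenate these per-window schedules into a single global schedule.

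Fix such a window $[t, t+\delta]$. Processing each task $v$ for a total of $L^t_v \cdot \delta$ units on $[t, t+\delta]$, subject to each task running on at most one machine at a time and each machine running at most one task at a time, is exactly an instance of the classical preemptive scheduling problem on uniformly related machines: jobs of sizes $p_v = L^t_v \cdot \delta$ on machines of speeds $\gamma s_1 \ge \cdots \ge \gamma s_m$ with a common deadline $\delta$. Gonzalez--Sahni / Horvath--Lam--Sethi showed that such a schedule exists iff, for every $k$, the top-$k$ job sizes sum to at most $\delta \cdot \gamma S_k$. Dividing by $\delta$, this is $(\star)$ applied to the subset of $k$ tasks with largest rates, which is granted; the per-window schedule therefore exists.

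The main obstacle, if a self-contained proof is preferred, is re-deriving the sufficiency of these Hall-type inequalities---the argument the paper alludes to in the introduction. The cleanest route is via the transportation polytope $\{y_{v,i} \ge 0 : \sum_v y_{v,i} \le 1,\ \sum_i y_{v,i} \le 1\}$ of fractional time-share assignments on the window, whose extreme points are partial task-machine matchings. Any feasible $y$ therefore decomposes Birkhoff-style into a convex combination of matchings, which is literally a preemptive, migratory time-sharing schedule on $[t, t+\delta]$. By LP duality, a point $y$ with achieved rates $\sum_i \gamma s_i \, y_{v,i} = L^t_v$ exists iff $\sum_{v \in T} L^t_v \le \gamma S_{|T|}$ holds for every task-subset $T$, and the binding constraints are exactly those where $T$ is an initial segment in decreasing rate order. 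Thus $(\star)$ is precisely what is needed, and concatenating the per-window schedules proves the lemma.
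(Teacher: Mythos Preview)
Your proposal is correct but follows a different route from the paper. You identify each constant-rate window as an instance of preemptive makespan on uniformly related machines and invoke the Gonzalez--Sahni / Horvath--Lam--Sethi feasibility criterion (with an optional polytope/Birkhoff derivation). The paper instead gives a self-contained constructive argument: within each window it runs a ``largest remaining processing time first'' rule---at every instant order the unfinished tasks by remaining quota and place the $i$-th task on machine $i$---and argues by contradiction. If some task $v^\star$ fails to receive its full quota $L^t_{v^\star}\delta$, the paper forms the closure $S$ (seeded with $v^\star$) under the relation ``was ever ranked ahead of a task in $S$''; a monotonicity claim shows no task in $S$ finishes its quota, and by construction only tasks of $S$ ever occupy the top $|S|$ machines, so $\sum_{v\in S}\bar p_v > \gamma\delta\, S_{|S|}$, contradicting $(\star)$.

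Your route is shorter and situates the lemma in its classical context; the paper's is fully elementary and actually exhibits the schedule. One caveat on your alternative self-contained sketch: the assertion that ``by LP duality'' the feasibility condition for a point $y$ in the sub-doubly-stochastic polytope hitting the speed-weighted targets is exactly $(\star)$ for every $T$ is true, but it is not a one-line consequence of Farkas---it is essentially the polymatroid/Hall characterization whose proof is the substance of the papers you cite. If you want that paragraph to stand on its own rather than as a pointer, that step needs to be worked out explicitly.
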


For each time $t$, we now specify  the rates $L^t_v$ assigned to each
unfinished task $v$.
For %
job $j$, let $T^t(j)$ be the set of tasks in $T(j)$
which are alive at time $t$. %
Initially all tasks are \emph{unfrozen}. %
We raise a parameter $\tau$, starting at zero, at a uniform speed. The
values taken by $\tau$ will be referred to as \emph{moments}. For
each job $j$ and each task $v \in T^t(j)$ that is unfrozen, define a
\emph{tentative rate} at  $\tau$ to be
\begin{gather}
  L^t_v := \frac{ w_j }{ | T^t(j) | } \cdot \tau \ . \label{eq:rate}
\end{gather}
Hence the tentative rates of these unfrozen tasks increase linearly,
as long as condition $(\star)$ is satisfied. However,
if $(\star)$ becomes tight for some
subset $V$ of alive tasks, i.e.,
$\tsty  \sum_{v \in V} L^t_v = \gamma \cdot S_{|V|}$, %
pick a \emph{maximal} set of such tasks and \emph{freeze} them, fixing
their rates at their current tentative values. (Observe the factor of
$\gamma$ appears on the right side because we assume the machines in
the algorithm to have a speedup of $\gamma$.) Now continue the
algorithm this way, raising $\tau$ and the $L^t_v$ values of remaining
unfrozen tasks $v$ until another subset gets tight, etc., stopping
when all jobs are frozen. This defines the $L^t_v$ rates for each task
$v$ for  time $t$. By construction, these rates satisfy~($\star$).

\mysubsection{Properties of the Rate Assignment}
\label{sec:prop-rate-assignm}

The following claim shows that all alive tasks corresponding to a job
get frozen simultaneously.

\begin{lemma}[Uniform Rates]
  \label{cl:rate}
  For any time $t$ and any job $j$, all its alive tasks (i.e., those
  in $T^t(j)$) freeze at the same moment $\tau$, and hence get the
  same rate.
\end{lemma}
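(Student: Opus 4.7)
The plan is to argue by contradiction using a ``swap-and-union'' argument, combined with a closure property: the family of tight sets is closed under union. This closure is where the related-machines structure enters, through the concavity of the partial-sum sequence $k \mapsto S_k$.

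The first step is the union closure. Suppose $V_1, V_2$ are subsets of alive tasks each satisfying $\sum_{u \in V_i} L^t_u = \gamma S_{|V_i|}$. Writing $a := |V_1|$, $b := |V_2|$, $c := |V_1 \cap V_2|$, the nonincreasing sequence $s_1 \geq s_2 \geq \cdots$ gives $S_a - S_c = s_{c+1} + \cdots + s_a$ and $S_{a+b-c} - S_b = s_{b+1} + \cdots + s_{a+b-c}$, each a sum of $a-c$ consecutive terms but the second starting at a later index; hence $S_a + S_b - S_c \geq S_{a+b-c}$. Combining this inequality with inclusion-exclusion and the constraint $(\star)$ applied to $V_1 \cap V_2$ and $V_1 \cup V_2$ pins every inequality at equality, so $V_1 \cup V_2$ is tight. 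In particular, there is a unique maximal tight set at every moment.

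The second step is the swap. Let $\tau^\star$ be the first moment at which any task of $T^t(j)$ freezes, and let $V$ be the maximal tight set at $\tau^\star$. By the choice of $\tau^\star$, every task of $T^t(j)$ is still unfrozen just before $\tau^\star$, so each has tentative rate $\frac{w_j}{|T^t(j)|}\tau^\star$ by~\eqref{eq:rate}. Pick some $v \in V \cap T^t(j)$, and suppose for contradiction that some $v' \in T^t(j) \setminus V$. The set $V'' := (V \setminus \{v\}) \cup \{v'\}$ has the same cardinality as $V$; since $L^t_v = L^t_{v'}$, it also has the same rate sum, so $V''$ is tight. By union closure, $V \cup V'' = V \cup \{v'\}$ is tight and strictly contains $V$, contradicting maximality. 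Hence every alive task of $j$ lies in $V$ and freezes at $\tau^\star$, and equality of the rates then follows from~\eqref{eq:rate}.

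The main obstacle is the union-closure step: without concavity of $S_k$ the swap would still produce a tight set of the same size as $V$, but one could not merge it with $V$ to violate maximality. A secondary subtlety is the book-keeping that at moment $\tau^\star$ the tasks in $V \cap T^t(j)$ are truly unfrozen (so they indeed share the tentative rate $\frac{w_j}{|T^t(j)|}\tau^\star$), which is why $\tau^\star$ is defined as the \emph{first} freezing moment for any task of $T^t(j)$.
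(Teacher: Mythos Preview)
Your proof is correct. Both your argument and the paper's reach the same contradiction --- exhibiting a tight set strictly containing the chosen maximal $V$ --- but by different mechanics. The paper works directly with marginals: from tightness of $V$ and feasibility of $V \setminus \{v\}$ it extracts $L^t_v \geq \gamma\, s_{|V|}$, and from feasibility of $V \cup \{v'\}$ it extracts $L^t_{v'} \leq \gamma\, s_{|V|+1}$; combining these with $L^t_v = L^t_{v'}$ and $s_{|V|} \geq s_{|V|+1}$ forces equality throughout, so $V \cup \{v'\}$ is itself tight. Your route is more structural: you first prove union closure of tight sets via the concavity of $k \mapsto S_k$, then manufacture a second tight set $V'' = (V \setminus \{v\}) \cup \{v'\}$ by the swap, and merge. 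The paper's computation is shorter for this single lemma; your union-closure step is the standard polymatroid uncrossing fact and is reusable --- for instance, it immediately yields \emph{uniqueness} of the maximal tight set at each moment, which the paper's algorithm tacitly assumes when it says ``pick a maximal set of such tasks.''
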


\begin{proof}
  For the sake of contradiction, consider the first  moment $\tau$
  where a maximal set $V$ of tasks contains $v$ but not $v'$, for some
  job $j$ with $v, v' \in T(j)$. Both $v,v'$ have been treated
  identically until now, so $L^t_v = L^t_{v'}$. Also, by the choice of
  $\tau$, 
  $ \sum_{u \in V: u \neq v} L^t_u + L^t_v = \gamma S_{|V|} .$ 
  Since we maintain feasibility at all moments,
  \[ \tsty \sum_{u \in V: u \neq v} L^t_u + L^t_v + L^t_{v'} \leq \gamma S_{|V|+1}
    \qquad \text{and}
    \qquad \sum_{u \in V: u \neq v} L^t_u \leq \gamma S_{|V|-1} \ .
  \] 
  This implies $L^t_v \geq \gamma s_{|V|}$ and $L^t_{v'} \leq
  \gamma s_{|V|+1}$. Since $L^t_v = L^t_{v'}$ and $s_{|V|} \geq s_{|V|+1}$,
  all of these must be equal. In that case, by the maximality of set $V$, the algorithm should have picked $V \cup \{v'\}$
  instead of $V$.
\end{proof}

For a task $v \in T^t(j)$, define $\tw^t(v) := w_j/|T^t(j)|$ to be
task $v$'s ``share'' of the weight of job $j$ at time
$t$. So if task $v$ freezes at moment $\tau$, then its rate is $L_v^t = \tw^t(v)
\cdot \tau$.
Let us  
 relate this share for  $v$ to certain
averages of the weight. (Proof in \Cref{sec:missing3}) %

\begin{restatable}{corollary}{rate}
  \label{cor:rate1}
  Fix a time $t$. Let $V$ be the set of tasks frozen by some moment
  $\tau$. For a task $v \in V$, 
    \begin{OneLiners}
  \item[(i)] if   $V' \sse V$ is any subset of tasks
  which freeze either at the same moment as $v$, or after it, then
  $ \frac{\tw^t(v)}{ s_{|V|}} \geq \frac{w(V')}{S_{|V|}}\ .$

  \item[(ii)]  if $V'' \sse V$ is any subset of tasks which freeze either  at the same moment as $v$, or before it, then 
  $\frac{\tw^t(v)}{L^t_v} \leq \frac{\tw^t(V'')}{\sum_{v' \in V''} L^t_{v'}} \ . $
  \end{OneLiners}
\end{restatable}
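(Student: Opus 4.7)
The plan is to reduce both parts to the basic identity $L^t_v = \tw^t(v) \cdot \tau_v$, where $\tau_v$ denotes the moment at which task $v$ is frozen. On top of this, I would use two tools: the feasibility condition $(\star)$ applied at suitably chosen moments, and the maximality of the tight set $V_v \ni v$ that freezes $v$, as already exploited in \Cref{cl:rate}. I would prove part~(ii) first, since it only requires the ordering of freezing moments; then part~(i), which additionally needs maximality.

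For part~(ii), each $v' \in V''$ satisfies $\tau_{v'} \leq \tau_v$, and by \eqref{eq:rate} this gives
\[ \frac{\tw^t(v')}{L^t_{v'}} = \frac{1}{\tau_{v'}} \geq \frac{1}{\tau_v} = \frac{\tw^t(v)}{L^t_v}. \]
The mediant inequality---if $a_i/b_i \geq c$ for all $i$ with $b_i > 0$, then $(\sum_i a_i)/(\sum_i b_i) \geq c$---then yields
\[ \frac{\tw^t(V'')}{\sum_{v' \in V''} L^t_{v'}} \geq \frac{\tw^t(v)}{L^t_v}, \]
which is exactly the claim. This part is essentially algebraic.

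For part~(i), I would focus on the moment $\tau_v$ itself. Since every $v' \in V'$ has $\tau_{v'} \geq \tau_v$, at moment $\tau_v$ each such $v'$ is either just-frozen (if $\tau_{v'} = \tau_v$) or still unfrozen (if $\tau_{v'} > \tau_v$); in both cases its current rate equals $\tw^t(v') \cdot \tau_v$. Applying $(\star)$ at moment $\tau_v$ to the set $V'$---or, if needed, to $V' \cup V_v$ in order to invoke the tightness identity $\sum_{u \in V_v} L^t_u = \gamma S_{|V_v|}$---bounds $\tau_v \cdot \tw^t(V')$ above by a prefix-sum of machine speeds. I would then combine this with the maximality-derived bound $L^t_v = \tw^t(v)\tau_v \geq \gamma s_{|V_v|}$ (exactly the inequality established inside the proof of \Cref{cl:rate}) to convert $1/\tau_v$ into a ratio of the form $\tw^t(v)/s_{\cdot}$, rearranging to the stated form $\tw^t(v)/s_{|V|} \geq \tw^t(V')/S_{|V|}$.

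The main obstacle is bookkeeping: lining up the sizes of $V_v$ and $V'$ so that the final denominator is indexed by $|V|$ as stated, and carefully handling which tasks in $V'$ coincide with $V_v$ (freezing exactly at $\tau_v$) versus those frozen strictly later. A helpful reduction is that we may assume $v$ is chosen extremally within $V'$ (e.g., freezing earliest), since the inequality only gets easier for tasks that freeze later; this lets us apply maximality at the single moment $\tau_v$ rather than having to interpolate across a range of freezing moments.
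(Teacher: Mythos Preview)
Your proof of part~(ii) is correct and matches the paper's argument exactly: both reduce to $\tau_{v'} \le \tau_v$ and the identity $L^t_{v'} = \tw^t(v')\,\tau_{v'}$.

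For part~(i), your plan works, but the ``bookkeeping obstacle'' you worry about dissolves if you work directly with the set $V$ from the statement rather than with the block $V_v$ containing $v$. The paper observes two facts about $V$: it is tight, i.e.\ $\sum_{v'\in V}\tw^t(v')\,\tau_{v'} = \gamma S_{|V|}$ (since $V$ is a union of consecutive frozen blocks), and the final rates obey feasibility on $V\setminus\{v\}$, i.e.\ $\sum_{v'\in V\setminus\{v\}}\tw^t(v')\,\tau_{v'} \le \gamma S_{|V|-1}$. Subtracting gives $\tw^t(v)\,\tau_v \ge \gamma s_{|V|}$ directly, with the correct index. For the other side, since $V'\subseteq V$ and $\tau_{v'}\ge\tau_v$ on $V'$, one gets $\tau_v\,\tw^t(V') \le \sum_{v'\in V'} \tau_{v'}\,\tw^t(v') \le \sum_{v'\in V} \tau_{v'}\,\tw^t(v') = \gamma S_{|V|}$, again with the correct index. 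Dividing the two displayed inequalities yields the claim. Your route via $V_v$ and applying $(\star)$ to $V'$ at moment $\tau_v$ also succeeds (one then uses $|V_v|\le|V|$ and $|V'|\le|V|$ to align the indices), but the detour through $V_v$ and the suggested fallback to $V'\cup V_v$ are unnecessary once you exploit the tightness of $V$ itself.
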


\mysubsection{Defining the Blocks}

The rates for tasks alive at any time $t$ are defined by a sequence of
freezing steps, where some group of tasks are frozen: we call these
groups {\em blocks}. By \Cref{cl:rate}, all tasks in $T^t(j)$ belong
to the same block. The weight $w(B)$ of block $B$ is the total weight
of jobs whose tasks belong to $B$.  Let $B^t_1, B^t_2, \ldots $
be the blocks at time $t$ in the order they were frozen, and
$\tau^t_1, \tau^t_2, \ldots $ be the moments at which they froze. 
Letting $b^t_r:= |B^t_1 \cup \ldots \cup B^t_r|$, we get  that any task
$v \in B^t_r$ satisfies 
$\tau^t_v \cdot w(B^t_r) = \gamma(S_{b^t_{r+1}} - S_{b^t_r}) .$

Each block $B_r^t$ has an associated set of machines, namely the
machines on which the tasks in this block are processed---i.e., the
machines indexed $b^t_{r-1} + 1, \ldots, b^t_r$. We use $m(B)$ to
denote the set of machines associated with a block $B$. Since
$|B|=|m(B)|$ and the jobs in $B$ are processed on $m(B)$ in a
pre-emptive manner at time $t$, the rate assigned to any job is at
least the slowest speed (and at most the fastest speed) of the
machines in $m(B)$.

\section{The Analysis and Intuition}
\label{sec:lp}
We prove the competitiveness by a dual-fitting analysis: we give 
a primal-dual pair of LPs, use the algorithm above to give a feasible
primal, and then exhibit a feasible dual  with value
within a small factor of the primal cost.

In the primal LP, we
have variables $x_{ivt}$ for each task $v$, machine $i$, and time $t$ denoting the extent of processing done on
 task $v$ at machine $i$ during the interval $[t,t+1]$. Here $\U_{j,t}$
 denotes fraction of job $j$ finished at or after time $t$, and $C_j$
 denotes the completion time of job $j$. 
\begin{alignat}{2}
\min & \tsty \sum_j w_j C_j + \sum_{j,t} w_j \U_{j,t} &&\notag \\
\U_{j,t} & \geq \tsty \sum_{t' \geq t} \sum_i \frac{x_{ivt'}}{p_v} &\quad \quad &\forall j, \forall v \in T(j),  \forall t  \label{eq:deltatdual} \\
C_j & \geq  \tsty
\sum_{t} \sum_i 
 \frac{x_{ivt}}{s_i}  & &\forall j, \forall v \in T(j) \label{eq:deltaDual} \\
\tsty \sum_i \sum_{ t}
\frac{x_{ivt}}{p_v} & \geq 1 & &\forall j, \forall v \in T(j) \label{eq:alphaDual} \\
\tsty \sum_{v} \frac{x_{ivt}}{s_i} & \leq 1 & &\forall i, \forall t \label{eq:betaDual}
\end{alignat}
The constraint~\eqref{eq:deltatdual} is based on precedence-constrained LP relaxations for completion time. Indeed, each job can be thought of as a star graph with a zero size task at the root preceded by all the actual tasks at the leaf. 
%
  In our LP, for each time $t$, we define $U_{j,t} \in
[0,1]$ to be the maximum over
all tasks $v \in T(j)$ of the fraction of $v$ that remains (the RHS
of~\eqref{eq:deltatdual}), %
and the completion time of $j$ is at least the total sum over times
$t$ of $\U_{j,t}$ values. Since we do not explicitly enforce that
a task cannot be processed simultaneously on many machines, the first term $\sum_j w_j C_j$
is added to avoid a large integrality gap.
We show
feasibility of this LP relaxation (up to factor 2) in \S\ref{sec:appendixlp}.

\begin{restatable}{claim}{lp} \label{claim:feasibility}
For any schedule $\cal S$, there is a feasible solution to the LP of objective value at most  $2\,\cost(\calS).$
\end{restatable}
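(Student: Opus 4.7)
The plan is to convert the schedule $\calS$ directly into a primal LP solution and charge each of the two objective terms separately to $\cost(\calS)$. First I set the variables: for each task $v$, machine $i$, and time $t$, let $x_{ivt}$ be the amount of $v$ that $\calS$ processes on machine $i$ during $[t,t+1]$; let $C_j := C_j^{\calS}$, the actual completion time of $j$ in $\calS$; and let $U_{j,t}$ equal the tight right-hand side of \eqref{eq:deltatdual}, namely $U_{j,t} := \max_{v \in T(j)} \sum_{t' \geq t} \sum_i x_{ivt'}/p_v$, which is the largest fraction of any task of $j$ still remaining at time~$t$.

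Next I verify feasibility. Constraint~\eqref{eq:alphaDual} is immediate since $\calS$ fully processes every task, and \eqref{eq:betaDual} holds because $x_{ivt}/s_i$ is the fraction of slot $[t,t+1]$ that machine $i$ spends on task~$v$. Constraint~\eqref{eq:deltatdual} holds by construction of $U_{j,t}$. For~\eqref{eq:deltaDual}, the key observation is that since a task is processed on at most one machine at any moment, $\sum_t \sum_i x_{ivt}/s_i$ equals the total wall-clock time $\calS$ spends processing $v$, which is at most $C_v^{\calS} \leq C_j^{\calS} = C_j$.

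Finally I bound the objective. The first term gives $\sum_j w_j C_j = \cost(\calS)$ by definition. For the second term, the crucial point is that $U_{j,t} \in [0,1]$ for every $(j,t)$ (each single-task remaining-fraction lies in $[0,1]$, and the max of such fractions still does), and $U_{j,t} = 0$ whenever $t \geq C_j^{\calS}$, since all tasks of $j$ have then completed. Hence $\sum_t U_{j,t} \leq C_j^{\calS}$, so the second term is also bounded by $\cost(\calS)$, giving the claimed factor of~$2$. I do not expect a significant obstacle here: the main conceptual point, already highlighted in the LP discussion in \S\ref{sec:lp}, is that defining $U_{j,t}$ as a maximum (rather than a sum) over alive tasks keeps it bounded by~$1$, which is precisely what allows the remaining-fraction term in the objective to be absorbed into a single extra copy of $\cost(\calS)$ rather than blowing up with the number of tasks in a job.
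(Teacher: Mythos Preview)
Your proof is correct and essentially matches the paper's argument. The only cosmetic difference is that the paper sets $U_{j,t}$ to the $0/1$ indicator of whether $j$ is still alive at time $t$, whereas you set it to the tight value $\max_{v\in T(j)}\sum_{t'\ge t}\sum_i x_{ivt'}/p_v$; both choices satisfy \eqref{eq:deltatdual}, are bounded by $1$, and vanish once $j$ completes, so the second objective term is at most $\cost(\calS)$ either way.
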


The linear programming dual has variables $\alpha_{j,v},
\delta_{j,v}, \delta_{j,v,t}$ corresponding to constraints
\eqref{eq:alphaDual},\eqref{eq:deltaDual},\eqref{eq:deltatdual} for every job $j$ and task $v \in T(j)$, and $\beta_{i,t}$ corresponding to constraints \eqref{eq:betaDual} for every machine $i$ and time $t$:
\begin{alignat}{2}
\notag \max. & \tsty \sum_{j,v} \alpha_{j,v} - \sum_{i,t} \beta_{i,t} &&\\
\label{eq:d1}
\tsty \frac{\alpha_{j,v}}{p_v} & \tsty \leq
 \frac{\beta_{i,t}}{s_i} + \sum_{t' \leq t} \frac{\delta_{j,v,t}}{p_v} + \frac{\delta_{j,v}}{s_i} &\quad  &\forall j, \forall i, \forall t , \forall v \in T(j) \\
\tsty \sum_{v \in T(j)} \delta_{j,v} & \leq w_j &&\forall j \label{eq:d2}  \\
\tsty \sum_{v \in T(j)} \delta_{j,v,t} & \leq w_j && \forall j, t \label{eq:d2'}
\end{alignat}

We now give some intuition about these dual variables. The
  quantity $\delta_{j,v,t}$ should be thought of the contribution (at
  time $t$) towards the weighted flow-time of $j$. Similarly,
  $\delta_{j,v}$ is {\em global} contribution of $v$ towards the
  flow-time of $v$. (In the integral case, $\delta_{j,v}$ would be
  $w_j$ for the task which finishes last. If there are several such
  tasks, $\delta_{j,v}$ would be non-zero only for such tasks only and
  would add up to $w_j$). The quantity $\alpha_{j,v}$ can be thought
  of as $v$'s contribution towards the total weighted flow-time, and $\beta_{i,t}$ is roughly the queue size at time $t$ on machine $i$. Constraint~\eqref{eq:d1} upper bounds $\alpha_{j,v}$ in terms of the other dual variables. More intuition about these variables can be found in \S\ref{sec:interpret}.

\mysubsection{Simplifying the dual LP} 

Before interpreting the dual variables, we rewrite the dual LP and
add some additional constraints. Define additional variables
$\alpha_{j,v,t}$ for each job $j$ and task $v \in T(j)$ and time $t$,
such that variable $\alpha_{j,v} = \sum_t \alpha_{j,v,t}$. We add a
new constraint:
\begin{align}
    \label{eq:dualsum}
\tsty     \sum_{v \in T(j)} \alpha_{j,v,t} \leq w_j. 
\end{align}
This condition is not a requirement in the dual LP, but we will set
 $\alpha_{j,v,t}$ to satisfy it. Assuming this, we set
$\delta_{j,v,t} := \alpha_{j,v,t}$ for all jobs $j$, tasks
$v \in T(j)$ and times $t$; feasibility of~\eqref{eq:dualsum} implies
that of~\eqref{eq:d2'}. Moreover,  \eqref{eq:d1} simplifies
to
\[ \tsty  \sum_{t' \geq t} \frac{\alpha_{j,v,t'}}{p_v} \leq
  \frac{\beta_{i,t}}{s_i} + \frac{\delta_{j,v}}{s_i}. \] %
  Observe that we can
write $p_v$ as the sum of the rates, and hence as
$p_v = \sum_{t'} L^{t'}_v$. Since this is at least
$\sum_{t' \geq t} L^{t'}_v$ for any $t$, we can substitute above, and
infer that it suffices to verify the
following condition for all tasks $v \in T(j)$, time $t$, and time
$t' \geq t$:
\begin{align}
    \label{eq:dualnew}
\tsty     \alpha_{j,v,t'} \leq \frac{\beta_{i,t} \cdot L^{t'}_v}{s_i}  + \frac{\delta_{j,v} \cdot L^{t'}_v}{s_i}.
\end{align}
Henceforth, we  ensure that our duals (including $\alpha_{j,v,t}$) satisfy~\eqref{eq:dualsum},\eqref{eq:dualnew} and~\eqref{eq:d2}.

\mysubsection{Interpreting the Duals and the High-Level Proof Idea}
\label{sec:interpret}

We give some intuition about the dual variables, which will be useful
for understanding the subsequent analysis. We set dual variables
$\alpha_{j,v}$ such that for any job $j$, the sum   $\sum_{v \in T(j)}
\alpha_{j,v}$ is (approximately) the weighted completion of job
$j$. This ensures that $\sum_{j,v} \alpha_{j,v}$ is the total weighted
completion of the jobs. One way of achieving this is as follows: for
every time $t$ and task-job pair $(j,v)$ we define $\alpha_{j,v,t}$ variables such  that they add up to
be $w_j$ if job $j$ is unfinished at time $t$ (i.e., \eqref{eq:dualsum} is satisfied with equality). If $\alpha_{j,v}$ is
set to $\sum_t \alpha_{j,v,t}$, then these $\alpha_{j,v}$ variables
would add up to the weighted completion time of $j$.

The natural way of defining $\alpha_{j,v,t}$ is to evenly distribute
the weight of $j$ among all the alive tasks at time $t$, i.e., to set
$\alpha_{j,v,t} = \frac{w_j}{T^t(j)}$.  This idea works if we only
want to show that the algorithm is $O(\log n)$-competitive, but does
not seem to generalize if we want to show $O(K)$-competitiveness. The
reason for this will be clearer shortly, when we discuss the
$\delta_{j,v}$ variables.

Now we discuss $\beta_{i,t}$ dual variables. We set these variables so
that $\sum_{t} \beta_{i,t}$ is a constant (less than $1$) times the
total weighted completion time. This ensures that the objective value of the
dual LP is also a constant times the total weighted completion time. A
natural idea (ignoring constant factors for now) is to set
$\beta_{i,t} = \frac{w(A^t)}{K m_\ell}$, where $A^t$ is the set of
alive jobs at time $t$ and $\ell$ is the speed class of machine
$i$. Since we have put an $\Omega(K)$ term in the denominator of
$\beta_{i,t}$ (and no such term in the definition of $\alpha_{j,v}$),
ensuring the feasibility of~\eqref{eq:d1} would require a speed
augmentation of $\Omega(K)$.

Finally, consider the $\delta_{j,v}$ dual variables. As~\eqref{eq:d2}
suggests, setting $\delta_{j,v}$ is the same as deciding how to
distribute the weight $w_j$ among the tasks in $T(j)$. Notice,
however, that this distribution cannot depend on time (unlike
$\alpha_{j,v,t}$ where we were distributing $w_j$ among all the alive
tasks at time $t$). In the ideal scenario, tasks finishing later
should get high $\delta_{j,v}$ values. Since we are in the
non-clairvoyant setting, we may want to set
$\delta_{j,v} = \frac{w_j}{|T(j)|}$. We now argue this can lead to a
problem in satisfying~\eqref{eq:dualnew}.

Consider the setting of a single unit-weight job $j$ initially
having $n$ tasks, and so we set $\delta_{j,v} = \frac{1}{n}$ for all
$v$. Say that $n = m_\ell$ for a large value of $\ell$: by the
increasing capacity assumption, $m_\ell \approx m_1 + \ldots + m_\ell$.
Now consider a later point in time $t$ when only $n'$ tasks remain,
where $n' = m_{\ell'}$ for some speed class $\ell' \ll \ell$. At this
time $t$, each of the $n'$ surviving tasks have
$\alpha_{j,v,t} = \frac{1}{n'}.$ But look at the RHS of~\eqref{eq:dualnew},
with machine $i$ of speed class $\ell$. The rate $L^t_v$ will be very
close to $\speed_{\ell'}$ (again, by the increasing capacity
assumption), and so both the terms would be about
$\frac{\speed_{\ell'}}{m_\ell \speed_\ell}$. However,
$m_\ell \speed_\ell$ could be much larger than
$m_{\ell'} \speed_{\ell'}$, and so this constraint will not be
satisfied. In fact, we can hope to satisfy~\eqref{eq:dualnew} at some time
$t$ only if $n'$ is close to $n$, say at least $n/2$. When the number
of alive tasks drops below $n/2$, we need to {\em redistribute} the
weight of $j$ among these tasks, i.e., we need to increase the
$\delta_{j,v}$ value for these tasks, to about $\frac{1}{n/2}$. Since these
halving can happen for $\log n$ steps, we see that~\eqref{eq:deltaDual} is violated by a factor of $\log
n$. These ideas can be extended to give an $O(\log n + K)$-competitive
algorithm for arbitrary inputs; see  \S\ref{sec:weaker}
for details. To get a better bound, we need a more careful setting of
the dual variables, which we talk about in \S\ref{sec:one-star} and \S\ref{sec:stronger}.

\section{Analysis I: A Weaker $O(K + \log n)$ Guarantee}
\label{sec:weaker}

We start with a simpler analysis which yields an
$O(K + \log n)$-competitiveness. This argument will not use the
increasing capacity assumption from \Cref{assump}; however, the result
gives a competitiveness of $O(\max(K, \log n))$ which is logarithmic
when $K$ is small, whereas our eventual result will be
$O(\min(K^{O(1)}, K + \log n))$, which can be much smaller when $K \ll
\log n$.

\begin{theorem} \label{thm:weakerGuar}
  The scheduling algorithm in \S\ref{sec:scheduling-algorithm} is $O(K + \log n )$-competitive.
\end{theorem}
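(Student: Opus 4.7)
My plan is to prove \Cref{thm:weakerGuar} by a dual-fitting analysis on the LP of \S\ref{sec:lp}, following the template set up in \S\ref{sec:interpret}. Since \Cref{claim:feasibility} exhibits a feasible primal of value at most $2\,\opt$, it will suffice by weak duality to construct a feasible dual whose objective is $\Omega(\cost(\mathrm{alg})/(K+\log n))$.

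I will set a scaling constant $C = \Theta(K + \log n)$ and, writing $t_v$ for the completion time of task $v$, define
\[ \alpha_{j,v,t} := \tfrac{w_j}{C \cdot |T^t(j)|}\,\ind{v \in T^t(j)}, \qquad \delta_{j,v} := \tfrac{w_j}{2(1+\log n)\, |T^{t_v}(j)|}. \]
Summing $\alpha$ over alive tasks at each $t$ gives $\sum_{j,v}\alpha_{j,v} = \cost(\mathrm{alg})/C$, and the bound $\sum_v \delta_{j,v} \leq w_j$ required by~\eqref{eq:d2} follows by partitioning the lifetime of $j$ into at most $1+\log n$ geometric phases in which $|T^t(j)|$ halves and summing the $\delta$-contributions of tasks completing in each phase; constraint~\eqref{eq:dualsum} is immediate since $\sum_v \alpha_{j,v,t} \leq w_j/C \leq w_j$. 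For each non-idle machine $i$ at time $t$ let $B^t_i$ be its block; I set $\beta_{i,t}$ proportional to $\frac{w(B^t_i)}{K}\cdot \frac{s_i}{\sum_{i' \in m(B^t_i)} s_{i'}}$ (and $\beta_{i,t}=0$ when $i$ is idle), which telescopes across blocks to give $\sum_i \beta_{i,t} = O(w(A^t)/K)$ and hence $\sum_{i,t}\beta_{i,t} = O(\cost(\mathrm{alg})/K)$.

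The heart of the analysis is verifying the master constraint~\eqref{eq:dualnew}, $\alpha_{j,v,t'} \leq (\beta_{i,t}+\delta_{j,v})\,L^{t'}_v/s_i$, for every tuple $(j,v,i,t,t')$ with $t' \geq t$ and $v \in T^{t'}(j)$. I split on the relative size of $L^{t'}_v$ and $s_i$. When $L^{t'}_v \geq s_i$ the $\delta$-term alone suffices: $t_v \geq t'$ gives $|T^{t_v}(j)| \leq |T^{t'}(j)|$, so $\delta_{j,v}\, L^{t'}_v/s_i \geq \delta_{j,v} \geq \alpha_{j,v,t'}$ once $C \geq 2(1+\log n)$. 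This case also subsumes every idle machine $i$, since the algorithm's assignment places alive tasks on the fastest available machines, so an idle $i$ has index exceeding the number of alive tasks and every block at every $t' \geq t$ lies on machines strictly faster than $s_i$. The remaining case, $L^{t'}_v < s_i$ with $i$ non-idle, is the one where $\beta_{i,t}$ must do the work: using $L^{t'}_v = \tw^{t'}(v)\,\tau^{t'}_v$ and the freezing identity $\tau^t_i\cdot w(B^t_i) = \gamma\sum_{i'\in m(B^t_i)} s_{i'}$, the needed inequality reduces to a comparison of the freezing moments $\tau^{t'}_v$ and $\tau^t_i$ weighted by the factor $K$ appearing in $\beta$'s denominator.

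Establishing this freezing-moment comparison is the step I expect to be the main obstacle, since $\beta_{i,t}$ is defined at the earlier time $t$ while the constraint must hold at $t' \geq t$. The bridge will be that $L^{t'}_v < s_i$ forces $i$'s speed class to be strictly faster than the slowest machine of $B^{t'}_v$, leaving at most $K$ class-positions for $i$; combined with the monotonicity $A^{t'} \subseteq A^t$ (no arrivals after time $0$) and the within-block weight comparisons supplied by \Cref{cor:rate1}, a crude class-by-class accounting bounds $\tau^{t'}_v/\tau^t_i$ below by $\Omega(1/K)$, exactly compensating the $K$ in $\beta$'s denominator once $C$ absorbs this factor. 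Combining everything, the dual objective is at least $\cost(\mathrm{alg})/C - O(\cost(\mathrm{alg})/K) = \Omega(\cost(\mathrm{alg})/(K+\log n))$ for $C$ chosen large enough, and weak duality together with \Cref{claim:feasibility} then yields $\cost(\mathrm{alg}) = O((K+\log n)\,\opt)$, as claimed.
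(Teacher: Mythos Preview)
Your $\alpha$ and $\delta$ settings are essentially the paper's (the paper groups tasks geometrically by completion rank, which is equivalent to your $|T^{t_v}(j)|$-based $\delta$), and your handling of Case~1 ($L^{t'}_v \geq s_i$) and of idle machines is fine. The proposal breaks, however, at the block-based choice of $\beta_{i,t}$. First, the objective arithmetic is backwards: with $\sum_{j,v}\alpha_{j,v}=\cost(\mathrm{alg})/C$ and $\sum_{i,t}\beta_{i,t}=\Theta(\cost(\mathrm{alg})/K)$, the dual objective $\sum\alpha-\sum\beta$ is \emph{negative} whenever $K<\log n\leq C$, which is precisely the regime of interest. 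Second, even if you rescale $\beta$ by $1/C$ instead of $1/K$ to fix the sign, your Case~2 feasibility then requires $\tau^{t'}_v/\tau^t_i\geq\Omega(1)$ rather than $\Omega(1/K)$; and already the weaker $\Omega(1/K)$ bound is not established. Your justification invokes \Cref{cor:rate1}, but that corollary compares tasks \emph{at a single time}; it says nothing about relating the freezing moment of $i$'s block at time $t$ to that of $v$'s block at the later time $t'$. The block containing machine $i$ at time $t$ depends on which jobs happened to be alive then, and there is no monotonicity of $\tau$-values across time.

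The paper sidesteps the cross-time comparison entirely by taking $\beta_{i,t}:=w(A^t)/(\gamma\, m_\ell)$, where $\ell$ is the speed class of $i$ and $\gamma=2\max\{K,\log_2 n\}$ is the speedup; crucially this depends only on $w(A^t)$ and on $i$'s speed class, not on the block structure. Then in Case~2 (machine $i$ of class $\ell$ strictly faster than the slowest machine in $v$'s block at $t'$), \Cref{cor:rate1}(ii) applied \emph{at time $t'$} with $V''$ the set of all tasks frozen no later than $v$ gives $\tw^{t'}(v)/L^{t'}_v\leq w(A^{t'})/(\gamma\, m_\ell\sigma_\ell)$, since all class-$\ell$ machines are busy with $V''$. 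The only cross-time step is the trivial $w(A^{t'})\leq w(A^t)$. Summing $\beta$ over $i$ gives $K\cdot w(A^t)/\gamma\leq w(A^t)/2$, so the dual objective is at least half of $\cost(\mathrm{alg})$; the $O(K+\log n)$ factor enters solely through the speedup $\gamma$.
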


\begin{proof}
For each job $j$, we arrange the
tasks in $T(j)$ in descending order of their processing requirements.
(This is the opposite of the order in which they finish, since all alive
tasks of a job are processed at the same rate.)  Say the sequence of
the tasks for a job $j$ is $v_1, \ldots, v_{r}$. We partition these tasks
into \emph{groups} with exponentially increasing cardinalities:
$T_1(j) := \{v_1\}$, $T_2(j) := \{v_2, v_3\}$, and
$T_h(j) := \{v_{2^{h-1}}, \ldots, v_{2^h-1}\}$ has $2^{h-1}$
tasks.  (Assume w.l.o.g.\ that $r+1$ is a power of $2$ by adding
 zero-sized tasks to $T(j)$).
Now we define the dual variables.

\medskip

\noindent\textbf{Dual Variables.}  Define $\gamma := 2\max\{K, \log_2 n\}$.
\begin{itemize}
\item For a time $t$ and machine $i$ of speed class $\ell$, let $A^t$
  denote the set of active (unfinished) jobs at time $t$, and define
  $\displaystyle \beta_{i,t} := \frac{w(A^t)}{m_\ell \cdot \gamma} \ .$
\item For job $j$ and a task $v \in T_h(j)$ in the $h$-th group,
  define $ \displaystyle \delta_{j,v} := \frac{w_j}{2^{h-1} \cdot \gamma} \ .$
  
\item In order to define $\alpha_{j,v}$, we first define quantities
  $\alpha_{j,v,t}$ for every time $t$, and then set
  $\alpha_{j,v} := \sum_t \alpha_{j,v,t}$. At time $t$, recall that
  $T^t(j)$ is the set of alive tasks of job $j$, and define
 $$ \displaystyle \alpha_{j,v,t} ~:=~ \frac{w_j}{|T^t(j)|} \cdot \ones{ v \text{ alive
        at time $t$}} ~=~ \frac{w_j}{|T^t(j)|} \cdot \ones{ v \in
      T^t(j)} \ . $$
  This ``spreads'' the weight of $j$ equally among its alive tasks.
\end{itemize}

Having defined the dual variables, we first argue that they are feasible.

 \begin{lemma} [Dual feasibility]
 \label{lem:d1special}
 The dual variables defined above always satisfy the constraints~\eqref{eq:dualsum},~\eqref{eq:d2} and\eqref{eq:dualnew}  
for a speed-up factor $\gamma \geq 2\max\{K, \log_2 n\}$.
 \end{lemma}
 \begin{proof}
 To check feasibility of~(\ref{eq:d2}), consider a job
 $j$ and observe that
 $$ \tsty \sum_{v \in T(j)} \delta_{j,v} ~~=~~ \sum_h \sum_{v \in T_h(j)} \delta_{j,v} 
 ~~=~~\sum_h \sum_{v \in T_h(j)} \frac{w_j}{2^{h-1} \cdot \gamma} ~~=~~ \sum_h \frac{w_j}{\gamma} ~~\leq~~ w_j \ ,$$
 because $|T_h(j)| = 2^{h-1}$  and there are at most $\log_2 n \leq \gamma$ distinct groups. Feasibility of~\eqref{eq:dualsum} also follows easily. It remains to check~\eqref{eq:dualnew} for a job $j$, task $v$, machine $i$ and times $t' \leq t$. 

  If $v$ is not alive at time $t'$, then $\alpha_{j,v,t'}$ is 0,
  and~(\ref{eq:dualnew}) follows trivially. Else, $v \in T^{t'}(j)$, and
  suppose $v \in T_h(j)$. This means the jobs in
  $T_1(j), \ldots, T_{h-1}(j)$ are also alive at time $t'$, so
  $|T^{t'}(j)| \geq 1 + 2 + \ldots + 2^{h-2} + 1 = 2^{h-1}.$
  Furthermore, suppose the tasks in $T^{t'}(j)$ belong to block $B$
  (defined in \S\ref{sec:prop-rate-assignm}), and let $\ell^\star$ be
  the speed class with the slowest machines among the associated
  machines $m(B)$. Let $\ell$ denote the speed class of machine~$i$
  (considered in~\eqref{eq:dualnew}). Two cases arise:
  the first is when
        $\ell \geq \ells$, %
       where $L_v^{t'} \geq
       \gamma \speed_{\ell^\star} \geq \gamma \speed_\ell = \gamma s_i$, so \eqref{eq:dualnew} holds because %
      $$ \tsty \alpha_{j,v,t'} ~~=~~ \frac{w_j}{|T^{t'}(j)|} ~~\leq~~
      \frac{w_j}{2^{h-1}} ~~=~~ \gamma\cdot \delta_{j,v}
      ~~\leq~~ \frac{\delta_{j,v} \cdot L^{t'}_v}{s_i} \ .$$

  The second case is $\ell < \ells$: Let $V \sse A^{t'}$ be the set of jobs which are frozen
      by the moment $v$ freezes. In other words, $V$ contains 
      tasks in block $B$ and the blocks before it. Applying the second
      statement in \Cref{cor:rate1} with $V'' = V$,
      $$ \tsty \frac{w_j}{|T^{t'}(v)| L^{t'}_v} ~~\leq~~  \frac{\tw^t (V)}{\sum_{v' \in V} L^{t'}_{v'}} ~~\leq~~ \frac{w(V)}{\sum_{v' \in V} L^{t'}_{v'}} ~~\leq~~ \frac{w(A^{t'})}{\gamma \cdot m_\ell \speed_\ell} \ , $$
      where the last inequality uses the fact that all machines of speed class $\ell$ are busy processing jobs in $V$. Therefore, 
      $$ \tsty \alpha_{j,v,t'} ~~=~~ \frac{w_j}{|T^{t'}(j)|}
       ~~\leq~~ \frac{w(A^{t'}) \cdot L^{t'}_v}{\gamma \cdot m_\ell \sigma_\ell} ~~\leq~~ 
       \frac{ \beta_{i,t} \cdot L^{t'}_v}{s_i}\  ,$$
       the last inequality useing the definition of $\beta_{i,t}$ and
        that $w(A^t) \geq w(A^{t'}).$ 
 \end{proof}
 
 Finally, we show that the dual objective value for this setting of
 dual variables is close to the primal value. It is easy to check that
 $\sum_j \alpha_j = \sum_t w(A^t)$, which is the total weighted
 completion time of the jobs. Moreover,
 $$ \tsty \sum_{i,t} \beta_{i,t} ~~=~~ \sum_t \sum_{\ell} \sum_{i: s_i =
   \speed_\ell} \frac{w(A^t)}{m_\ell \; \gamma} ~~=~~ \frac{K}{\gamma}
 \cdot \sum_t w(A^t) \ .$$ 
 Since we chose  speedup  $\gamma = 2\max\{K, \log_2 n\}$, we have
 $K \leq \gamma/2$ and the dual objective value
 $\sum_{j,v} \alpha_{j,v} - \sum_{i,t} \beta_{i,t}$ is at least half
 of the total weighted completion time (primal
 value). This completes the proof of \Cref{thm:weakerGuar}. 
 \end{proof}

\section{Analysis II: An Improved Guarantee for a Single Job}
\label{sec:one-star}
 
We want to show that the competitiveness of our algorithm just depends
on $K$, the number of speed classes. To warm up, in this section we consider the
special case of a single job; in \S\ref{sec:stronger}
we consider the general case.
As was shown in \Cref{prop:lb}, 
any algorithm has competitive ratio $\Omega(K)$ even in
the case of a single job. We  give a
matching upper bound using dual fitting 
for an
instance \textit{with a single job} $j$, say of weight $1$, when the machines satisfy \Cref{assump}. 

\begin{theorem} \label{thm:singleJob}
  If the machines satisfy~\Cref{assump}, the scheduling algorithm in
  \S\ref{sec:scheduling-algorithm} is $O(K^2)$-competitive for a single job.
\end{theorem}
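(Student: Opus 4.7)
The plan is to apply the LP-dual framework of \S\ref{sec:lp} to exhibit a feasible dual whose value is $\Omega(C_j)$, where $C_j$ denotes the algorithm's makespan for the single unit-weight job $j$ under a speed-up $\gamma=\Theta(K)$; \Cref{prop:incap} will then convert this into the desired $O(K^2)$-competitive ratio. Since there is a single job, \Cref{cl:rate} forces all alive tasks to form one block at every instant, and its associated machines are always the fastest $|T^t(j)|$ machines. The natural assignment $\alpha_{j,v,t}:=\mathbf{1}[v\in T^t(j)]/|T^t(j)|$ makes~\eqref{eq:dualsum} tight and gives $\sum_v \alpha_{j,v}=C_j$.

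Let $t_v$ be the time at which task $v$ finishes and let $n_v:=|T^{t_v}(j)|$; I partition the tasks into $K$ groups by declaring $v\in V_\ell$ iff $n_v\in(N_{\ell-1},N_\ell]$, where $N_\ell:=m_1+\cdots+m_\ell$. At most $m_\ell=N_\ell-N_{\ell-1}$ tasks land in $V_\ell$, so the choice $\delta_{j,v}:=1/(K N_\ell)$ for $v\in V_\ell$ yields $\sum_v\delta_{j,v}\le 1$, verifying~\eqref{eq:d2}. Setting $\beta_{i,t}:=1/(\gamma m_{\ell_i})$ whenever $j$ is alive at $t$ gives $\sum_{i,t}\beta_{i,t}=K C_j/\gamma$; choosing $\gamma\ge 2K$ makes the dual objective at least $C_j/2$.

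The technical heart is verifying the pointwise inequality~\eqref{eq:dualnew}, which after substitution becomes $\beta_{i,t}+\delta_{j,v}\ge s_i/(\gamma S_{|T^{t'}(j)|})$ for every task $v\in V_\ell$ alive at time $t'$ and every machine $i$ of class $\ell_i$. The case-split will be on whether $\ell_i\le \ell^*_{t'}$, where $\ell^*_{t'}$ is the slowest class among the top $|T^{t'}(j)|$ machines. In the ``in-block'' case, all $m_{\ell_i}$ machines of class $\ell_i$ are among the top prefix, so $S_{|T^{t'}(j)|}\ge m_{\ell_i}\sigma_{\ell_i}$ and $\beta_{i,t}$ alone dominates. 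In the ``out-of-block'' case, the falling-speeds assumption gives $\sigma_{\ell_i}/\sigma_{\ell^*_{t'}}\le 1/\const^{\ell_i-\ell^*_{t'}}$, and combining with $S_{|T^{t'}(j)|}\ge |T^{t'}(j)|\cdot\sigma_{\ell^*_{t'}}$ shrinks the right-hand side; the increasing-capacity bounds $m_{\ell_i}\ge 2\cdot \const^{\ell_i-\ell^*_{t'}}\cdot m_{\ell^*_{t'}}$ together with $N_\ell\le \tfrac{3}{2}m_\ell$ then let either $\beta_{i,t}$ or $\delta_{j,v}$ absorb the constraint, depending on which of $|T^{t'}(j)|$ and $m_{\ell^*_{t'}}$ is the larger.

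The hard part will be the delicate sub-case where $n_v$ is barely above $N_{\ell-1}$ (so the block sits near the top of class $\ell$) and machine $i$ is only one class slower; here both the constant speed-gap and the $2\times$ capacity-gap in \Cref{assump} must be exploited simultaneously, and the $K$-factor slack built into $\delta_{j,v}=1/(KN_\ell)$ provides exactly what is needed. Once feasibility is established, LP duality and \Cref{claim:feasibility} give $C_j=O(\opt)$ with $\Theta(K)$ speed-up; invoking \Cref{prop:incap} removes the speed-up at the cost of another $K$-factor, yielding the claimed $O(K^2)$ competitiveness.
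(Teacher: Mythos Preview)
Your dual assignment has a genuine gap in the feasibility check for~\eqref{eq:dualnew}, precisely in the sub-case you flag as ``delicate''. Consider a task $v\in V_\ell$ and a time $t'$ at which $n_{t'}:=|T^{t'}(j)|$ is only slightly above $N_{\ell-1}$, so that $\ell^*_{t'}=\ell$; take a machine $i$ of class $\ell_i=\ell+1$. Then $S_{n_{t'}}$ is at most $m_1\sigma_1+\cdots+m_{\ell-1}\sigma_{\ell-1}+(n_{t'}-N_{\ell-1})\sigma_\ell$, which is of order $m_{\ell-1}\sigma_{\ell-1}$ (or just $\sigma_1$ when $\ell=1$). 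Your target inequality reduces to $\frac{1}{\gamma m_{\ell+1}}+\frac{1}{K N_\ell}\ge \frac{\sigma_{\ell+1}}{\gamma S_{n_{t'}}}$. The $\beta$ term is hopeless because $m_{\ell+1}\ge 128\,m_\ell$, and the $\delta$ term needs $\gamma S_{n_{t'}}\ge K N_\ell\,\sigma_{\ell+1}$, i.e.\ with $\gamma=2K$ roughly $m_{\ell-1}\sigma_{\ell-1}\gtrsim m_\ell\,\sigma_{\ell+1}$. Falling speeds gives $\sigma_{\ell-1}\ge 64^2\sigma_{\ell+1}$, so you would need $m_\ell\lesssim 64^2\,m_{\ell-1}$. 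But \Cref{assump} gives only a \emph{lower} bound $m_\ell\ge 128\,m_{\ell-1}$; nothing prevents $m_\ell$ from being, say, $10^6\,m_{\ell-1}$, and then the constraint fails. Concretely, with $K=2$, $m_1=1000$, $\sigma_1=64$, $m_2=128000$, $\sigma_2=1$ (which satisfies \Cref{assump}), a single alive task $v\in V_1$ at $n_{t'}=1$ and a class-$2$ machine give $\alpha_{j,v,t'}=1$ while the right side of~\eqref{eq:dualnew} is $L^{t'}_v(\beta+\delta)/s_i=256\bigl(\tfrac{1}{512000}+\tfrac{1}{2000}\bigr)<0.13$. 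The ``$K$-factor slack'' you invoke is already consumed in making the dual objective positive and cannot rescue this, and no $\poly(K)$ speed-up helps since the failure scales with $m_\ell/m_{\ell-1}$.

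The paper avoids exactly this trap by refining each interval $(M_\ell,M_{\ell+1}]$ at the threshold $\tM_\ell=M_\ell+\tm_\ell$, where $\tm_\ell=(\sum_{r\le\ell}m_r\sigma_r)/\sigma_{\ell+1}$, and by using a \emph{two-case} definition of $\alpha_{j,v,t}$: when $n_t\in[\tM_\ell,M_{\ell+1})$ the weight is concentrated on the $f_\ell=\tm_\ell$ tasks that survive into $F_\ell$, rather than spread over all $n_t$ alive tasks. Correspondingly $\delta_{j,v}$ is set to $1/(2K f_\ell)$ or $1/(2K\tf_\ell)$; the key point is that $f_\ell=\tm_\ell$ depends only on $m_1,\ldots,m_\ell$ and the speed ratio, not on $m_{\ell+1}$, so it cannot blow up. Your one-regime partition with $\delta_{j,v}=1/(K N_\ell)$ ties $\delta$ to $m_\ell$ even for tasks whose relevant lifetime is governed by the capacity of classes $\le\ell-1$, and that is what breaks.
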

\mysubsection{The Intuition Behind the Improvement}

The analysis in \S\ref{sec:weaker} incurred $\Omega(\log
n)$-competitive ratio because we divided the execution of the tasks of
each job into $O(\log n)$ {\em epochs}, where each epoch ended when
the number of tasks
halved. %
In each such epoch, we set the $\delta_{j,v}$ variables by
distributing the job's weight evenly among all tasks alive at the
beginning of the epoch.
A different way to define epochs would be to let them correspond to
the time periods when the number of alive tasks falls in the range
$(m_\ell, m_{\ell+1})$. This would give us only $K$ epochs. There is a
problem with this definition: 
as the number of tasks vary in the range $(m_\ell, m_{\ell+1})$, the
rate assigned to tasks varies from $\speed_\ell$ to
$\speed_{\ell+1}$. Indeed, there is a transition point $\tm_\ell$ in
$(m_\ell, m_{\ell+1})$ such that the rate assigned to the tasks stays
close to $\speed_{\ell+1}$ as long as the number of tasks lie in the
range $(\tm_\ell, \speed_{\ell+1})$; but if the number of tasks lie in
the range $(m_\ell, \tm_{\ell})$, the assigned rate may not stay close
to any fixed value. However, in this range, the {\em total} processing
rate assigned to all the tasks stays close to $m_\ell \speed_\ell$.

It turns out that our argument for an epoch (with minor modifications)
works as long as one of these two facts hold during an epoch: (i) the
total rate assigned to the tasks stays close to $m_\ell \speed_\ell$
for some speed class $\ell$ (even though the number of tasks is much
larger than $m_\ell$), or (ii) the actual rate assigned to the tasks
stays close to $\speed_\ell$. Thus we can divide the execution of the
job into $2K$ epochs, and get an $O(K)$-competitive algorithm. In this
section, we prove this for a single job; we extend to the case of
multiple jobs in \S\ref{sec:stronger} (with a slightly worse competitiveness).

\mysubsection{Defining the New Epochs}

Before defining the dual variables, we begin with a
definition.  For each speed class $\ell$, define the \emph{threshold}
$\tm_\ell$ to be the following: 
\begin{gather}
\textstyle   \tm_\ell := \frac{1}{\speed_{\ell+1}} \left( \speed_1 m_1 + \dots + \speed_\ell
  m_\ell \right). \label{eq:threshold}
\end{gather}
The parameter $\tm_\ell$ is such that the processing capacity of
$\tm_\ell$ machines of class $\ell+1$ equals the combined processing
capacity of machines of class at most $\ell$.
The increasing capacity assumption  implies  $m_\ell <  \tm_\ell < m_{\ell+1}$, as formalized   below:

\begin{claim}
  \label{cl:mlt}
  Define $M_\ell := m_1 + \ldots + m_\ell$ and
  $\tM_\ell := M_\ell + \tm_\ell$.  Under the increasing capacity \Cref{assump}
  and $\kappa=2$, for any speed class $\ell$, we have
  \begin{OneLiners}
  \item[(a)] $2 \tm_\ell \leq m_{\ell+1}$ and so, $\tM_{\ell} \leq M_{\ell+1}$, \qquad 
  (b) $\tM_\ell \geq 2M_\ell$, 
  \item[(c)] $ m_\ell \sigma_\ell \geq \frac{1}{2} \tm_{\ell} \sigma_{\ell+1}$, and \qquad
 (d) $\tm_{\ell} \geq 2m_\ell$.
  \end{OneLiners}
\end{claim}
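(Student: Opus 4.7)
The plan is to unpack the definition $\tm_\ell = \frac{1}{\sigma_{\ell+1}}(\sigma_1 m_1 + \dots + \sigma_\ell m_\ell)$ and apply the two ingredients of \Cref{assump}: the ``increasing capacity'' inequality $m_\ell \sigma_\ell \geq 2 \sum_{i<\ell} m_i \sigma_i$, and the ``falling speeds'' inequality $\sigma_\ell / \sigma_{\ell+1} \geq \const = 64$. Parts (a), (c), (d) will be essentially one-line consequences; part (b) is the only one that needs any real work.

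For (a), I would just rewrite $2\tm_\ell \leq m_{\ell+1}$ as $2 \sum_{i \leq \ell} \sigma_i m_i \leq \sigma_{\ell+1} m_{\ell+1}$, which is exactly the increasing capacity assumption applied at level $\ell+1$. The bound $\tM_\ell \leq M_{\ell+1}$ then follows by adding $M_\ell$ to both sides of $\tm_\ell \leq m_{\ell+1}/2 \leq m_{\ell+1}$. For (c), using increasing capacity at level $\ell$, $\sum_{i < \ell} \sigma_i m_i \leq \frac{1}{2} \sigma_\ell m_\ell$, so $\tm_\ell \sigma_{\ell+1} = \sum_{i\leq \ell} \sigma_i m_i \leq \frac{3}{2} \sigma_\ell m_\ell$, which yields $m_\ell \sigma_\ell \geq \frac{2}{3}\tm_\ell\sigma_{\ell+1} \geq \frac{1}{2}\tm_\ell\sigma_{\ell+1}$. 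For (d), keeping only the $i=\ell$ term in the definition of $\tm_\ell$ gives $\tm_\ell \geq \sigma_\ell m_\ell / \sigma_{\ell+1} \geq 64 m_\ell \geq 2 m_\ell$ by falling speeds.

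Part (b), which asks for $\tM_\ell \geq 2 M_\ell$, equivalently $\tm_\ell \geq M_\ell$, is the main obstacle because the definition of $\tm_\ell$ only naturally lower-bounds it in terms of $m_\ell$, not in terms of $M_\ell$. The plan is to combine the two assumptions to show $M_\ell$ is not much larger than $m_\ell$. From increasing capacity at each level $i \leq \ell$ I get $m_i \sigma_i \leq \tfrac{1}{2} m_\ell \sigma_\ell$ for $i < \ell$, and combining with $\sigma_i \geq 64^{\ell - i}\sigma_\ell$ yields $m_i \leq m_\ell / (2 \cdot 64^{\ell - i})$. Summing the geometric series gives $M_\ell \leq m_\ell(1 + \tfrac{1}{2} \cdot \tfrac{1}{63}) < 2 m_\ell$. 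Combined with $\tm_\ell \geq 64 m_\ell$ from (d), this gives $\tm_\ell \geq 64 m_\ell \geq M_\ell$, hence $\tM_\ell = M_\ell + \tm_\ell \geq 2 M_\ell$.

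Altogether the proof will be a short block with one line for (a), a short geometric-series argument for (b), and two lines each for (c) and (d). The only subtlety is making sure one uses the increasing capacity at the correct index (level $\ell+1$ for (a), level $\ell$ for (c)), and explicitly invoking the constant $\const = 64$ from the falling-speeds assumption in (b) and (d).
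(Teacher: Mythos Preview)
Your proofs of (a), (c), and (d) are correct and match the paper's approach. Your proof of (b) is also correct, but it is far more elaborate than necessary, and your framing of (b) as ``the main obstacle'' is mistaken.

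The paper proves (b) in one line: since $\sigma_i > \sigma_{\ell+1}$ for every $i \leq \ell$, the definition of $\tm_\ell$ gives
\[
\tm_\ell \;=\; \sum_{i \leq \ell} \frac{\sigma_i}{\sigma_{\ell+1}}\, m_i \;\geq\; \sum_{i \leq \ell} m_i \;=\; M_\ell,
\]
which is exactly $\tM_\ell \geq 2M_\ell$. This uses only the fact that the speed classes are ordered (not even the full constant $64$, nor the increasing capacity assumption). Your claim that ``the definition of $\tm_\ell$ only naturally lower-bounds it in terms of $m_\ell$, not in terms of $M_\ell$'' overlooks precisely this: every term in the sum defining $\tm_\ell$ is already at least $m_i$, so the sum is at least $M_\ell$ directly. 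Your detour through the geometric-series bound $M_\ell < 2m_\ell$ and then $\tm_\ell \geq 64 m_\ell$ works, but it invokes both halves of \Cref{assump} where neither is really needed.
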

\begin{proof}
  Fact~(a) follows from the increasing capacity assumption  and the
  definition of the threshold, since
 $2 \tm_\ell \sigma_{\ell+1} \leq \sigma_{\ell+1} m_{\ell+1}$. This implies
 $\tM_{\ell} = M_\ell+\tm_\ell \leq M_\ell+m_{\ell+1}  \leq M_{\ell+1}$.
 Proving~(b) is equivalent to showing
  $\tm_\ell \geq M_\ell$, which follows from the definition of 
   $\tm_\ell$ and the fact that $\speed_{\ell+1} < \speed_i$
  for all $i \leq \ell$. The last two statements also follow 
  from the increasing capacity assumption.
\end{proof}

\begin{figure}
    \centering
    \includegraphics[width=4.2in]{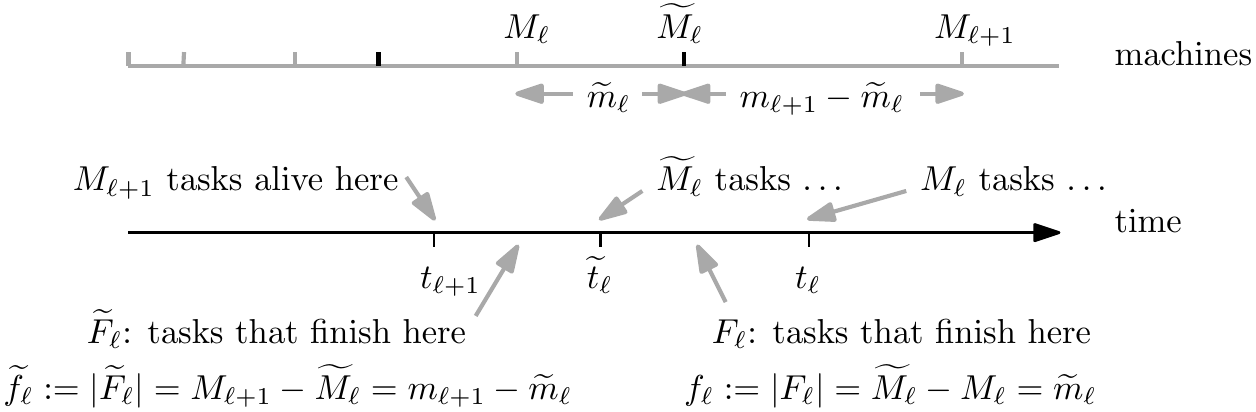}
    \caption{\small\emph{Defining breakpoints. }}
    \label{fig:timeline}
\end{figure}

We identify a set of $2K$ \emph{break-points} as follows: for each speed
class $\ell$, let $t_\ell$ denote the first time when $M_\ell$ alive
tasks remain. Similarly, let $\tilt_\ell$ be the first time when
exactly $\tM_\ell$ alive tasks remain. Note that
$t_{\ell+1} < \tilt_\ell < t_{\ell}$. Let $\tF_\ell$ be the tasks which
finish during $[t_{\ell+1},\tilt_\ell]$, and $F_\ell$ be those which
finish during $[\tilt_\ell,t_\ell]$. Let  $\tf_\ell$ and $f_\ell$ 
denote the cardinality of $\tF_\ell$ and $F_\ell$ respectively.  Note
that
$\tf_\ell= M_{\ell+1} -
\tM_\ell = m_{\ell+1} - \tm_{\ell}, f_\ell = \tM_\ell-M_\ell = \tm_\ell$.

\begin{claim}
\label{cl:fl}
For any speed class $\ell$, we have $f_\ell \leq \tf_\ell \leq f_{\ell+1}.$
\end{claim}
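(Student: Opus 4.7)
The plan is to reduce both inequalities to direct unpackings of the definitions, leveraging the previously-established \Cref{cl:mlt} and the falling speeds part of \Cref{assump}. First I would substitute the explicit values noted just before the claim: $f_\ell = \tm_\ell$, $\tf_\ell = m_{\ell+1} - \tm_\ell$, and $f_{\ell+1} = \tm_{\ell+1}$, so the two target inequalities become $\tm_\ell \leq m_{\ell+1} - \tm_\ell$ and $m_{\ell+1} - \tm_\ell \leq \tm_{\ell+1}$ respectively.

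For the first inequality, rearranging gives $2\tm_\ell \leq m_{\ell+1}$, which is exactly part~(a) of \Cref{cl:mlt} and so requires no further argument.

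For the second inequality, I would use the definition~\eqref{eq:threshold} of $\tm_{\ell+1}$ recursively in terms of $\tm_\ell$. Concretely, since $\tm_\ell \speed_{\ell+1} = \sum_{i \leq \ell} \speed_i m_i$, we get
\[
\tm_{\ell+1}\,\speed_{\ell+2} \;=\; \sum_{i \leq \ell+1}\speed_i m_i \;=\; \tm_\ell\,\speed_{\ell+1} + \speed_{\ell+1} m_{\ell+1} \;=\; \speed_{\ell+1}\bigl(\tm_\ell + m_{\ell+1}\bigr),
\]
so $\tm_{\ell+1} = (\speed_{\ell+1}/\speed_{\ell+2})\,(\tm_\ell + m_{\ell+1})$. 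The falling speeds condition gives $\speed_{\ell+1}/\speed_{\ell+2} \geq \const \geq 1$, and since $\tm_\ell \geq 0$ trivially, we conclude $\tm_{\ell+1} \geq \tm_\ell + m_{\ell+1} \geq m_{\ell+1} - \tm_\ell$, as desired.

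There is no real obstacle here; the proof is a two-line computation once the quantities are expanded. The only thing to be slightly careful about is writing $\tm_{\ell+1}$ as the recursive expression $(\speed_{\ell+1}/\speed_{\ell+2})(\tm_\ell + m_{\ell+1})$ rather than directly summing $\sum_{i\leq \ell+1}\speed_i m_i$, since the recursive form immediately exposes both $\tm_\ell$ and $m_{\ell+1}$ on the right-hand side.
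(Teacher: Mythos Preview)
Your proof is correct and essentially matches the paper's. The first inequality is handled identically via \Cref{cl:mlt}(a). For the second inequality, the paper instead cites \Cref{cl:mlt}(d) to get $m_{\ell+1} \leq \tm_{\ell+1}$ directly (and hence $m_{\ell+1} \leq \tm_\ell + \tm_{\ell+1}$), whereas you unfold the recursive definition of $\tm_{\ell+1}$ and use falling speeds; both routes amount to the same underlying fact and are equally short.
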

\begin{proof}
  The first statement requires that
  $\tm_\ell \leq m_{\ell+1} - \tm_{\ell}$. This is the same as
  $2 \tm_\ell \leq m_{\ell+1}$, which follows from \Cref{cl:mlt}\,(a).
  The second statement requires that
  $m_{\ell+1} - \tm_{\ell} \leq \tm_{\ell+1},$ i.e.,
  $m_{\ell+1} \leq \tm_{\ell} + \tm_{\ell+1}. $ But
  $m_{\ell+1} \leq \tm_{\ell+1}$ (by \Cref{cl:mlt}\,(d)), hence the proof.
\end{proof}

Next we set the duals. Although it is possible to directly argue that the delay incurred by the job in each epoch is at most (a constant times) the optimal objective value,  the dual fitting proof generalizes to the arbitrary set of jobs. 
\mysubsection{Setting the Duals}

Define the speed-up $\gamma \geq
2K$. We set the duals as:
\begin{itemize}
\item Define 
$\delta_{j,v} := \left\{
    \begin{array}{ll} \frac{1}{2K \cdot f_\ell}
      & \mbox{if $v \in F_\ell$} \\ \frac{1}{2K
      \cdot \tf_\ell} & \mbox{if $v \in
                        \tF_\ell$} \end{array}
                  \right. \ .
  $
\item For machine $i$  of class
$\ell$, define
 $  \beta_{i,t} := \frac{1}{2K \cdot m_\ell}\cdot \mathbf{1}_{(\text{not
      all tasks finished})} \ . $
\item Finally, as in \S\ref{sec:weaker}, we define $\alpha_{j,v,t}$ for each
task $v$ of job $j$, and then set
$\alpha_{j,v} := \sum_t \alpha_{j,v,t}$.  
 To define $\alpha_{j,v,t}$, we consider two cases (we use $n_t$
to denote the number of alive tasks at time $t$):
\begin{enumerate}
\item 
 $n_t \in [M_\ell, \tM_\ell)$ for some $\ell$:
 Then~~
 $\alpha_{j,v,t} := (1/n_t)\cdot \ones{ v \text{ alive
        at time $t$}}.$
        
\item %
$n_t \in [\tM_\ell, M_{\ell+1})$ for some $\ell$:
Then~~
$ \alpha_{j,v,t} := (1/f_\ell)\cdot \ones{ v \in F_{\ell} }. $
\end{enumerate}
Note the asymmetry in the definition. It  arises
because in the first case, the {\em total} speed of  machines
processing a task is (up to a constant) $m_\ell \sigma_\ell$, whereas
in the second case the {\em average} speed of such machines is about
$\sigma_{\ell+1}$.

\end{itemize}

 \begin{lemma} [Dual feasibility] \label{lem:dualFeasSingle}
 The dual variables defined above always satisfy the
 constraints~(\ref{eq:d2}) and~\eqref{eq:dualsum},  and satisfy 
 constraint~(\ref{eq:dualnew}) for speed-up $\gamma \geq 2K$.
 \end{lemma}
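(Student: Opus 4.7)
The plan is to verify the three constraints in sequence, leaving the feasibility condition~\eqref{eq:dualnew} for last. For~\eqref{eq:d2}, note that $T(j)$ partitions into the groups $\{F_\ell\}$ and $\{\tF_\ell\}$, with at most $K$ of each type. Each $F_\ell$ contributes $f_\ell \cdot 1/(2Kf_\ell) = 1/(2K)$ and each $\tF_\ell$ similarly contributes $1/(2K)$, so the total is at most $1 = w_j$. For~\eqref{eq:dualsum}: in Case~1, all $n_{t'}$ alive tasks receive $1/n_{t'}$; in Case~2 (i.e.\ $t' \in [t_{\ell+1},\tilt_\ell)$), the $f_\ell = \tm_\ell$ tasks of $F_\ell$ are all alive since by construction they finish in $[\tilt_\ell,t_\ell]$, and each receives $1/f_\ell$. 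Both sums equal $1$.

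The bulk of the argument is~\eqref{eq:dualnew}. With a single job, \Cref{cl:rate} and feasibility~$(\star)$ force $L^{t'}_v = \gamma S_{n_{t'}}/n_{t'}$ at every alive task, and expanding via~\eqref{eq:threshold} yields $S_{n_{t'}} = \sigma_{\ell+1}(\tm_\ell + n_{t'} - M_\ell)$. I would then split into four sub-cases according to (Case~1 vs.\ Case~2) $\times$ ($\ell_i \leq \ell$ vs.\ $\ell_i > \ell$). For $\ell_i \leq \ell$ use the $\beta$-term: in Case~1 the trivial bound $S_{n_{t'}} \geq m_{\ell_i}\sigma_{\ell_i}$ gives $\beta_{i,t}L^{t'}_v/s_i \geq \gamma/(2Kn_{t'}) \geq 1/n_{t'}$; in Case~2 the sharper $L^{t'}_v \geq \gamma\sigma_{\ell+1}$ (from $\tm_\ell \geq M_\ell$, \Cref{cl:mlt}(b)) combined with $\sigma_{\ell+1}\tm_\ell \geq m_{\ell_i}\sigma_{\ell_i}$ gives $\beta_{i,t}L^{t'}_v/s_i \geq \gamma/(2K\tm_\ell) \geq 1/\tm_\ell$. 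For $\ell_i > \ell$ use the $\delta$-term: since $s_i = \sigma_{\ell_i} \leq \sigma_{\ell+1}$, one obtains $L^{t'}_v/s_i \geq \gamma\tm_\ell/n_{t'}$ in Case~1 and $L^{t'}_v/s_i \geq \gamma$ in Case~2; combined with the uniform bound $\delta_{j,v} \geq 1/(2K\tm_\ell)$, these reproduce the same two bounds. In every sub-case, $\gamma \geq 2K$ is exactly what makes the RHS dominate $\alpha_{j,v,t'}$.

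The main obstacle in the slow-machine case (Case~1, $\ell_i > \ell$) is justifying the uniform lower bound $\delta_{j,v} \geq 1/(2K\tm_\ell)$ for every task $v$ alive at $t'$: besides $v \in F_\ell$ (equality), one checks $v \in F_{\ell'}, \tF_{\ell'}$ with $\ell' < \ell$ via the monotonicity $\tm_{\ell'} < \tm_\ell$ and the estimate $\tf_{\ell'} = m_{\ell'+1} - \tm_{\ell'} \leq m_\ell \leq \tm_\ell$, the latter using \Cref{cl:mlt}(d) and the monotonicity of $m_\ell$ forced by increasing capacity. Apart from this bookkeeping, the case split is dictated by the asymmetry in the definition of $\alpha_{j,v,t'}$: spreading mass over all alive tasks in Case~1 tracks the stable total throughput $\approx m_\ell\sigma_\ell$, while concentrating on $F_\ell$ in Case~2 tracks the stable per-task rate $\approx \sigma_{\ell+1}$---the structural insight that collapses the $\Omega(\log n)$ factor of \S\ref{sec:weaker} to $O(K)$.
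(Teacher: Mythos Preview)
Your proof is correct and follows essentially the same four-way case split as the paper: (Case~1 vs.\ Case~2) $\times$ (machine class $\ell_i$ at most / greater than the block's class), using the $\beta$-term in the former sub-cases and the $\delta$-term in the latter, with $\gamma \geq 2K$ closing each inequality. The only cosmetic difference is that the paper obtains the uniform lower bound $\delta_{j,v} \geq 1/(2K f_\ell)$ in one line from \Cref{cl:fl} (the chain $f_{\ell'} \leq \tf_{\ell'} \leq f_{\ell'+1}$ already gives both the monotonicity of $f_\ell = \tm_\ell$ and the bound $\tf_{\ell'} \leq f_{\ell'+1} \leq f_\ell$), whereas you re-derive the same bound via monotonicity of $m_\ell$ and \Cref{cl:mlt}(d); both routes are valid.
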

\begin{proof}
It is easy to check from the definition of $\delta_{j,v}$ and $\alpha_{j,v,t}$ that the dual constraints~\eqref{eq:d2} and~\eqref{eq:dualsum} are satisfied. 
It remains to verify constraint~\eqref{eq:dualnew} (re-written below) for any task $v$, machine $i$, times
$t$ and $t' \geq t$. 
\begin{gather*}
  \alpha_{j,v,t'} \leq \frac{L^{t'}_v}{s_i} \cdot (\beta_{i,t} +
  \delta_{j,v}) \ .
     \tag{\eqref{eq:dualnew} repeated}
\end{gather*}
As in the definition of $\alpha_{j,v,t'}$, there are two cases
depending on where $n_{t'}$ lies. First assume that there is class
$\ells$ such that $M_\ells \leq n_{t'} < \tM_\ells$. Assume that $v$
is alive at time $t'$ (otherwise $\alpha_{j,v,t'}$ is 0), so
$\alpha_{j,v,t'} = \frac{1}{n_{t'}}$, where $n_{t'}$ is the number of
alive tasks at time $t'$. Being alive at this time $t'$, we know that $v$
will eventually belong to some $F_{\ell}$ with $\ell \leq \ells$, or in
some $\tF_{\ell}$ with $\ell < \ells$. So by \Cref{cl:fl},
$\delta_{j,v} \geq \frac{1}{2K \cdot f_{\ells}}$. Moreover, let $i$ be
a machine of some class $\ell$, so $s_i = \sigma_\ell$. Hence, it is
enough to verify the following in order to satisfy~\eqref{eq:dualnew}:
\begin{eqnarray}
\label{eq:pf2}
\frac{1}{n_{t'}} ~~\leq~~ \frac{L^{t'}_v }{\speed_\ell} \cdot
  \bigg(\frac{1}{2K\cdot m_\ell} +  \frac{1}{2K \cdot f_{\ells}}\bigg) \  .
\end{eqnarray}
Two subcases arise, depending on how $\ell$ and $\ells$ relate---in
each we show that just one of the terms on the right is larger than
the left.
\begin{itemize}
    \item $\ells \geq \ell$: Since at least $M_{\ells} $ tasks are
      alive at this time, the total speed assigned to all the alive
      tasks at time $t'$ is at least $\gamma \cdot \sigma_\ells
      m_\ells$. Therefore, $L^{t'}_v \geq \frac{\gamma \cdot m_\ells
        \sigma_\ells}{n_{t'}}$. Now using $\gamma \geq 2K$, we get
    $$ \frac{ L^{t'}_v }{2K \cdot m_\ell \sigma_\ell} ~~\geq~~ 
    \frac{m_\ells \sigma_\ells}{m_\ell \sigma_\ell}\cdot \frac{1}{n_{t'}} ~~\geq~~ \frac{1}{n_{t'}} \ , $$
    where the last inequality follows from the increasing capacity
    assumption.
    \item $\ells \leq \ell-1$: The quantity $L^{t'}_v n_{t'}$ is the
      total speed of the machines which are busy at time $t'$, which
      is at least $\gamma(m_1 \sigma_1 + \ldots + m_\ells \sigma_\ells) = \gamma \cdot 
      \tm_\ells \sigma_{\ells+1}.$ Again, using $\gamma \geq 2K$, we get
    $$\frac{ L^{t'}_v \cdot n_{t'} }{2K \cdot f_\ells
      \sigma_\ell}  ~~\geq~~  \frac{\tm_\ells \sigma_{\ells+1}}{f_\ells \sigma_\ell} ~~\geq~~ 1 $$
    because $\sigma_{\ells+1} \geq \sigma_\ell$ and $\tm_\ells = f_\ells.$
\end{itemize}
Thus, \eqref{eq:pf2} is satisfied in both the above subcases.

Next we consider the case when there is a speed class $\ells$ such that
$\tM_\ells < n_{t'} \leq M_{\ells+1}. $ We can assume that
$v \in F_\ells$, otherwise $\alpha_{j,v,t'}$ is 0; this means
$\delta_{v,j} = \frac{1}{2K \cdot f_\ells}$. Since
$\alpha_{j,v,t'} = \frac{1}{f_\ell} = \frac{1}{\tm_\ells}$, and
$L^{t'}_v \geq \gamma \cdot  \sigma_{\ells+1}$, the expression~(\ref{eq:dualnew})
follows from showing
\begin{eqnarray}
\label{eq:pf3}
\frac{1}{\tm_\ells} ~~\leq~~ \frac{\gamma}{\sigma_{\ell}} \cdot \bigg(\frac1{2K \cdot
  m_\ell} + \frac{1}{2K \cdot f_\ells} \bigg) \cdot \sigma_{\ells+1} \ .
\end{eqnarray}
Since $\gamma \geq 2K$, we can drop those terms. Again, two cases arise: 
\begin{itemize}
    \item $\ells \geq \ell$: By definition, $\sigma_{\ells+1} \cdot \tm_{\ells} 
    \geq \sigma_\ells m_\ells \geq \sigma_\ell m_\ell$ (by the increasing capacity assumption).
    \item $\ells \leq \ell-1$: Since $f_\ells = \tm_\ells$ and
      $\sigma_\ell \leq \sigma_{\ells+1},$ this case also follows
      easily. \qedhere  
\end{itemize}
\end{proof}

\begin{proof}[Proof of \Cref{thm:singleJob}]
Having checked dual feasibility in \Cref{lem:dualFeasSingle}, 
consider now the objective function. For
any time $t$ when at least one task is alive,
$\sum_v \alpha_{j,v,t} = 1$. Therefore, $\sum_{v} \alpha_{j,v}$ is the
makespan. Also, $\sum_{i} \beta_{i,t} = 1/2$ as long as there are
unfinished tasks, so $\sum_{i,t} \beta_{i,t}$ is half the makespan,
and the objective function
$\sum_v \alpha_{j,v} - \sum_{i,t} \beta_{it}$ also equals half the
makespan. Since we had assumed $\gamma = O(K)$-speedup, the algorithm
is $O(K)$-competitive.
 \end{proof}

\section{Analysis III: Proof  for $\tilde{O}(K^3)$ Guarantee}  

\label{sec:stronger} 

\newcommand{\last}{\tau}

We now extend the ideas from the  single job case to the general
case. \ifFULL \else We only discuss the proof outline here, and refer the readers to the full version for details. \fi
For time $t$, let
$A^t$ be the set of alive jobs at time $t$. Unlike the single job case where we had only one block, we can now have multiple blocks. While defining $\alpha_{j,v,t}$ in the single job case, we had considered two cases:  (i)~the rate assigned to each task stayed close to $\speed_\ell$ for some class $\ell$ (this corresponded to $n_t \in [\tM_{\ell-1}, M_{\ell})$), and (ii)~the total rate assigned to each task was close to $m_\ell \speed_\ell$ for speed class $\ell$ (this corresponded to $n_t \in [M_\ell, \tM_\ell)$). We extend these notions to blocks as follows: 

\medskip
     \noindent {\em Simple blocks:} A block $B$ is said to be {\em simple} w.r.t. to a speed class $\ell$ if the average rate assigned to the tasks in $B$ is close to $\speed_\ell$. Similarly a job $j$ is said to be simple w.r.t. a speed class $\ell$ if all the alive tasks in it are assigned rates close to $\speed_\ell$ (recall that all alive tasks in a job are processed at  the same rate). All the jobs in a simple block  $B$ may not be simple (w.r.t. the same speed class $\ell$), but we show that a large fraction of jobs  (in terms of weight) in $B$ will be simple\ifFULL (\Cref{lem:simpleb})\else\fi. Thus, it is enough to account for the weight of simple jobs in $B$. This is analogous to case~(i) mentioned above (when there is only one job and tasks in it receive rate close to $\speed_\ell$). In~\S\ref{sec:one-star}, we had defined $\alpha_{j,v,t}$ for such time $t$ as follows: we consider only those tasks which survive in $F_\ell$, and then evenly distribute $w_j$ among these tasks. The analogous definition here would be as follows: let $\tau_{\ell,j}$  be the {\em last} time when $j$ is simple w.r.t. the speed class $\ell$. We define $\alpha_{j,v,t}$ by evenly distributing $w_j$ among those tasks in $v$ which are alive at $\tau_{\ell,j}$. We give details \ifFULL in~\S\ref{sec:simpledual}. \else in the full version.\fi
    
     \medskip
     \noindent{\em Long blocks:}  The total speed of the machines in this block stays close to $m_\ell \speed_\ell$ for some speed class $\ell$. Again, inspired by the definitions in~\S\ref{sec:one-star}, we  assign $\alpha_{j,v,t}$ for tasks $v \in B$ by distributing $w(B)$ to these tasks (in proportion to the rate assigned to them). From the perspective of a job $j$ which belongs to a long block $B$ w.r.t. a speed class $\speed_\ell$ at a time $t$, the feasibility of~\eqref{eq:d1} works out provided for {\em all} subsequent times $t'$ when $j$ again belongs to such a block $B'$, we have $w(B')$ and $w(B)$ remain close to each other. If $w(B')$ exceeds (say) $2w(B)$, we need to reassign a new set of $\delta_{j,v}$ values for $v$. To get around this problem  we require that long blocks (at a time $t$) also have weight at least $w(A^t)/(10K)$. With this requirement, the doubling step mentioned above can only happen $O(\log K)$ times (and so we incur an additional $O(\log K)$ in the competitive ratio). The details are given \ifFULL in~\S\ref{sec:nonSimpleDuals}. \else in the full version.\fi 
    Blocks which were cheaper than $w(A^t)/(10K)$ do not create any issue because there can be at most $K$ of them, and so their total weight is small in comparison to $w(A^t)$. %
    
    \medskip
     \noindent {\em Short} blocks: Such blocks $B$ straddle two speed classes, say $\ell$ and $\ell+1$, but do not contain too many machines of either class (otherwise they will fall into one of the two categories above). We show \ifFULL in~\S\ref{sec:shortBlocks} \else in the full version \fi that the total weight of such blocks is small compared to $w(A^t)$. The intuitive reason is as follows: for any  two consecutive short blocks $B_1$ and $B_2$, there must be blocks in between them whose span is much longer than $B_2$. Since these blocks freeze before $B_2$, their total weight would be large compared to $w(B_2)$.

In the overall analysis, we charge  short blocks to  simple and long blocks, and  use dual fitting as indicated above to handle simple and long blocks.

\ifFULL
    
\subsection{Defining Dual Variables}
We now give details of the dual variables. 
The $\beta$ dual variables are the easiest to define.  For a time $t$ and  machine $i$ of
speed class $\ell$, define
\[ \beta_{i,t} := \frac{w(A^t)}{ K^2 \cdot \log K \cdot m_\ell} \ .\]
As in~\S\ref{sec:one-star}, we split the $\alpha$ variable across times, but also into 
a ``simple'' part $\alpha_{j,v,t}'$, and a ``non-simple'' part
$\alpha_{j,v,t}''$. The final
$\alpha_{j,v} := \sum_t (\alpha_{j,v,t}' +
\alpha_{j,v,t}'')$. Similarly, we split the $\delta$ variables across
speed classes $\ell$, and into simple/non-simple parts
$\delta'_{j,v, \ell}, \delta''_{j,v,\ell}$, so that 
$\delta_{j,v} := 
\sum_\ell (\delta'_{j,v,\ell} + \delta''_{j,v,\ell}).$ %
Since we are defining two sets of random variables, we will need to show that $\alpha', \delta'$ satisfy~\eqref{eq:dualsum},~\eqref{eq:d2} and~\eqref{eq:dualnew} with  a slack of factor 2 (and similarly for $\alpha'', \delta''$). More formally, we need to check the following conditions:
\begin{align}
    \sum_{v \in T(j)} \delta'_{j,v} & \leq w_j/2 \quad \quad \forall j \label{eq:d2s}  \\
    \label{eq:dualsums}
    \sum_{v \in T(j)} \alpha'_{j,v,t} & \leq w_j/2 \quad \quad \forall j, t \\
     \label{eq:dualnews}
    \alpha'_{j,v,t'} & \leq \frac{\beta_{i,t} \cdot L^{t'}_v}{2s_i}  + \frac{\delta'_{j,v} \cdot L^{t'}_v}{s_i}
    \quad \quad \forall j, v \in T(j), i, t' \leq  t
\end{align}

\subsection{The Simple Dual Variables}
\label{sec:simpledual}
We first define the ``simple'' parts $\delta'$ and $\alpha'$.  We
should think of a job as  being simple with respect to speed
class $\ell$ at time $t$ if its tasks receive speed ``approximately''
$\speed_\ell$. The formal definition is as follows.

\begin{definition} [Simple job] \label{def:simpleJob}
  Call a job $j \in A^t$ to be {\em simple} with respect to speed
  class $\ell$ at time $t$ if the tasks in $T^t(j)$ receive speed\footnote{Recall that by \Cref{cl:rate} all tasks in $T^t(j)$ get the same speed.} in
  the range $[\frac{\gamma \speed_\ell}{64}, 64 \gamma \speed_\ell]$. 
\end{definition}
  
\subsubsection{Definition of Simple Duals} 
  Let
$n^t(j)$ denote the size of $T^t(j)$. For any speed class $\ell$,
let $\last_{\ell,j}$ be the last
time when $j$ is simple with respect to 
$\ell$. %
  Now for each
task $v \in T^{\last_{\ell,j}}(j)$, define
\begin{gather}
  \delta'_{j,v,\ell} := \frac{w_j}{2K \cdot n^{\last_{\ell,j}}(j)} \ .
\end{gather}
For all other tasks $v' \not\in T^{\last_{\ell,j}}(j)$, we set $\delta'_{j,v',\ell}$ values   to 0. 
Define %
$\delta_{j,v}'$ to be
 $\sum_{ \ell} \delta'_{j,v,\ell} $.

Moreover, if $j$ is simple with respect to speed class $\ell$ at time
$t$, then  %
for each task $v \in T^{\last_{\ell,j}}(j)$ define
\begin{gather}
  \alpha'_{j,v,t} := \frac{w_j}{4K \cdot n^{\last_{\ell,j}}(j)}\ .
\end{gather}
Again, the undefined $\alpha'$ duals are set to zero.

\subsubsection{Feasibility of Simple Duals} 
\label{sec:feasibSimpleDuals}

Observe that constraint~(\ref{eq:d2s}) holds because
$ \sum_{ v } \delta'_{j,v}  =  \sum_{v, \ell} \delta'_{j,v,\ell}  \leq w_j/2$.

Feasibility of~\eqref{eq:dualsums} follows similarly.
To show feasibility of constraint~(\ref{eq:dualnews}), we first prove
the following lemma.

\begin{lemma}
  \label{lem:simple}
  For each speed class $\ell$, and time $t$ when $j$ is simple (with respect to any speed class), the following holds for any task $v$ of $j$:
  $$\alpha'_{j,v,t} ~~\leq~~ \frac{1024 \cdot w(A^t) \cdot L_v^t}{K \gamma m_\ell \speed_\ell} + \frac{ \delta'_{j,v} \cdot L^t_v}{\gamma \speed_\ell} \ . $$ 
\end{lemma}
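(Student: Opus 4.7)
The lemma is the per-task, per-time feasibility check that drives the ``simple'' part of \eqref{eq:dualnews}. My plan is first to reduce to a single simple class and then to do a case split on the relation between the given $\ell$ and that simple class.

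\textbf{Reduction to one simple class.} If $j$ is not simple w.r.t.\ any class at $t$, then $\alpha'_{j,v,t}=0$ and the inequality is trivial. Otherwise, $\alpha'_{j,v,t}$ is a sum of contributions, one for each class $\ell'$ such that $j$ is simple w.r.t.\ $\ell'$ at $t$ and $v \in T^{\tau_{\ell',j}}(j)$. Since consecutive speed classes differ by a factor of $\const=64$ while the simplicity window has multiplicative width $64^2$, only $O(1)$ classes are simultaneously simple at any time, so it suffices to prove the stated inequality for a single contribution, absorbing the $O(1)$ loss into the constant $1024$. Fix such an $\ell'$; I must bound $w_j/\bigl(4K\,n^{\tau_{\ell',j}}(j)\bigr)$ by the RHS. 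The definitions give $\delta'_{j,v} \geq w_j/\bigl(2K\,n^{\tau_{\ell',j}}(j)\bigr)$, and simplicity yields $L_v^t \in \bigl[\gamma\speed_{\ell'}/64,\; 64\gamma\speed_{\ell'}\bigr]$.

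\textbf{Easy direction: $\ell \geq \ell'$.} Here $\speed_{\ell'}/\speed_\ell \geq 1$, and the $\delta$-term alone nearly suffices:
\[
\frac{\delta'_{j,v}\,L_v^t}{\gamma\,\speed_\ell} \;\geq\; \frac{w_j}{2K\,n^{\tau_{\ell',j}}(j)}\cdot\frac{\speed_{\ell'}}{64\,\speed_\ell}.
\]
For $\ell \geq \ell'+1$ the factor $\speed_{\ell'}/\speed_\ell \geq 64$ already beats $\alpha'_{j,v,t}$ by a factor $\geq 2$. At the boundary $\ell = \ell'$ the $\delta$-term covers only $\alpha'_{j,v,t}/32$, and the remaining $31/32$ is supplied by the first term, which at $L_v^t \geq \gamma\speed_\ell/64$ and $w(A^t) \geq w_j$ is at least $16 w_j/(K m_\ell)$. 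This dominates $\alpha'_{j,v,t}$ provided $n^{\tau_{\ell',j}}(j) \gtrsim m_\ell/64$, which is the structural claim discussed below.

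\textbf{Hard direction: $\ell < \ell'$.} Now $\speed_\ell$ is much larger than $\speed_{\ell'}$, the $\delta$-term is negligible, and the $w(A^t)$-term must carry the whole bound. Using $w(A^t) \geq w_j$ and $L_v^t \geq \gamma\speed_{\ell'}/64$, the required inequality reduces to the structural claim
\[
n^{\tau_{\ell',j}}(j) \;\gtrsim\; \frac{m_\ell\,\speed_\ell}{\speed_{\ell'}}.
\]
I plan to prove this by analyzing the block $B$ containing $j$'s tasks at time $\tau_{\ell',j}$: simplicity w.r.t.\ $\ell'$ forces the machines $m(B)$ to have average speed near $\speed_{\ell'}$, and the increasing-capacity assumption (\Cref{assump}) then ensures $m(B)$ includes enough machines of class $\geq \ell'$ to cover the capacity $m_\ell\speed_\ell$ of any faster class. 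Combining this with the identity $L_v^t = \tw^t(v)\cdot\gamma\,s(B)/w(B)$, \Cref{cor:rate1}, and the monotonicity $n^t(j) \geq n^{\tau_{\ell',j}}(j)$ (since $t \leq \tau_{\ell',j}$), I expect to extract the required lower bound on $n^{\tau_{\ell',j}}(j)$, with any constant loss absorbed into the $1024$.

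\textbf{Main obstacle.} The technical heart is the structural claim in the hard direction (and its weaker form at the boundary $\ell=\ell'$), relating the number of alive tasks of $j$ at $\tau_{\ell',j}$ to the slower-class capacity $m_\ell\speed_\ell$ via the geometry of the freezing block $B$. Everything else is clean case-accounting: the constant $1024 = 16 \cdot 64$ transparently packages the $64$ from the simplicity window for $L_v^t$ together with the factor $16$ needed to close the gap at $\ell = \ell'$ and to combine the two terms across the case boundaries.
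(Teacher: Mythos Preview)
Your case split and the easy direction $\ell > \ell^\star$ are fine, but the hard direction has a genuine gap: the ``structural claim''
\[
n^{\tau_{\ell',j}}(j) \;\gtrsim\; \frac{m_\ell\,\speed_\ell}{\speed_{\ell'}}
\]
is simply false. Take a job $j$ with a single task, sitting among many other jobs; then $n^{\tau_{\ell',j}}(j)=1$ regardless of how large $m_\ell\speed_\ell/\speed_{\ell'}$ is. The mistake is the step ``Using $w(A^t)\geq w_j$'': since $w(A^t)$ sits in the \emph{numerator} of the target inequality, replacing it by the smaller $w_j$ strengthens what you must prove to something that does not hold. The ratio $w(A^t)/w_j$ is precisely what compensates for $j$ having few tasks, and you have discarded it. Nothing in \Cref{cor:rate1} or the increasing-capacity assumption will recover a lower bound on $n^{\tau_{\ell',j}}(j)$ purely in terms of machine capacities.

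The paper's proof keeps the global weight throughout and never bounds $n^{\tau^\star}(j)$ in isolation. It works at the \emph{last} simple time $\tau^\star:=\tau_{\ell^\star,j}\geq t$: let $B$ be the block containing $T^{\tau^\star}(j)$ and note $m(B)$ contains a machine of class $\geq \ell^\star$. For $\ell<\ell^\star$, apply \Cref{cor:rate1}(ii) with $V''$ equal to $B$ together with all earlier-frozen blocks; since every class-$\ell$ machine is occupied by $V''$, this yields
\[
\frac{w_j}{n^{\tau^\star}(j)\,L^{\tau^\star}_v}\;=\;\frac{\tw^{\tau^\star}(v)}{L^{\tau^\star}_v}\;\leq\;\frac{w(A^{\tau^\star})}{\gamma\,m_\ell\,\speed_\ell}.
\]
Only then does one pass from $\tau^\star$ back to $t$: $w(A^t)\geq w(A^{\tau^\star})$ by monotonicity, and $L^{\tau^\star}_v\leq 64^2\,L^t_v$ because both rates lie in the simplicity window $[\gamma\speed_{\ell^\star}/64,\,64\gamma\speed_{\ell^\star}]$. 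This is the source of the constant $1024=64^2/4$, not $16\cdot 64$. The boundary $\ell=\ell^\star$ is handled by a sub-case on whether $m(B)$ contains all class-$\ell^\star$ machines (if so, rerun the $\ell<\ell^\star$ argument; if not, every machine in $m(B)$ has speed $\geq\speed_{\ell^\star}$ and the $\delta'$-term suffices as in $\ell>\ell^\star$). At no point is a lower bound on $n^{\tau^\star}(j)$ required.

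A minor side remark: your reduction step about summing over several simultaneously-simple classes is unnecessary; the paper's definition of $\alpha'_{j,v,t}$ fixes a single $\ell^\star$ and the proof treats it as such.
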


\begin{proof}
  Suppose $j$ is simple with respect to speed class $\ell^\star$ at
  time $t$. Assume that $v \in T^{\last_{\ells,j}}(j)$, otherwise
  $\alpha'_{j,v,t} = 0$ and the lemma trivially holds. 
  Define $\last^\star := \last_{\ells,j}$ to be
  the last time when job $j$ is simple with respect to $\ells$, it follows that
  $\last^\star \geq t$. Let $B$ be the block containing the alive
  tasks of job $j$ at time $\last^\star$.  The set of machines $m(B)$
  corresponding to the block $B$ must contain at least one machine of
  speed class $\ell^\star$ or higher. Indeed, suppose the slowest machine in $m(B)$  belongs to speed class $\ell'  < \ells$. Then every task in $B$ receives rate at least $\gamma \speed_{\ell'} \geq \const \gamma \speed_\ells,$ which violates the condition that $j$ is simple with respect to speed class $\ells$.

 First consider any $\ell < \ell^\star$.  As above, let $B$ be the block containing all alive tasks of $j$ at time $\last^\star$. We apply the second statement of~\Cref{cor:rate1} with $V''=V$ being the set of tasks in $A^{\last^\star}$ which freeze either at the same moment or  before the tasks in $j$ (i.e., those tasks which appear in or before $B$ in the ordering of blocks at time $\last^\star$) and $v$ being any of the tasks in $T^{\last^\star}(j). $ We get 
   $$ \frac{\tw^{\last^\star}(v)}{L^{\last^\star}_v} ~~\leq~~ 
    \frac{\tw^{\last^\star}(V)}{\sum_{v' \in V} L^{\last^\star}_{v'} } ~~\leq~~
  \frac{w(A^{\last^\star})}{\gamma m_\ell \speed_\ell} \ , $$
  where the last inequality follows from the fact that $m(B)$ contains at least one machine of class $\ells$ or higher, which means all the machines of speed class $\ell$ ($<\ell^\star$) are processing tasks from $V$ at time $\last^\star$.

  Since  $\tw^{\last^\star}(v) = \frac{w_j}{n^{\last^\star}(j)}$
  and $n^{t}(j) \geq n^{\last^\star}(j)$, we
  can use the definition of $\alpha'$ to get
  $$ \alpha'_{j,v,t} ~~\leq~~ \frac{w_j}{4K \cdot n^{\last^\star}(j)} ~~\leq~~ \frac{w(A^{\last^\star}) \cdot L^{\last^\star}_v}{4K \cdot \gamma m_\ell 
    \speed_\ell} ~~\leq~~ \frac{1024\, w(A^{t}) \cdot L^{t}_v}{K \gamma m_\ell
    \speed_\ell}\ ,$$ 
    where the last inequality follows from the fact
  that $A^t$ is a superset of $A^{\last^\star},$ and the rate assigned
  to $v$ at time $t$ and $\last^\star$ are within factor
  $64 \times 64$ of each other. 

  Next, suppose $\ell > \ells$. Since $L^t_v \geq \frac{\gamma \speed_\ells}{64} \geq \gamma \speed_\ell,$ 
  and
  $\delta'_{j,v} \geq \alpha'_{j,v,t}$, 
  we get
  $ \alpha'_{j,v,t} \leq \frac{ \delta'_{j,v} L^t_v}{\gamma \speed_\ell} .  $
  
  Finally, suppose $\ell=\ells$. If the block $B$ contains all machines of speed class $\ells$, then the same argument above as for the case when $\ell < \ells$ applies. Otherwise, the block $B$ contains machines of class $\ells$ or smaller. Therefore, the rate assigned to $v$ is at least $\gamma \sigma_\ells = \gamma \sigma_\ell$, and so the argument for the case $\ell > \ells$  above applies. 
\end{proof}

We are now ready to show feasibility of~\eqref{eq:dualnews}. 

\begin{corollary} 
\label{cor:d1}
The solution $(\alpha', \beta, \delta')$ satisfies~\eqref{eq:dualnews}. 
\end{corollary}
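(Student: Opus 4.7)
The plan is to deduce \eqref{eq:dualnews} as an essentially immediate consequence of \Cref{lem:simple}; all the technical content sits in that lemma, and the corollary only has to verify that the definitions of $\beta_{i,t}$ and $\delta'_{j,v}$ are large enough to absorb its two terms.

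First I would dispose of the trivial case: if $\alpha'_{j,v,t'} = 0$ then \eqref{eq:dualnews} holds automatically, since $\beta_{i,t}$ and $\delta'_{j,v}$ are nonnegative. Otherwise the definition of $\alpha'$ in \S\ref{sec:simpledual} forces job $j$ to be simple at time $t'$ with respect to some speed class, which is exactly the hypothesis needed to invoke \Cref{lem:simple}. I would then apply that lemma at time $t'$, taking $\ell$ to be the speed class of the machine $i$ appearing in \eqref{eq:dualnews} (so $s_i = \speed_\ell$), which yields
$$ \alpha'_{j,v,t'} \;\leq\; \frac{1024 \cdot w(A^{t'}) \cdot L^{t'}_v}{K \gamma \, m_\ell \speed_\ell} \;+\; \frac{\delta'_{j,v} \cdot L^{t'}_v}{\gamma \speed_\ell}. $$

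The $\delta'$ summand on the right is at most $\frac{\delta'_{j,v} L^{t'}_v}{s_i}$ since $\gamma \geq 1$, matching the second term of \eqref{eq:dualnews} directly. For the first summand, I would substitute $\beta_{i,t} = \frac{w(A^t)}{K^2 \log K \cdot m_\ell}$ into the target $\frac{\beta_{i,t} L^{t'}_v}{2 s_i}$ and invoke monotonicity $w(A^{t'}) \leq w(A^t)$, which holds because $t \leq t'$ and jobs only leave the alive set as time progresses. Cancelling the common factor $\frac{L^{t'}_v}{m_\ell \speed_\ell}$, the required inequality reduces to a condition of the form $\gamma \geq c \cdot K \log K$ for an absolute constant $c$, which sits comfortably inside our standing speedup budget.

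There is no genuine obstacle here: the hard work already lives in \Cref{lem:simple} and the corollary is pure bookkeeping. The one point I would verify carefully is the time ordering in \eqref{eq:dualnews}: the monotonicity step requires $t' \geq t$, matching the original derivation of \eqref{eq:dualnew}, rather than the ``$t' \leq t$'' written in the display of \eqref{eq:dualnews} (which I read as a typo, since with $t' \leq t$ we would have $A^{t'} \supseteq A^t$ and the desired inequality $w(A^{t'}) \leq w(A^t)$ would fail).
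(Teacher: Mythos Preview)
Your proposal is correct and follows essentially the same route as the paper: apply \Cref{lem:simple} at time $t'$ with $\ell$ the speed class of $i$, then use the definition of $\beta_{i,t}$ together with the monotonicity $w(A^{t'}) \leq w(A^t)$ (for $t' \geq t$) to absorb the first term, which reduces to requiring $\gamma \geq c\,K\log K$. Your observation that the quantifier ``$t' \leq t$'' in the display of~\eqref{eq:dualnews} is a typo for ``$t' \geq t$'' is also correct, matching the derivation of~\eqref{eq:dualnew}.
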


 \begin{proof}

Assuming $\gamma \geq 1024 K \log K,$ we see that $\beta_{i,t} \geq \frac{1024 \cdot w(A^t)}{K \gamma m_\ell}$, where $\ell$ denotes the speed class of machine $i$. Since $w(A^t)$, and hence $\beta_{i,t}$, cannot increase as $t$ increases,~\Cref{lem:simple} (applied with $t:=t'$) shows that~\eqref{eq:dualnews} is satisfied.  

\end{proof}

\subsection{The Non-Simple Dual Variables} \label{sec:nonSimpleDuals}

We now define the quantities $\alpha''$ and $\delta''$. 
Before that we need to define ``simple blocks'' and ``long blocks''. For any block $B$, let $s(B)$ denote the total speed of the machines $m(B)$ associated with $B$.

\begin{definition}[Simple block] \label{defn:simple}
We say that block $B$ is {\em simple with respect to speed class $\ell$} at a time $t$ if the {\em average} speed of the tasks in this block, i.e. $\frac{s(B)}{|B|}$,  lies in the range $[ \frac{\gamma \speed_\ell}{2}, 2 \gamma \speed_\ell]$. 
\end{definition}

Note that all the jobs participating in a simple block may not be simple (with respect to the corresponding speed class). Later in \Cref{lem:simpleb}, however, we will show that  a large fraction of such jobs are simple.

\begin{definition}[Long block] \label{defn:long}
We say a block $B$ at time $t$ is {\em long with respect to class $\ell$} if it satisfies the following conditions:
\begin{itemize}
\item It is a non-simple block. 
\item The set $m(B)$  of machines associated with block $B$
contains at least half of the machines of 
class $\ell$, but not all the machines of class $\ell+1$. 
\item $w(B) \geq w(A^t)/(10K)$. 
\end{itemize}
\end{definition}

Observe that for a  long block $B$ as defined above, the corresponding set $m(B)$ can include machines of class less than $\ell$, but will never contain machines of class $\ell+2$ or higher.  

There could also be blocks that are neither simple nor long. We  address them in \S\ref{sec:shortBlocks}.

\subsubsection{Definition of Non-Simple Duals} 
We now define $\delta''_{j,v,\ell}$ values. Consider a task $v$ of a job $j$. Let $t_1, \ldots, t_k$ be the  times $t$ when 
 $v$ belongs to a long block w.r.t. class $\ell$.
Let $B_1, B_2, \ldots, B_k$ be the corresponding long blocks at these times, respectively.  For each speed class $\ell$,  define 
$$ \delta_{j,v,\ell}'' ~:=~ \frac{1}{{96 K   \log K \cdot \tm_\ell}} \cdot \max_{k'=1}^k w(B_{k'}). $$
As before, $\delta''_{j,v} = \sum_\ell \delta''_{j,v, \ell}$ (if any of these quantities is undefined, treat it as 0).

For a job $j$ whose alive tasks belong to a long block $B$ at time $t$, and task $v \in T^t(j)$,  define 
$$\alpha_{j,v,t}'' := \frac{L^t_v \cdot w(B)}{12 K \log K \cdot  s(B)} \ .$$  
Again, if $v$ is not alive at time $t$, $\alpha''_{j,v,t}$ is set to 0. 
 Finally, $\alpha_{j,v}''$ is just the sum of these quantities over all time $t$. 
\subsubsection{Feasibility of Non-Simple Duals}

Now we show that $(\alpha'', \beta, \delta'')$ satisfy~\eqref{eq:dualsums},~\eqref{eq:d2s} and~\eqref{eq:dualnews} (with $\alpha'$ and $\delta'$ replaced by $\alpha''$ and $\delta''$ respectively). 
To prove feasibility of constraint~(\ref{eq:d2s}) in \Cref{lem:rootd}, we need the following claim.

\begin{claim}
\label{cl:long}
Let $B$ be a long block with respect to class $\ell$. Then the total number of tasks in $B$ is at most $\tm_\ell$. Further
the total speed of the machines in $m(B)$ (i.e. $s(B)$),  lies between $\gamma m_\ell \speed_\ell/2$ and $4 \gamma m_\ell \speed_\ell$. 
\end{claim}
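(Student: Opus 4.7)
The plan is to decompose $m(B)$ by speed class and leverage the global non-simplicity of $B$ together with the long-block conditions. I will write $m(B) = K' \cup X \cup Y$, where $X$ consists of the class-$\ell$ machines of $m(B)$, $Y$ the class-$(\ell{+}1)$ machines, and $K'$ the strictly faster machines; set $x := |X|$, $y := |Y|$, $k' := |K'|$, and $\rho' := \sum_{i \in K'} s_i$. The long-block hypothesis gives $x \geq m_\ell/2$ and $y \leq m_{\ell+1}-1$, while the increasing capacity assumption yields $\rho' \leq \sum_{i<\ell} m_i \speed_i \leq m_\ell \speed_\ell/2$ and $k' \leq M_{\ell-1} \leq m_\ell/2$. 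With this notation $s(B) = \gamma(\rho' + x\speed_\ell + y\speed_{\ell+1})$ and $|B| = k'+x+y$.

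First I would argue that $y \geq 1$. If $y = 0$, every machine in $m(B)$ has speed at least $\speed_\ell$, so $s(B)/|B| \geq \gamma \speed_\ell$; moreover $\rho' \leq m_\ell \speed_\ell/2 \leq x \speed_\ell$ gives $\rho' + x\speed_\ell \leq 2x\speed_\ell$, whence $s(B)/|B| \leq 2\gamma \speed_\ell$. The average would then sit in $[\gamma \speed_\ell, 2\gamma \speed_\ell]$, making $B$ simple w.r.t.\ class $\ell$ and contradicting non-simplicity.

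Next I would exploit non-simplicity with respect to class $\ell+1$. Every machine in $m(B)$ has speed at least $\speed_{\ell+1}$, so the average is at least $\gamma \speed_{\ell+1}$; non-simplicity therefore forces $s(B)/|B| > 2\gamma \speed_{\ell+1}$. Unrolling $\rho' + x\speed_\ell + y\speed_{\ell+1} > 2\speed_{\ell+1}(k'+x+y)$ and collecting terms yields
\[ \speed_{\ell+1}(2k'+2x+y) ~<~ \rho' + x\speed_\ell ~\leq~ \sum_{i \leq \ell} m_i \speed_i ~=~ \tm_\ell\, \speed_{\ell+1}, \]
so $2k'+2x+y < \tm_\ell$ and hence $|B| = k'+x+y < \tm_\ell - (k'+x) \leq \tm_\ell$.

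The speed bounds then drop out. The lower bound $s(B) \geq \gamma x\speed_\ell \geq \gamma m_\ell \speed_\ell/2$ is immediate from $x \geq m_\ell/2$. For the upper bound, the same rearrangement already implies $\speed_{\ell+1} y < \rho' + x\speed_\ell$, hence $\rho' + x\speed_\ell + y\speed_{\ell+1} < 2(\rho' + x\speed_\ell) \leq 2\sum_{i\leq \ell} m_i \speed_i \leq 3 m_\ell \speed_\ell$ using the increasing capacity assumption once more, so $s(B) < 3\gamma m_\ell \speed_\ell \leq 4\gamma m_\ell \speed_\ell$. The main obstacle will be choosing the right instance of non-simplicity to invoke: the natural guess of class $\ell$ does not linearise cleanly, whereas class $\ell+1$---the slowest class appearing in $B$---makes the rearrangement collapse exactly to a bound by $\tm_\ell$, which is no accident given the definition $\tm_\ell := (\sum_{i\leq \ell}m_i\speed_i)/\speed_{\ell+1}$.
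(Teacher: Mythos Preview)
Your proof is correct and shares the paper's key idea: invoke non-simplicity with respect to class $\ell+1$, using that every machine in $m(B)$ has speed at least $\gamma\speed_{\ell+1}$ so the average cannot fall into the simple range from below. The paper phrases this as a contradiction (assume $|B|\geq \tm_\ell$, show the average lands in $[\gamma\speed_{\ell+1},3\gamma\speed_{\ell+1}]$, hence simple), whereas you run it directly (non-simplicity forces average $>2\gamma\speed_{\ell+1}$, rearrange to $2k'+2x+y<\tm_\ell$); these are contrapositives of the same computation. Your explicit decomposition $m(B)=K'\cup X\cup Y$ and the reuse of the inequality $y\speed_{\ell+1}<\rho'+x\speed_\ell$ for the upper speed bound are slightly tighter than the paper's route (you obtain $s(B)<3\gamma m_\ell\speed_\ell$ rather than $4\gamma m_\ell\speed_\ell$), but the arguments are otherwise equivalent. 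One minor remark: your Step~1 ($y\geq 1$) is a true observation but not actually needed for Step~2, since the inequality average $\geq\gamma\speed_{\ell+1}$ holds regardless of whether any class-$(\ell{+}1)$ machine is present.
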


\begin{proof}
The average speed of the machines in $m(B)$ satisfies
$$ \frac{s(B)}{|B|} ~~\leq ~~ \frac{\gamma(|B| \speed_{\ell+1} + m_1 \speed_1 + \ldots + m_\ell \speed_\ell)}{|B|} 
~~\leq~~ \frac{\gamma(|B| \speed_{\ell+1} + 2 m_\ell \speed_\ell)}{|B|} \ ,$$
where the inequality follows from the increasing capacity \Cref{assump}. Suppose for contradiction that $|B| \geq \tm_\ell.$ Since $\tm_\ell \speed_{\ell+1} \geq m_\ell \speed_\ell$, the average speed is at most $3 \gamma \speed_{\ell+1}$. Since all machines in $s(B)$ have speed at least $\gamma \speed_{\ell+1}$, the average speed of the machines in $s(B)$ is at least $\gamma \speed_{\ell+1}$. But then  block $B$ is simple with respect to speed class $\ell+1$, which contradicts that $B$ is long with respect to class $\ell$. 

We now prove the second statement. Since $B$ is a long block, $m(B)$ has at least $m_\ell/2$ machines of speed class $\ell$, so $s(B) \geq \gamma m_\ell \speed_\ell/2$. As argued above, $s(B) \leq \gamma (
|B| \speed_{\ell+1} + 2 m_\ell \speed_\ell) \leq \gamma \tm_\ell \speed_{\ell+1} + 2\gamma m_\ell \speed_\ell.$ By definition of $\tm_\ell$ and the increasing capacity \Cref{assump}, 
$\tm_\ell \speed_{\ell + 1} \leq 2m_\ell \speed_\ell.$ This shows that $s(B) \leq 4 \gamma  m_\ell \speed_\ell.$
\end{proof}

Now we can prove  feasibility of constraint~(\ref{eq:d2s}).

\begin{lemma}
\label{lem:rootd}
For any job $j$, we have $\sum_{v \in T(j)} \delta''_{j,v} \leq w_j/2.$
\end{lemma}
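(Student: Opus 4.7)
The plan is to fix the job $j$ and bound $\sum_{v\in T(j)} \delta''_{j,v} = \sum_\ell \sum_{v\in T(j)} \delta''_{j,v,\ell}$ by first bounding the inner sum for each speed class $\ell$ and then summing over the $K$ classes.

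First I would identify the tasks that can contribute. By \Cref{cl:rate}, all alive tasks of $j$ are in the same block at any time, so if $t^{(\ell)}_1 < \cdots < t^{(\ell)}_k$ are the times at which $j$'s block is long w.r.t.\ $\ell$, the tasks with $\delta''_{j,v,\ell}>0$ are exactly those alive at some $t^{(\ell)}_i$. Since no jobs arrive and $j$'s alive-task set only shrinks, this is just $T^{t^{(\ell)}_1}(j)$. The block $B^{(\ell)}_1$ contains all these tasks, and \Cref{cl:long} gives $|B^{(\ell)}_1| \le \tm_\ell$, so $|T^{t^{(\ell)}_1}(j)| \le \tm_\ell$. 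Hence
\[
\tsty \sum_{v\in T(j)} \delta''_{j,v,\ell} \;\le\; \tm_\ell \cdot \frac{\max_{k'} w(B^{(\ell)}_{k'})}{96K\log K\cdot \tm_\ell} \;=\; \frac{\max_{k'} w(B^{(\ell)}_{k'})}{96K\log K}.
\]

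Next, to control $\max_{k'} w(B^{(\ell)}_{k'})$, I would invoke the ``doubling'' structure suggested in \S\ref{sec:interpret}. Since $B^{(\ell)}_{k'}$ is long, $w(B^{(\ell)}_{k'}) \ge w(A^{t^{(\ell)}_{k'}})/(10K)$; on the other hand $w(B^{(\ell)}_{k'}) \le w(A^{t^{(\ell)}_{k'}})$, and $w(A^t)$ is non-increasing because no jobs are released and only completions occur. Consequently the ratio $w(A^{t^{(\ell)}_{k'}})/w(B^{(\ell)}_{k'}) \in [1,10K]$, and the sequence $w(B^{(\ell)}_{k'})$ is pinned inside a window whose width allows at most $O(\log K)$ ``weight-doubling levels''. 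In particular $\max_{k'} w(B^{(\ell)}_{k'}) \le w(A^{t^{(\ell)}_1})$, the weight at the first time $j$ enters a long block w.r.t.\ $\ell$.

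Finally I would sum over $\ell$, and this is where the main obstacle lies. A bare sum $\sum_\ell w(A^{t^{(\ell)}_1})$ could be as large as $K\cdot w(A^{t^{(K)}_1})$, which is not tied to $w_j$. To overcome this, I would exploit two pieces of structure: (i) the time intervals $I_\ell = \{t: j\text{'s block is long w.r.t.\ }\ell\}$ are pairwise disjoint, since at any instant $j$'s block is long w.r.t.\ at most one class; and (ii) the $\log K$ factor placed in the denominator of $\delta''_{j,v,\ell}$ is precisely designed to absorb the $O(\log K)$ weight-doubling levels identified in the previous step. The intended charging should show that across all $\ell$ and all weight levels, each unit of $w(A^t)$ at time $t$ is charged at most $O(\log K)$ times, and only through $j$'s own contribution $w_j$ to $w(A^t)$ (because long blocks containing $j$ always include $j$'s weight), yielding $\sum_{v,\ell} \delta''_{j,v,\ell} \le w_j/2$ after absorbing constants. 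Carefully executing this cross-class charging, and verifying that the ``at most $O(\log K)$ levels'' bound survives the summation over the $K$ speed classes, is the delicate step.
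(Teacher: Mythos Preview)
Your first two steps are fine, but the argument collapses at the third. After bounding the number of contributing tasks by $\tm_\ell$ you are left with
\[
\sum_{v\in T(j)} \delta''_{j,v,\ell} \;\le\; \frac{\max_{k'} w(B^{(\ell)}_{k'})}{96K\log K},
\]
and you then need $\sum_\ell \max_{k'} w(B^{(\ell)}_{k'}) = O(w_j\log K)$. But $w(B^{(\ell)}_{k'})$ is the weight of the \emph{entire block}, which may contain many jobs and can be arbitrarily larger than $w_j$. Your proposed ``cross-class charging'' asserts that only $j$'s own contribution $w_j$ to $w(A^t)$ is charged; this is simply false---the quantity you must bound is $w(B)$, not $w_j$, and the inclusion $j\in B$ gives the inequality in the wrong direction. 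Concretely, if $B$ contains $100$ equal-weight jobs, your bound for each of them reads $\sum_v \delta''_{j,v,\ell}\le \tfrac{100}{96K\log K}$, which after summing over $\ell$ exceeds $w_j/2$.

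The missing idea is a structural inequality that ties $w(B)$ back to $w_j$. The paper proves (via \Cref{cor:rate1}) that whenever $j$'s alive tasks sit in a long block $B$ w.r.t.\ class $\ell$, one has $\tfrac{w(B)}{\tm_\ell}\le \tfrac{6w_j}{n^t(j)}$. In other words your step ``$|T^{t_1^{(\ell)}}(j)|\le \tm_\ell$'' is precisely where information is thrown away: the right count is $n^t(j)$, and the rate-assignment structure forces $n^t(j)\cdot w(B)\lesssim w_j\,\tm_\ell$. With this in hand, the doubling subsequence you allude to is used per level (not via a single global $\max$): one takes times $\tau_1,\dots,\tau_u$ with $w(B_{\tau_{s+1}})\ge 2w(B_{\tau_s})$, bounds each $\delta''_{j,v,\ell}$ by $\sum_s \tfrac{2w(B_{\tau_s})\,\ones{v\in A^{\tau_s}}}{96K\log K\,\tm_\ell}$, applies the inequality at each $\tau_s$, and sums over $v$ so that $n^{\tau_s}(j)$ cancels. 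This yields $\sum_v \delta''_{j,v,\ell}\le \tfrac{w_j\,u}{8K\log K}$ with $u\le\log(10K)$, after which summing over the $K$ classes gives $w_j/2$.
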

\begin{proof}

We begin with a useful claim. 
\begin{claim}
\label{cl:rootd}
Let  $B$ be a block at time $t$ which is long with respect to speed class $\ell$. Suppose $B$ contains all the active tasks of job $j$.   Then 
$ \frac{w(B)}{\tm_\ell} \leq  \frac{6w_j}{n^t(j)} . $ 
\end{claim}

\begin{proof}
By the first statement in \Cref{cor:rate1}, for $V' = B$ and $V$ being the  set of tasks that freeze by the moment $B$ freezes, we get
$\frac{w_j}{n^t(j) \cdot s_{|V|}} \geq \frac{w(B)}{S_{|V|}}.$ Note that  $s_{|V|} \geq \gamma \sigma_{\ell+1}$. 
Since all blocks in $V$ before $B$ contain machines of class $\ell$ or smaller,~\Cref{cl:long} along with the increasing capacity assumption  imply that $S_{|V|} \leq 4 \gamma m_\ell \sigma_\ell + 2 \gamma m_\ell \sigma_\ell = 6 \gamma m_\ell \speed_\ell. $
Thus, $ \frac{w_j}{n^t(j)\cdot \speed_{\ell+1}} \geq \frac{w(B)}{6 m_\ell \cdot \speed_\ell} . $
Therefore, 
$$ \frac{w(B)}{\tm_\ell}   ~~\leq~~  \frac{6 w_j \cdot m_\ell \cdot \speed_\ell}{n^t(j) \cdot \tm_\ell \cdot \speed_{\ell+1}}
~~\leq ~~ \frac{6 w_j}{n^t(j)} \ , $$
where the last inequality follows from the definition of $\tm_\ell$. 
\end{proof}

For the given job $j$,
consider the times $t$ when the tasks in $T^t(j)$ belong to  a long block with respect to class $\ell$ -- let these be $t_1, \ldots, t_k$ (in ascending order) and the corresponding blocks
be $B_{1}, \ldots, B_{k}$. Among these times, starting from $t_1$, we build a subsequence $\tau_1=t_1, \tau_{2}, \ldots, \tau_{u}$ greedily as follows: suppose we have defined $\tau_1, \ldots, \tau_i$, then define $\tau_{i+1}$ to be the smallest index $\tau > \tau_i$ such  that $w(B_\tau) \geq 2 w(B_{\tau_i}). $

Now consider a task $v \in T(j)$. 
Suppose $\delta''_{j,v.\ell}$ is equal to 
$\frac{w(B_{t_r})}{96 K \log K \cdot \tm_\ell}$ where $t_r$ lies between 
$\tau_p$ and $\tau_{p+1}$. Then $w(B_{t_r}) \leq 2 w(B_{\tau_p})$. Further, $v$ must be alive at time $\tau_p$ (since it is alive at time $t_r$) and hence belongs to the block $B_{\tau_p}$ at time $\tau_p$. So, we can upper bound 
$$\delta''_{j,v.\ell} ~~\leq~~ \sum_{s=1}^u \frac{2 w(B_{\tau_s}) \cdot I[v \in A^{\tau_s}]}{96 K  \log K \cdot \tm_\ell}
~~\leq~~ \sum_{s=1}^u \frac{12 w_j \cdot I[v \in A^{\tau_s}]}{96 K \log K \cdot n^{\tau_s}(j)} \ ,$$
where $I[v \in A^t]$ is the indicator variable indicating whether $v$ is unfinished at time $t$, and the last inequality follows from 
Claim~\ref{cl:rootd}. Summing over all tasks $v \in T(j)$, we get 
$$ \sum_{v \in T(j)} \delta_{j,v,\ell}'' ~~\leq~~ \sum_{v \in T(j)} \sum_{s=1}^u \frac{12 w_j \cdot I[v \in A^{\tau_s}]}{96 K  \log K \cdot n^{\tau_s}(j)} ~~=~~ \sum_{s=1}^u 
\sum_{v \in T^{t_s}(j)} \frac{12 w_j }{96 K  \log K \cdot n^{\tau_s}(j)} ~~=~~ 
\frac{ w_j \cdot u}{8 K \log K} \ . $$
Now notice that $u \leq \log (10K)$. The reason is that $B_{\tau_1}$ being a long block implies that $w(B_{\tau_1}) \geq w(A^{\tau_1})/2K$. If $u > \log (10K)$, then $w(B_{\tau_u}) > w(A^{\tau_1}) 
\geq w(A^{\tau_u})$, which is a contradiction. 
Summing the above inequality above all $\ell$ 
proves \Cref{lem:rootd}.
\end{proof}

We now show feasibility of~\eqref{eq:dualsums}. 

\begin{claim}
\label{cl:dualsums}
For any job $j$ and time $t$, 
$$ \alpha''_{j,v,t} \leq w_j/2. $$
\end{claim}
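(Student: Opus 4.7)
The plan is a direct arithmetic identity, once one unpacks the freezing dynamics from \S\ref{sec:scheduling-algorithm}; in particular, I would not expect to need \Cref{cor:rate1} or any auxiliary combinatorics.

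First I would dispose of the trivial case: if at time $t$ the alive tasks of $j$ do not form part of a long block (either $j$ is finished, or its alive tasks sit in a block that is simple or short), then by the definition of the non-simple dual $\alpha''_{j,v,t}=0$ for every $v\in T(j)$, and the statement is immediate. So assume $T^t(j)$ lies inside a long block $B$ at time $t$.

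The key ingredient is the rate formula inside $B$. By \Cref{cl:rate} (Uniform Rates), all tasks in $T^t(j)\subseteq B$ freeze at the same common moment $\tau_B$, and hence $L^t_v = \tw^t(v)\cdot \tau_B$ for every $v \in T^t(j)$. Moreover, $B$ freezes exactly when its total assigned rate saturates the machines $m(B)$, so $\sum_{v'\in B} L^t_{v'} = \gamma\, s(B)$; combining with $\sum_{v'\in B}\tw^t(v') = w(B)$ gives the identity $\tau_B = \gamma\, s(B)/w(B)$. Summing $L^t_v = \tw^t(v)\cdot\tau_B$ over $T^t(j)$ and using $\sum_{v\in T^t(j)}\tw^t(v) = w_j$, I would obtain
\[
  \sum_{v \in T^t(j)} L^t_v \;=\; \tau_B \cdot w_j \;=\; \frac{\gamma\, s(B)\, w_j}{w(B)}.
\]

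Substituting this into the definition $\alpha''_{j,v,t}=\frac{L^t_v\, w(B)}{12K\log K\,\cdot s(B)}$ and using that $\alpha''$ vanishes outside $T^t(j)$, the $s(B)$ and $w(B)$ factors cancel and the sum telescopes to
\[
  \sum_{v\in T(j)} \alpha''_{j,v,t} \;=\; \frac{w(B)}{12 K \log K \cdot s(B)}\cdot \frac{\gamma\, s(B)\, w_j}{w(B)} \;=\; \frac{\gamma\, w_j}{12 K \log K}.
\]
Provided the normalization constant in the definition of $\alpha''$ is chosen so that $\gamma \leq 6K\log K$ (equivalently, the $12K\log K$ factor dominates $2\gamma$), this is at most $w_j/2$, as required.

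The only real subtlety, and the sole thing I would double-check, is the bookkeeping of constants: the feasibility argument \Cref{cor:d1} for the simple duals forced $\gamma \gtrsim K\log K$, so the $12K\log K$ appearing in the denominator of $\alpha''$ must be read with the same asymptotic scaling as $\gamma$ rather than as an independent absolute constant. Once this normalization is consistent across the two sets of duals, the one-line identity above closes the claim.
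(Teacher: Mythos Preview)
Your approach is exactly the paper's: reduce to the case where $T^t(j)$ lies in a long block $B$, use the freezing identity to write $L^t_v$ in terms of $\tw^t(v)$, $s(B)$, and $w(B)$, and observe that the block-dependent factors cancel when you sum over $v\in T^t(j)$.

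The one slip is the stray factor of $\gamma$. In this paper $s(B)$ denotes the total speed of the \emph{algorithm's} (sped-up) machines in $m(B)$; see for instance \Cref{cl:long}, where $s(B)\in[\gamma m_\ell\speed_\ell/2,\,4\gamma m_\ell\speed_\ell]$. Hence the saturation condition is $\sum_{v'\in B}L^t_{v'}=s(B)$, not $\gamma\,s(B)$, and the paper's identity reads $L^t_v=\tw^t(v)\,s(B)/w(B)$. With this convention your computation gives
\[
\sum_{v\in T(j)}\alpha''_{j,v,t}=\frac{w_j}{12K\log K}\le \frac{w_j}{2},
\]
with no $\gamma$ present and no constraint linking $\gamma$ to the $12K\log K$ normalisation. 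Your closing caveat about needing $\gamma\le 6K\log K$ (which would in fact contradict the requirement $\gamma\ge 1024K\log K$ from \Cref{cor:d1}) is therefore unnecessary once $s(B)$ is read correctly.
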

\begin{proof}
 We can assume that $v$ belongs to a long block at time $t$, otherwise $\alpha''_{j,v,t}$ is 0 for all $v$. 
 By definition of rate assignment, for every task $v \in T^t(j)$, $L^t_v =\frac{\tw^t(v) \cdot s(B)}{w(B)}$,  and so, $\sum_{v \in T(j)} L^t_v = \frac{w_j \cdot s(B)}{w(B)}. $ The desired result now follows from the definition of $\alpha''_{j,v,t}.$
\end{proof}

To show feasibility of constraint~(\ref{eq:dualnews}), we first prove
the following lemma.

\begin{lemma}
\label{lem:dang}
Let $v$ be a task of a job $j$. For any time $t$ and machine $i$ belonging to speed class $\ell$,
$$ \alpha''_{j,v,t} ~\leq~  \frac{K \beta_{it} \cdot L^t_v}{6 \gamma \speed_\ell} + \frac{8\delta''_{j,v} \cdot L^t_v}{\gamma \speed_\ell} \ .$$
\end{lemma}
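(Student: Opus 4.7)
The plan is to split on whether $v$'s tasks at time $t$ sit in a long block, and, if so, to compare the speed class $\ells$ certifying longness with the speed class $\ell$ of machine~$i$. If $v \notin T^t(j)$ or if the block containing $T^t(j)$ is not long, then $\alpha''_{j,v,t}=0$ by definition and the inequality is immediate. Otherwise let $B$ be the long block (w.r.t.\ class $\ells$) whose task-set contains $T^t(j)$, so $\alpha''_{j,v,t} = \tfrac{L^t_v\, w(B)}{12 K \log K \cdot s(B)}$. The intuition is that when $\ell \leq \ells$ the $\beta_{i,t}$-term on the RHS alone absorbs $\alpha''_{j,v,t}$, and when $\ell > \ells$ the $\delta''_{j,v}$-term does; the two cases complement each other exactly on the boundary.

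In the first case ($\ell \leq \ells$) I will lower-bound $s(B)$ by something proportional to $\gamma m_\ell \sigma_\ell$. Claim~\ref{cl:long} already gives $s(B) \geq \gamma m_\ells \sigma_\ells / 2$; for $\ell < \ells$ the increasing capacity assumption yields $m_\ells \sigma_\ells \geq 2 m_\ell \sigma_\ell$, and for $\ell = \ells$ no further work is needed, so in both subcases $s(B) \geq \gamma m_\ell \sigma_\ell / 2$. Combined with the trivial $w(B) \leq w(A^t)$, this gives
\[
  \alpha''_{j,v,t}
  \;\leq\; \frac{L^t_v\, w(A^t)}{6 K \log K \cdot \gamma\, m_\ell \sigma_\ell}
  \;=\; \frac{K \beta_{i,t} \cdot L^t_v}{6 \gamma \sigma_\ell},
\]
using the definition $\beta_{i,t}=w(A^t)/(K^2 \log K \cdot m_\ell)$. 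So the $\beta$-term alone suffices.

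In the second case ($\ell > \ells$) I will use the $\delta''$-term. Since $v \in T^t(j)$ and the block containing it at time $t$ is long w.r.t.\ $\ells$, the definition of $\delta''_{j,v,\ells}$ gives $\delta''_{j,v} \geq \delta''_{j,v,\ells} \geq w(B)/(96 K \log K \cdot \tm_\ells)$, and therefore
\[
  \alpha''_{j,v,t}
  \;=\; \frac{L^t_v\, w(B)}{12 K \log K \cdot s(B)}
  \;\leq\; \frac{8\, \tm_\ells\, L^t_v}{s(B)}\cdot \delta''_{j,v}.
\]
It remains to show $\tm_\ells / s(B) \lesssim 1/(\gamma \sigma_\ell)$. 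Claim~\ref{cl:long} gives $s(B) \geq \gamma m_\ells \sigma_\ells / 2$, Claim~\ref{cl:mlt}(c) gives $m_\ells \sigma_\ells \geq \tm_\ells \sigma_{\ells+1}/2$, and $\ell > \ells$ forces $\sigma_\ell \leq \sigma_{\ells+1}$, so $s(B) \geq \tfrac{\gamma}{4}\tm_\ells \sigma_\ell$. Substituting yields $\alpha''_{j,v,t} \leq O(\delta''_{j,v} L^t_v / (\gamma \sigma_\ell))$, which is the claimed bound (with the stated constant $8$ obtained by exploiting the extra factor $\sigma_\ell / \sigma_{\ells+1}$ when $\ell \geq \ells+2$, and by tightening the constants in the speed-class gap when $\ell=\ells+1$).

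The main obstacle is therefore not conceptual but a bookkeeping one: the boundary subcase $\ell = \ells + 1$ of the second case is the tightest, because the gap between $s(B)$ and $\gamma \tm_\ells \sigma_\ell$ is only a constant there, and this has to be absorbed by the slack between the denominators ($12$ vs.\ $96$) in the definitions of $\alpha''$ and $\delta''$. Keeping track of all constants—and in particular exploiting the falling-speeds gap $\sigma_\ells/\sigma_{\ells+1} \geq \const$ from Assumption~\ref{assump} when $\ell \geq \ells+2$—is where the bulk of the calculation will sit, while the overall structure is just the two-case argument above.
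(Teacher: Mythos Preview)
Your proposal is correct and follows essentially the same route as the paper: the same case split on $\ell \le \ells$ versus $\ell > \ells$, the same use of Claim~\ref{cl:long} to bound $s(B)$ from below, and in the second case the same chain through Claim~\ref{cl:mlt}(c) and $\sigma_{\ells+1}\ge \sigma_\ell$. One remark on your final paragraph: the bookkeeping worry about hitting the constant $8$ in the boundary case $\ell=\ells+1$ is misplaced, since the paper's own computation in the second case actually yields $\tfrac{32\,L^t_v\,\delta''_{j,v}}{\gamma\sigma_\ell}$ rather than $8$; the stated constant in the lemma is simply slack, and no extra ``tightening'' argument is needed (or possible) there.
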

\begin{proof}
We can assume that $v$ is part of a long block $B$ with respect to a speed class $\ells$ at time $t$ (otherwise LHS is 0). 
 First consider the case when $\ell \leq \ells$. Since $s(B) \geq \gamma \cdot m_\ell \speed_\ell/2,$ we see that 
 $$  \alpha''_{j,v,t} ~~ \leq~~ \frac{2L^t_v \cdot w(B)}{12 \gamma K \log K \cdot  m_\ell \speed_\ell} ~~\leq~~ 
 \frac{2L^t_v \cdot w(A^t)}{12 \gamma K \log K \cdot  m_\ell \speed_\ell} ~~\leq~~ 
 \frac{K \cdot L^t_v \cdot  \beta_{it}}{6 \gamma \speed_\ell}\ . $$
 
  So assume $\ell > \ells$. Now 
  $$  \alpha''_{j,v,t}  ~~\leq~~ \frac{2L^t_v \cdot w(B)}{12 \gamma K \log K  \cdot m_\ells \cdot \speed_\ells}
  ~~\leq~~ 
  \frac{2L^t_v \cdot w(B)}{24 \gamma K \log K \cdot \tm_\ells \cdot \speed_{\ells+1}}
  ~~\leq~~ \frac{32 L^t_v \cdot \delta_{j,v}''}{\gamma   \speed_\ell} \,
  $$
  where the second last inequality follows from part~(c) of~\Cref{cl:mlt}, and 
  the last inequality uses $\delta''_{j,v}$ is at least $\delta''_{j,v,\ells}$ and that $\speed_{\ells+1} \geq \speed_\ell$.
\end{proof}

As in the proof of \Cref{cor:d1}, we get the following corollary (assuming $\gamma \geq K)$.
\begin{corollary}
\label{cor:dang}
The solution $(\alpha'', \beta, \delta'')$ satisfies~\eqref{eq:dualnews}. 
\end{corollary}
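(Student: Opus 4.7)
The plan is to derive Corollary \ref{cor:dang} from Lemma \ref{lem:dang} along exactly the same lines as Corollary \ref{cor:d1} was derived from Lemma \ref{lem:simple}. First I would apply Lemma \ref{lem:dang} at the time $t'$ with task $v\in T(j)$ and machine $i$ of class $\ell$, obtaining
$$ \alpha''_{j,v,t'} ~\leq~ \frac{K\,\beta_{i,t'}\,L^{t'}_v}{6\gamma\,\speed_\ell} + \frac{8\,\delta''_{j,v}\,L^{t'}_v}{\gamma\,\speed_\ell}. $$
The target is the non-simple analogue of \eqref{eq:dualnews}, namely $\alpha''_{j,v,t'}\leq \frac{\beta_{i,t} L^{t'}_v}{2s_i}+\frac{\delta''_{j,v} L^{t'}_v}{s_i}$ for $t'\geq t$, which when combined with the bound already proved for $(\alpha',\delta')$ in Corollary~\ref{cor:d1} yields the full dual constraint with the original $\beta$-coefficient.

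The only nontrivial gap between the lemma and the corollary is that the lemma bounds $\alpha''_{j,v,t'}$ in terms of $\beta_{i,t'}$, whereas \eqref{eq:dualnews} asks for $\beta_{i,t}$ at the earlier time $t\leq t'$. This is handled by the same monotonicity observation used in the proof of Corollary~\ref{cor:d1}: since $\beta_{i,t}=\frac{w(A^t)}{K^2\log K\cdot m_\ell}$ and $w(A^t)$ is non-increasing in $t$ (all release dates are zero, so jobs only leave $A^t$), we have $\beta_{i,t'}\leq \beta_{i,t}$ for $t'\geq t$. Substituting this into the bound from Lemma~\ref{lem:dang} replaces $\beta_{i,t'}$ by $\beta_{i,t}$ on the right-hand side without making it smaller.

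All that then remains is to match constants. Under the hypothesis $\gamma\geq K$ (and the mild side-condition $\gamma\geq 8$, which in the context of the paper is automatic, since the simple analysis already imposes $\gamma\geq 1024 K\log K$), we have $\frac{K}{6\gamma}\leq \frac{1}{6}\leq \frac{1}{2}$ and $\frac{8}{\gamma}\leq 1$, and $s_i=\speed_\ell$, so the displayed bound becomes $\alpha''_{j,v,t'}\leq \frac{\beta_{i,t}L^{t'}_v}{2s_i}+\frac{\delta''_{j,v}L^{t'}_v}{s_i}$, as required. I do not anticipate any real obstacle: the heavy lifting of bounding $\alpha''_{j,v,t'}$ by a sum of a $\beta$-term and a $\delta''$-term was already done in Lemma~\ref{lem:dang}, and the only conceptual point is the time-monotonicity of $\beta_{i,\cdot}$, which mirrors the argument for the simple duals. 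The proof is thus a short book-keeping step parallel to the final two lines of the proof of Corollary~\ref{cor:d1}.
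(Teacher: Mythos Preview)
Your proposal is correct and follows essentially the same approach as the paper: apply Lemma~\ref{lem:dang} at time $t'$, use the monotonicity $\beta_{i,t'}\leq\beta_{i,t}$ (since $w(A^t)$ is non-increasing), and absorb the constants via $\gamma\geq K$. The paper's own proof is in fact just the single line ``As in the proof of Corollary~\ref{cor:d1}, we get the following corollary (assuming $\gamma\geq K$),'' so your write-up is simply a spelled-out version of exactly that argument.
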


\Cref{cor:dang} and \Cref{cor:d1} show that $(\alpha:=\alpha'+\alpha'', \beta, \delta:= \delta'+\delta'')$ 
satisfy the dual constraint~(\ref{eq:dualnew}). 

\subsection{Comparing the Objectives}

We start by showing that  a large fraction of jobs participating in a simple block are simple (with respect to the corresponding speed class).

\subsubsection{Handling Simple Blocks}
For a set of tasks $X$, define $w(X)$ as the total weight of the corresponding jobs, i.e., $w(X) =  \sum_{j: T(j) \cap X \neq \emptyset} w_j . $

\begin{lemma}
\label{lem:simpleb}
Let $B$ be a block which is simple with respect to a speed class $\ell$ at time $t$. Let $B'$ be the tasks in $B$ corresponding to jobs which are simple with respect to speed class $\ell$. Then $w(B) \leq 5 w(B').$ 
\end{lemma}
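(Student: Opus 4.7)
The plan is to split the jobs in $B$ into three buckets according to the rate assigned to their tasks---simple (rate in $[\gamma \sigma_\ell/64,\, 64\gamma \sigma_\ell]$), slow (rate below), and fast (rate above)---and to show that the total weight of slow and fast jobs is at most $\tfrac{4}{5}\, w(B)$, which gives $w(B') \ge w(B)/5$. Summing rates in the block gives $\sum_{v \in B} L_v^t = \gamma\, s(B)$, so the freezing parameter is $\tau = \gamma\, s(B)/w(B)$, and a task of job $j$ satisfies $L_v^t = \tau\, w_j/|T^t(j)|$. Using the simple-block hypothesis $s(B)/|B| \in [\sigma_\ell/2,\, 2\sigma_\ell]$, the slow condition $L_v^t < \gamma \sigma_\ell/64$ becomes $w_j/|T^t(j)| < w(B)/(32|B|)$, and summing over at most $|B|$ slow tasks shows the total weight of slow jobs is below $w(B)/32$.

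The main step is bounding the weight of fast jobs. Since the fast tasks are exactly the tasks of largest $w_j/|T^t(j)|$ in $B$, applying the feasibility constraint $(\star)$ to $V_{r-1} \cup \{\text{top-}K_{\mathrm{fast}}\text{ tasks of }B\}$ yields the majorization $\sum_{v\ \mathrm{fast}} L_v^t \le \gamma\, s_{\mathrm{top}}$, where $s_{\mathrm{top}}$ is the total speed of the $K_{\mathrm{fast}}$ fastest machines in $m(B)$; dividing by $\tau$ gives that the weight of fast jobs is at most $s_{\mathrm{top}}\, w(B)/s(B)$. I would then show $s_{\mathrm{top}}/s(B) \le 9/16$ by casework on the composition of $m(B)$, writing $M_{\le \ell-1}$, $M_\ell$, $M_{\ge \ell+1}$ for the machines in $m(B)$ of speed class $\le \ell-1$, exactly $\ell$, and $\ge \ell+1$ respectively. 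If $|M_\ell| = m_\ell$ (so $m(B)$ contains all global class-$\ell$ machines), the increasing-capacity assumption gives $m_\ell \sigma_\ell \ge 2\, s(M_{\le \ell-1})$ and thus $s(B) \ge 2\, s(M_{\le \ell-1})$. Otherwise, contiguity of $m(B)$ in the speed order forces either $|M_{\le \ell-1}| = 0$ (in which case no rate in $m(B)$ exceeds $\gamma\sigma_\ell$ and there are no fast tasks at all) or $|M_{\ge \ell+1}| = 0$; in the latter sub-case, the simple-block upper bound $s(M_{\le \ell-1}) + |M_\ell|\sigma_\ell \le 2\sigma_\ell(|M_{\le \ell-1}| + |M_\ell|)$ combined with $s(M_{\le \ell-1}) \ge 64\,\sigma_\ell\, |M_{\le \ell-1}|$ (since each class $\le \ell-1$ machine has speed $\ge 64\sigma_\ell$) yields $s(B) \ge (63/32)\, s(M_{\le \ell-1})$. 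Either way $s(M_{\le \ell-1})/s(B) \le 1/2$. When the top-$K_{\mathrm{fast}}$ machines spill into class $\ell$, the residual $s_{\mathrm{top}} - s(M_{\le \ell-1}) \le K_{\mathrm{fast}}\, \sigma_\ell$ contributes at most $s(B)/16$, using $K_{\mathrm{fast}} \le |B|/32$ (derived from $\gamma\, s(B) \le 2\gamma\sigma_\ell\,|B|$ together with each fast task having rate $> 64\gamma\sigma_\ell$) and $|B|\sigma_\ell \le 2\, s(B)$.

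Putting these bounds together, the total non-simple weight is at most $w(B)/32 + (9/16)\, w(B) = 19\, w(B)/32 < (4/5)\, w(B)$, so $w(B') \ge (13/32)\, w(B) \ge w(B)/5$, giving $w(B) \le 5\, w(B')$. The hard part is the casework bounding $s(M_{\le \ell-1})/s(B)$: without exploiting that $m(B)$ is contiguous in the speed order, one could in principle pair a handful of very fast machines with many extremely slow ones (and no class-$\ell$ cushion in between), which would drive $s(M_{\le \ell-1})/s(B)$ arbitrarily close to $1$ and break the lemma. It is precisely the contiguous-prefix structure of $m(B)$---which forces all of class $\ell$ into $m(B)$ whenever $m(B)$ spans across it---that rules this out.
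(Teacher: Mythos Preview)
Your proof follows the same three-step skeleton as the paper's: split $B$ into ``slow'' and ``fast'' tasks, bound the slow weight by $w(B)/32$ via direct summation, and bound the fast weight by majorization (applying $(\star)$ to $V_{r-1}$ together with the fast tasks to get $\sum_{\text{fast}} L_v^t \le s_{\text{top}}$, then using $\sum_{\text{fast}} L_v^t = \tfrac{w_{\text{fast}}}{w(B)}\, s(B)$). The only real divergence is how you control $s_{\text{top}}/s(B)$. The paper avoids your contiguity casework entirely: it argues that at most $2|B|/3$ machines in $m(B)$ can be of class $\ge \ell+1$ (else the average speed would drop below $\gamma\speed_\ell/2$), so the slowest $31|B|/32$ machines in $m(B)$ contain at least $(31/32-2/3)|B|$ machines of speed $\ge \gamma\speed_\ell$, giving $s(\text{bottom}) \ge s(B)/8$ and hence $s_{\text{top}} \le 7s(B)/8$ in one line. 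Your contiguity argument is correct in spirit but more laborious, and has a small arithmetic slip: in sub-case~2b you derive $s(B) \ge (63/32)\, s(M_{\le \ell-1})$, which gives $s(M_{\le \ell-1})/s(B) \le 32/63$, not $\le 1/2$ as you state. This only perturbs your final constant (you end up a hair above $9/16$ for $s_{\text{top}}/s(B)$), and the lemma still follows with a constant comparable to the paper's---indeed, the paper's own arithmetic yields $w(B')\ge 3w(B)/32$, which is also weaker than the stated factor $5$, so neither proof hits that constant exactly.
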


\begin{proof}

 Let $\speed$ denote the average speed assigned to the tasks in $B$. Let $s(B)$ denote the total speed of the machines $m(B)$ associated with $B$, then $\speed = \frac{s(B)}{|B|}. $
 Let $B_1$ be the tasks in $B$ which get speed less than $\speed/32.$ By definition, if $v \in T^t(j) \subseteq B_1$, then 
 $$ L^t_v ~~=~~ \frac{ w_j \cdot s(B)}{n^t(j)\cdot   w(B)} ~~\leq~~ \frac{s(B)}{32|B|} \ .$$
 
 Summing over all the tasks in $B_1$, we see that 
 $w(B_1) \leq w(B)/32$. 

Let $B_2$ be the tasks in $B$ which get speed more than $32 \speed$. By Markov's inequality, $|B_2| \leq |B|/32$. But we really want to upper bound $w(B_2)$. The strategy is as follows: $w(B_2)$ is proportional to the total speed assigned to these tasks, which is at most the total speed of the fastest $|B_2|$ machines in $m(B)$. We now argue that the latter quantity cannot be too large. 

\begin{claim}
The total speed of the fastest $|B|/32$ machines in $m(B)$ is at most $7 s(B)/8.$
\end{claim}
\begin{proof}
 Since $B$ is simple, we know that $\sigma$ lies in the range $[\frac{\gamma \speed_\ell}{2}, 2\gamma \speed_\ell]$. 
The first claim is that at most $2|B|/3$ machines in $m(B)$ can belong to speed class $\ell+1$ or higher. Indeed, otherwise  the average speed of the machines in $m(B)$ would be at most 
$ \frac{\gamma \speed_\ell}{3} + \frac{2 \gamma \speed_{\ell+1}}{3} < \frac{\gamma \speed_\ell}{2}, 
$
because $\sigma_\ell \geq 64 \speed_{\ell+1}.$
 But this is a contradiction.

 Let $\tilde{B}$ be the slowest $31|B|/32$ machines  in $B$. 
 The argument above shows that 
$$s(\tilde{B}) ~~\geq~~ \left(\frac{31}{32} - \frac{2}{3} \right) |B| \gamma \speed_\ell ~~\geq~~ 
|B| \gamma \speed_\ell/4 ~~\geq~~ |B| \speed/8 ~~=~~ s(B)/8 \ ,
$$
which implies the desired claim.
\end{proof}
 Now we bound $w(B_2)$. The above claim implies that 
 $ \frac{7s(B)}{8} \geq \sum_{v \in B_2} L^t_v = \frac{w(B_2)}{w(B)} s(B) .$
 It follows that $w(B_2) \leq \frac{7 w(B)}{8}.$
 Thus, $w(B_1 \cup B_2) \leq 29 w(B)/32.$ The tasks in $B \setminus (B_1 \cup B_2)$ have the property that the jobs corresponding to them are simple  with respect to speed class $\ell$ (\Cref{def:simpleJob}). This implies \Cref{lem:simpleb}. 
\end{proof}

\Cref{lem:simpleb} along with \Cref{cor:d1} imply that the total weight of tasks (or the corresponding jobs) belonging to simple blocks can be accounted for by the dual solution.
We would now like to argue that the total weight of non-simple blocks is upper bounded by that of simple blocks. However this may not be true. {To address this,  we defined long blocks and non-simple duals in  \S\ref{sec:nonSimpleDuals}. We are still not done, since there could be blocks that are neither simple nor long. }

\subsubsection{Handling Cheap Blocks and Short Blocks} \label{sec:shortBlocks}

We  would now like to consider blocks which are neither simple nor long. Notice that in \Cref{defn:long} of a long block, we had added a condition that if $B$ is a long block at time $t$, then $w(B) \geq w(A^t)/(10K)$. It is easy to get rid of this condition. 

\begin{definition}[Cheap block] \label{defn:cheap}
A non-simple block $B$ at a time $t$ is {\em cheap} if  $w(B) < w(A^t)/(10K)$. 
\end{definition}

Note that a cheap block is also  non-long, and it 
 must contain machines of at least two speed classes, else it will be simple. Moreover, 
since there can be at most one block which contains machines of speed class $\ell$ and $\ell+1$ for any index $\ell$, we get the following claim.

\begin{claim}
\label{cl:cheap}
There are at most $K$ cheap blocks. So the total weight of cheap blocks is at most $w(A^t)/10$.
\end{claim}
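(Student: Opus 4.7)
The plan is to show the two statements in order: first, that there are at most $K$ cheap blocks at any time $t$, and second, deduce the weight bound directly from the definition of cheap.

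For the cardinality bound, I would start from the observation already flagged in the text: a cheap block is non-simple, and non-simple blocks must straddle at least one boundary between consecutive speed classes. To make this precise, I would check that if a block $B$ contained machines of only a single speed class $\ell$, then $s(B) = |B| \cdot \gamma \speed_\ell$ (using the $\gamma$-speedup), so the average speed $s(B)/|B|$ equals $\gamma \speed_\ell$, which lies in the range $[\gamma \speed_\ell/2,\, 2\gamma \speed_\ell]$ required by \Cref{defn:simple}. Hence such a single-class block would be simple, contradicting non-simplicity.

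Next I would exploit the structural fact that blocks form a partition of the machines into contiguous intervals (in the sorted-by-speed ordering used to define blocks in \S\ref{sec:prop-rate-assignm}). In particular, for each index $\ell \in \{1,\ldots,K-1\}$, the boundary between the last machine of class $\ell$ and the first machine of class $\ell+1$ is either an internal point of exactly one block, or it separates two adjacent blocks; it cannot lie strictly inside two different blocks. A non-simple block must contain at least one such internal boundary by the previous paragraph, and this gives an injection from non-simple blocks to the set of $K-1$ inter-class boundaries. Therefore there are at most $K-1 < K$ non-simple blocks at time $t$, and in particular at most $K$ cheap blocks.

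For the weight bound, I would just combine this cardinality bound with \Cref{defn:cheap}: each cheap block $B$ satisfies $w(B) < w(A^t)/(10K)$, and summing over at most $K$ cheap blocks yields a total strictly less than $w(A^t)/10$, as claimed. I do not foresee any real obstacle here; the only point that needs a moment of care is verifying that the definition of \emph{simple block} is indeed satisfied by any block whose machines all come from one speed class, which is an immediate calculation.
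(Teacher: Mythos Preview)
Your proposal is correct and follows essentially the same reasoning the paper gives in the sentences immediately preceding \Cref{cl:cheap}: a cheap block is non-simple, hence must span at least two speed classes, and since blocks form a contiguous partition of the machines at most one block can straddle each of the $K-1$ class boundaries. Your version just makes the single-class-implies-simple verification explicit, which the paper leaves as a remark; the weight bound then follows by summing, exactly as you describe.
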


Finally, we define the notion of a short block. 
\begin{definition}[Short block] \label{defn:short}
A block at time $t$ is {\em short} if it doesn't belong to the category of cheap, long, or simple blocks. 
\end{definition}

We show that the weight of short blocks can be charged  to those of long and simple blocks. 
We  begin by proving some  properties of short blocks.

\begin{claim}
\label{cl:short1}
Any short block contains machines of exactly two consecutive speed classes. 
\end{claim}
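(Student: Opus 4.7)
My approach would be to rule out both extremes: a short block cannot consist of machines from a single speed class, nor can it span three or more distinct classes. The fact that the classes represented in $m(B)$ are consecutive is automatic because $m(B)$ is a contiguous slice of the machines sorted by decreasing speed, and machines within the same class are interchangeable in that order. So the content of the argument is to pin the count down to exactly two.

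First I would dispatch the single-class case. If every machine of $m(B)$ lies in some speed class $\ell$, then each such machine has effective speed $\gamma\sigma_\ell$, and hence the average speed satisfies $s(B)/|B|=\gamma\sigma_\ell\in[\gamma\sigma_\ell/2,\,2\gamma\sigma_\ell]$. By \Cref{defn:simple}, $B$ is simple w.r.t.\ $\ell$, contradicting the assumption that $B$ is short.

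For the three-or-more-class case, suppose $m(B)$ spans consecutive classes $\ell_1<\ell_2<\cdots<\ell_p$ with $p\geq 3$. Because $m(B)$ is a contiguous slice of the speed-sorted list of machines, it must contain \emph{every} machine of each intermediate class $\ell_2,\dots,\ell_{p-1}$. I would then verify the structural condition (second bullet of \Cref{defn:long}) for an appropriate witness class $\ell^\dagger$ by splitting on whether $m(B)$ exhausts its slowest class: (a) if $m(B)$ does not contain all machines of class $\ell_p$, take $\ell^\dagger=\ell_{p-1}$, so $m(B)$ contains all of class $\ell^\dagger$ (hence at least half) and not all of class $\ell^\dagger+1=\ell_p$; (b) if $m(B)$ does contain all machines of class $\ell_p$, take $\ell^\dagger=\ell_p$ instead, so $m(B)$ contains all (hence at least half) of class $\ell_p$ and, because $\ell_p$ is the slowest class appearing in $B$, contains no machine of class $\ell_p+1$ whatsoever, trivially not all. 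Either way, the structural long-condition holds.

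Finally, since $B$ is short, $B$ is non-simple (else it would not be short) and not cheap, so $w(B)\geq w(A^t)/(10K)$; together with the structural witness $\ell^\dagger$ just produced, all three bullets of \Cref{defn:long} are satisfied, making $B$ long and contradicting shortness. Hence $p=2$, as claimed. I expect the only subtle point to be the shift from $\ell_{p-1}$ to $\ell_p$ in case~(b) above, where $m(B)$ happens to include its slowest class entirely; this is the main (albeit small) obstacle in making the case analysis airtight.
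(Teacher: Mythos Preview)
Your argument is correct and follows essentially the same skeleton as the paper: rule out single-class (simple) and rule out three-or-more classes (long), leaving exactly two. The one genuine difference is your case~(b). When $m(B)$ contains all machines of its slowest class $\ell_p$, you show $B$ is \emph{long} w.r.t.\ $\ell_p$; the paper instead shows such a block is \emph{simple} w.r.t.\ $\ell_p$, using the increasing-capacity assumption: since $|B|\ge m_{\ell_p}$ and $s(B)\le \gamma\sum_{i\le\ell_p} m_i\sigma_i\le 2\gamma m_{\ell_p}\sigma_{\ell_p}$, the average speed lies in $[\gamma\sigma_{\ell_p},\,2\gamma\sigma_{\ell_p}]$. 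Having disposed of this case up front, the paper never needs your (a)/(b) split and always uses the witness $\ell^\dagger=\ell_{p-1}$. Your route works too, with one small wrinkle to watch: if $\ell_p=K$ there is no class $\ell_p+1$, so the clause ``not all machines of class $\ell_p+1$'' in \Cref{defn:long} becomes degenerate (vacuously, $m(B)$ contains all zero such machines). The paper's simplicity argument sidesteps this edge case entirely; you could patch it the same way by observing that case~(b) already forces $B$ to be simple, contradicting shortness directly.
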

\begin{proof}
Let $B$ be a short block, and let $\ell$ be the highest speed class such that $m(B)$ contains a machine of speed class $\ell$. If it contains all the machines of speed class $\ell$, then the average speed
of the machines is at least $\gamma \speed_\ell$ and at most $\frac{\gamma(m_l \speed_\ell+ \ldots + m_1 \speed_1)}{m_\ell} \leq 2 \gamma \speed_\ell$, by the increasing capacity assumption. But then, this is a simple block. 

If it contains a machine of speed class $\ell-2$ as well, then it contains all the machines of speed class $\ell-1$. But then, this a long block with respect to speed class $\ell-1$. 
\end{proof}

We arrange the machines in decreasing order of speed. Each block corresponds to a set of consecutive machines. So we can talk about a left-to-right ordering on the blocks.

\begin{claim}
\label{cl:short2}
Let $B_1$ and $B_2$ be two consecutive short blocks at a particular time $t$. Then the total speed of machines which lie between $m(B_1)$ and $m(B_2)$ is at least 1/8 times that of machines in $m(B_2)$. Further, if $B$ is the left-most short block, then the total speed of machines in $m(B)$ is at most 8 times that of the machines which precede $m(B)$.  
\end{claim}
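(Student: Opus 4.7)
By \Cref{cl:short1}, each short block spans exactly two consecutive speed classes. Write $B_1$ as spanning classes $\ell_1, \ell_1+1$ with $a_1$ machines of class $\ell_1$ and $b_1$ of class $\ell_1+1$, and similarly $B_2$ spanning $\ell_2, \ell_2+1$ with counts $a_2, b_2$. Because blocks are contiguous in the decreasing-speed ordering and $B_1$'s final machines already lie in class $\ell_1+1$, the faster-class machines of $B_2$ cannot be of class $\ell_1$; hence $\ell_2 > \ell_1$. From $B_i$ being non-long (and non-cheap), I obtain the naive bounds $a_i < m_{\ell_i}/2$ and $b_i < m_{\ell_i+1}/2$, and from $B_2$ being non-simple with respect to $\ell_2+1$, the estimate $b_2 \sigma_{\ell_2+1} < a_2 \sigma_{\ell_2}$, which gives $s(B_2) < 2 a_2 \sigma_{\ell_2} < m_{\ell_2}\sigma_{\ell_2}$.

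The crucial step is to strengthen the bound on $b_1$. Non-simple of $B_1$ with respect to its slower class $\ell_1+1$ says the average speed exceeds $2\sigma_{\ell_1+1}$; after rearrangement (using $\sigma_{\ell_1}/\sigma_{\ell_1+1} \geq 64$ from \Cref{assump}) this yields $b_1 \sigma_{\ell_1+1} < a_1 \sigma_{\ell_1}$. Combining with $a_1 < m_{\ell_1}/2$ and the increasing-capacity inequality $m_{\ell_1+1}\sigma_{\ell_1+1} \geq 2m_{\ell_1}\sigma_{\ell_1}$ gives the improved bound $b_1 < m_{\ell_1+1}/4$.

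With these bounds in hand, I split on the gap $\ell_2 - \ell_1$. In \emph{Case A} ($\ell_2 \geq \ell_1+2$), $B_1$ contains no class-$\ell_2$ machine, so at least $m_{\ell_2}-a_2 > m_{\ell_2}/2$ class-$\ell_2$ machines lie strictly between $m(B_1)$ and $m(B_2)$, yielding intermediate speed at least $m_{\ell_2}\sigma_{\ell_2}/2 \geq s(B_2)/2$. In \emph{Case B} ($\ell_2 = \ell_1+1$), the class-$\ell_2$ machines partition as $b_1$ in $B_1$, $m_{\ell_2}-b_1-a_2$ between, and $a_2$ in $B_2$; using the improved $b_1 < m_{\ell_2}/4$ together with $a_2 < m_{\ell_2}/2$, at least $m_{\ell_2}/4$ class-$\ell_2$ machines lie in the middle, so intermediate speed is at least $m_{\ell_2}\sigma_{\ell_2}/4 \geq s(B_2)/4$. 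Either case yields the required $s(B_2)/8$ bound. For the second part, letting $B$ be the leftmost short block spanning $\ell, \ell+1$ with $a < m_\ell/2$ class-$\ell$ machines, all machines of classes $1,\ldots,\ell-1$ and the $m_\ell-a>m_\ell/2$ fastest class-$\ell$ machines precede $m(B)$, contributing speed $\geq m_\ell\sigma_\ell/2$; since $s(B) < m_\ell\sigma_\ell$, we get $s(B) \leq 2\cdot(\text{preceding speed})$, comfortably within the $8\times$ factor.

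The main obstacle is the strengthened bound $b_1 < m_{\ell_1+1}/4$: the naive non-long bound $b_1 < m_{\ell_1+1}/2$ is insufficient for Case B, where $b_1 + a_2$ could otherwise come within a single machine of exhausting class $\ell_2$. Closing this gap requires invoking both the non-simple inequality for $B_1$ and the increasing-capacity assumption in tandem, which is the only non-routine ingredient of the proof.
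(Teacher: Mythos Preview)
Your argument is correct and follows essentially the same strategy as the paper: both proofs obtain a quarter-bound on the number of $B_1$'s slower-class machines by combining non-simplicity with the increasing-capacity assumption, pair it with the non-long half-bound on $B_2$'s faster-class machines, and then locate sufficiently many intermediate machines of a single class. The only structural difference is which class you count in the middle: you count machines of $B_2$'s faster class $\ell_2$ and compare against $s(B_2) < \gamma m_{\ell_2}\sigma_{\ell_2}$, whereas the paper counts machines of $B_1$'s slower class and then (apparently by a slip) bounds $s(B_1)$ rather than $s(B_2)$ in its final line. Your choice aligns directly with the claim as stated and with how the claim is used in the subsequent lemma, so in that respect your write-up is cleaner; otherwise the two proofs are the same.
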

\begin{proof}
Suppose $B_1$ is to the left of $B_2$. Let $m(B_1)$ contain machines of class $\ellone-1$ and $\ell_1$, and $m(B_2)$ contain machines of class $\ell_2-1$ and $\ell_2$ (Claim~\ref{cl:short1}). Note that $\ell_2 > \ell_1$. 

We first argue that $m(B_1)$ contains at most $m_\ellone/4$ machines of speed class $\ellone$. Suppose not. Then the average speed of the machines in $m(B_1)$ is at most 
$ \frac{\gamma(m_1 \speed_1 + \ldots + m_{\ellone} \speed_\ellone)}{m_{\ellone}/4} 
\leq 8 \gamma \speed_\ellone , $
where the inequality follows from the increasing capacity assumption. 
Since the average speed of these machines is clearly at least $\speed_\ellone$, it follows that $B_1$ is a simple block, which is a contradiction. Since $m(B_2)$ can contain at most $m_{\elltwo-1}/2$ machines of class $\elltwo-1 \geq \ellone$ (otherwise it will be a long block), it follows that there are at least $m_\ellone/4$ machines of speed class $\ellone$ between $m(B_1)$ and $m(B_2)$. The total speed of these machines is $\gamma m_\ellone \speed_\ellone/4$. The increasing capacity assumption implies that the total speed of the machines in $m(B_1)$ is at most $2\gamma  m_\ellone \speed_\ellone.$ This implies the first statement in the claim. The second statement follows in a similar  manner. 
\end{proof}

The above claim implies the following:
\begin{lemma}
\label{lem:counting}
Let $B_1$ and $B_2$ be two consecutive short  blocks at time $t$. Let $B$ be the set of tasks belonging to the blocks lying between these two blocks. Then 
$w(B_2) \leq 8 \cdot w(B))$. Similarly, if  $B_1$ is the left-most short expensive block and $B$ is the set of tasks belonging to the blocks lying to the left of $B_1$, then $w(B_2) \leq 8 \cdot w(B))$.
\end{lemma}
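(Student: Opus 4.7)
My plan is to exploit the fundamental identity relating a block's freezing moment to its weight and machine speed. Specifically, when a block $B^t_r$ freezes, the constraint $(\star)$ is tight on it, so $\sum_{v \in B^t_r} L^t_v = \gamma \cdot s(B^t_r)$. On the other hand, all tasks in $B^t_r$ freeze simultaneously at the same moment $\tau^t_r$ by \Cref{cl:rate}, and since $L^t_v = \tw^t(v) \cdot \tau^t_r$ with $\sum_{v \in B^t_r} \tw^t(v) = w(B^t_r)$, I get
\[ \tau^t_r \cdot w(B^t_r) \;=\; \gamma \cdot s(B^t_r), \qquad\text{i.e.,}\qquad \frac{s(B^t_r)}{w(B^t_r)} \;=\; \frac{\tau^t_r}{\gamma}. \]
Because blocks are numbered in the order they freeze and occupy machines from fastest to slowest accordingly, this ratio is monotonically non-decreasing as we move from $B_1$ rightward through the intermediate blocks to $B_2$.

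Now I would argue the main statement. For any block $B_i$ lying strictly between $B_1$ and $B_2$ in the left-to-right ordering, $B_i$ freezes no later than $B_2$, so $\tau_{B_i} \leq \tau_{B_2}$, which by the identity above gives $s(B_i)/w(B_i) \leq s(B_2)/w(B_2)$, equivalently $w(B_i) \geq s(B_i) \cdot w(B_2)/s(B_2)$. Summing this inequality across all middle blocks whose tasks make up $B$ yields $w(B) \geq s(B) \cdot w(B_2)/s(B_2)$. Invoking the first statement of \Cref{cl:short2}, which guarantees $s(B) \geq s(B_2)/8$, I conclude $w(B) \geq w(B_2)/8$, i.e., $w(B_2) \leq 8 \, w(B)$.

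For the second half of the lemma, I would apply an entirely symmetric argument. If $B_1$ is the left-most short block and $B$ consists of the tasks in the blocks that precede $m(B_1)$, then each such block $B_i$ freezes before $B_1$, giving $s(B_i)/w(B_i) \leq s(B_1)/w(B_1)$. Summing over $B_i$ and applying the second statement of \Cref{cl:short2} (which gives $s(B_1) \leq 8 \, s(B)$) yields $w(B_1) \leq 8 \, w(B)$, which is what the lemma claims (reading the $B_2$ in the statement as a typo for $B_1$).

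The main technical point to confirm is simply the monotonicity of $\tau^t_r$ in the left-to-right block ordering, which is immediate from the algorithm's definition (blocks are indexed in freezing order, and the machines they occupy extend from the fastest through to the slowest). No other new machinery is needed: the identity $\tau^t_r \cdot w(B^t_r) = \gamma s(B^t_r)$ plus \Cref{cl:short2} suffice, and no appeal to \Cref{cor:rate1} is required because monotonicity of the ratio $s(B)/w(B)$ in the freezing order is enough.
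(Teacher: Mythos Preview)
Your argument is correct and is essentially the same as the paper's. Both proofs reduce to the inequality $\frac{w(B_2)}{s(B_2)} \le \frac{w(B)}{s(B)}$ and then invoke \Cref{cl:short2}; the paper obtains this inequality via \Cref{cor:rate1}(ii), whereas you derive it directly from the block identity $\tau_r \cdot w(B^t_r) = \gamma\, s(B^t_r)$ together with the monotonicity of $\tau_r$ in the freezing (left-to-right) order---which is exactly what \Cref{cor:rate1}(ii) encodes at the block level.
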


\begin{proof}
We apply the second statement in Corollary~\ref{cor:rate1}, with $V''$ being the set of tasks in $B$, and $v$ is a task in $B_2$. We get 
$ \frac{\tw^t(v)}{L^t_v} ~\leq~ \frac{w(B_2)}{s(B_2)} . $
This implies that 
$$ \frac{w(B_2)}{s(B_2)} ~~\leq~~ \frac{\sum_{v \in B_2} \tw^t(v)}{\sum_{v \in B_2} L^t_v} ~~\leq~~ \frac{w(B_2)}{s(B_2)} \ . $$
The first statement in the lemma now follows from Claim~\ref{cl:short2}. The second statement follows similarly. 
\end{proof}

We are now ready to show that the total weight of jobs in $A^t$ is dominated by those belonging to simple or long blocks.
 Recall, we have defined four kinds of blocks -- let $\cB^t_{\textsf{simple}}$, $\cB^t_{\textsf{long}}$, $\cB^t_{\textsf{short}}$, and $\cB^t_{\textsf{cheap}}$ denote the set of simple, long, short, and cheap blocks at time $t$, respectively. 

\begin{lemma}
\label{lem:final}
We have $w(A^t) \leq 90 \cdot   \big(w(\cB^t_{\textsf{simple}}) + w(\cB^t_{\textsf{long}})\big)$.  
\end{lemma}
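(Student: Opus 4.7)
The plan is to combine the two bounds we already have: Claim~\ref{cl:cheap} says cheap blocks contribute at most a $1/10$ fraction of $w(A^t)$, and Lemma~\ref{lem:counting} lets us charge each short block to the non-short blocks immediately preceding it. Together these should let short and cheap blocks be absorbed into a multiple of $w(\cB^t_{\textsf{simple}}) + w(\cB^t_{\textsf{long}})$.

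Concretely, I would first partition the blocks at time $t$ into the four types and write
\[
w(A^t) \;=\; w(\cB^t_{\textsf{simple}}) + w(\cB^t_{\textsf{long}}) + w(\cB^t_{\textsf{short}}) + w(\cB^t_{\textsf{cheap}}).
\]
The key step is to bound $w(\cB^t_{\textsf{short}})$. List the short blocks from left to right as $B^\star_1, B^\star_2, \ldots, B^\star_k$. For each $i \geq 2$, let $R_i$ be the set of tasks in the (non-short) blocks strictly between $B^\star_{i-1}$ and $B^\star_i$, and let $R_1$ be the tasks in the blocks lying to the left of $B^\star_1$. Lemma~\ref{lem:counting} gives $w(B^\star_i) \le 8\, w(R_i)$ for every $i$. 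Since the sets $R_1, \ldots, R_k$ are disjoint and each $R_i$ consists only of simple, long, or cheap blocks, summing yields
\[
w(\cB^t_{\textsf{short}}) \;\leq\; 8 \sum_{i=1}^k w(R_i) \;\leq\; 8\,\bigl(w(\cB^t_{\textsf{simple}}) + w(\cB^t_{\textsf{long}}) + w(\cB^t_{\textsf{cheap}})\bigr).
\]

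Substituting back and using Claim~\ref{cl:cheap} (i.e., $w(\cB^t_{\textsf{cheap}}) \leq w(A^t)/10$) gives
\[
w(A^t) \;\leq\; 9\bigl(w(\cB^t_{\textsf{simple}}) + w(\cB^t_{\textsf{long}})\bigr) + 9\,w(\cB^t_{\textsf{cheap}}) \;\leq\; 9\bigl(w(\cB^t_{\textsf{simple}}) + w(\cB^t_{\textsf{long}})\bigr) + \tfrac{9}{10}\,w(A^t),
\]
and rearranging produces $w(A^t) \leq 90\,(w(\cB^t_{\textsf{simple}}) + w(\cB^t_{\textsf{long}}))$.

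The only mild subtlety is the boundary case where $B^\star_1$ is the leftmost short block: Lemma~\ref{lem:counting} explicitly covers this by allowing the charging region to be everything to the left of $B^\star_1$, so no additional argument is needed. I do not expect any step to be a real obstacle here, since the heavy lifting (establishing that every short block is dominated by the non-short ``interval'' to its left) has already been done in Claim~\ref{cl:short2} and Lemma~\ref{lem:counting}; the present lemma is essentially the bookkeeping that combines those charging inequalities with the cheap-block count from Claim~\ref{cl:cheap}.
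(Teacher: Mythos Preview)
Your proposal is correct and follows essentially the same route as the paper: partition the blocks into the four types, use Lemma~\ref{lem:counting} to charge the short blocks to the disjoint stretches of non-short blocks preceding them (yielding $w(\cB^t_{\textsf{short}}) \le 8(w(\cB^t_{\textsf{simple}})+w(\cB^t_{\textsf{long}})+w(\cB^t_{\textsf{cheap}}))$), apply Claim~\ref{cl:cheap}, and rearrange. The paper's proof is terser---it simply asserts the short-block inequality as a consequence of Lemma~\ref{lem:counting}---whereas you spell out the disjoint charging regions $R_i$, but the substance is identical.
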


\begin{proof}
 Consider a time $t$. 
 For a set $\cB$ of blocks, define $w(\cB)$ as the total weight of the blocks in it. 
 We know from \Cref{cl:cheap} that $w(\cB^t_{\textsf{cheap}}) \leq w(A^t)/10$. Further, Lemma~\ref{lem:counting} implies that $w(\cB^t_{\textsf{short}}) \leq 8\big(w(\cB^t_{\textsf{simple}}) + w(\cB^t_{\textsf{long}}) + w(\cB^t_{\textsf{cheap}}) \big).$ Combining, we get that $ w(A^t) =  w(\cB^t_{\textsf{simple}}) + w(\cB^t_{\textsf{long}}) + w(\cB^t_{\textsf{cheap}}) + w(\cB^t_{\textsf{short}})$ satisfies
\[
 w(A^t)  ~~\leq ~~  9 \big(w(\cB^t_{\textsf{simple}}) + w(\cB^t_{\textsf{long}})\big) + 9 w(\cB^t_{\textsf{cheap}}) ~~
 \leq~~  9 \big(w(\cB^t_{\textsf{simple}}) + w(\cB^t_{\textsf{long}})\big) + \frac{9 w(A^t)}{10}  \  . 
\]
\end{proof}

\subsubsection{Putting it Together}
Let $\cA$ denote our algorithm and $C^\cA$ denote the sum of weighted completion time of the jobs. We first relate $\beta_{i,t}$ variables with $C^\cA$.
\begin{claim} \label{cl:beta} The duals $\beta_{i,t}$  satisfy
$$ \sum_{i,t} \beta_{i,t} ~=~ \frac{C^\cA}{cK \log K}\ . $$
\end{claim}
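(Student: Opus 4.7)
The plan is to prove this directly by unfolding the definition of $\beta_{i,t}$ and swapping orders of summation. First I would fix a time $t$ and group the $m$ machines by their speed class. For each class $\ell$, there are exactly $m_\ell$ machines, each contributing $\beta_{i,t} = \frac{w(A^t)}{K^2 \log K \cdot m_\ell}$, so the contribution of all class-$\ell$ machines at time $t$ is precisely $\frac{w(A^t)}{K^2 \log K}$. Summing over the $K$ speed classes at time $t$ yields
\[
\sum_i \beta_{i,t} \;=\; \sum_{\ell=1}^K \frac{w(A^t)}{K^2 \log K} \;=\; \frac{w(A^t)}{K \log K}.
\]

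Next, I would sum over $t$ and use the standard identity
\[
\sum_t w(A^t) \;=\; \sum_t \sum_{j \in A^t} w_j \;=\; \sum_j w_j \cdot \bigl|\{t : j \in A^t\}\bigr| \;=\; \sum_j w_j C_j \;=\; C^\cA,
\]
since job $j$ is alive at exactly the times $t \in [0, C_j)$ under algorithm $\cA$ (with release dates $0$). Combining the two displays gives $\sum_{i,t} \beta_{i,t} = \frac{C^\cA}{K \log K}$, which establishes the claim with the constant $c = 1$.

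There is no real obstacle here; this is a bookkeeping computation that only uses the definition of $\beta_{i,t}$, the fact that there are $K$ speed classes with $m_\ell$ machines in class $\ell$, and the rewrite of $\sum_t w(A^t)$ as the weighted completion time. The step where one should be careful is simply to note that the double counting identity for $\sum_t w(A^t)$ holds because $A^t$ is defined as the set of jobs alive under $\cA$ at time $t$, so each job $j$ contributes $w_j$ for exactly $C_j$ time units.
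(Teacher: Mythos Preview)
Your proposal is correct and follows essentially the same approach as the paper: group machines by speed class, observe that each class contributes $\frac{w(A^t)}{K^2\log K}$ at time $t$, sum over the $K$ classes and then over $t$, and use $\sum_t w(A^t) = C^\cA$. The only difference is that you spell out the double-counting identity for $\sum_t w(A^t)$ explicitly, which the paper simply asserts.
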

\begin{proof}
 This follows directly from the definition of $\beta_{i,t}.$ Let $M_\ell$ denote the machines of class $\ell$. Then 
 $$ \sum_{i,t} \beta_{i,t} ~~=~~ \sum_t \sum_\ell \sum_{i \in M_\ell} \frac{w(A^t)}{cK^2 \log K \cdot m_\ell} ~~=~~ \sum_t \frac{w(A^t)}{cK \log K}\ .$$
 
 Since $C^\cA = \sum_t w(A^t)$, the desired result follows. 
\end{proof}

We now relate the $\alpha_{j,v}$ variables to the objective function. 
\begin{claim}
\label{cl:alpha}
We have
$$ \sum_{j,v} \alpha_{j,v} ~\geq~ \frac{C^\cA}{1800 \cdot K \log K}\ . $$
\end{claim}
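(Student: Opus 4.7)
The plan is to fix a time $t$, lower bound the contribution to $\sum_{j,v}(\alpha'_{j,v,t}+\alpha''_{j,v,t})$ by an appropriate fraction of $w(A^t)$, and then sum over $t$ using $C^\cA = \sum_t w(A^t)$. The two cases (simple vs.\ long) must be handled separately and then stitched together through Lemma~\ref{lem:final}.

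First I would handle the contribution from long blocks. For a job $j$ whose alive tasks belong to a long block $B$ at time $t$, by the rate assignment we have $L_v^t = \tw^t(v)\,s(B)/w(B)$ for each $v \in T^t(j)$, so summing over $v$ gives $\sum_{v \in T^t(j)} L_v^t = w_j\, s(B)/w(B)$. Plugging this into the definition $\alpha''_{j,v,t} = L_v^t\, w(B)/(12 K\log K\cdot s(B))$ and summing over $v \in T^t(j)$ yields $\sum_v \alpha''_{j,v,t} = w_j/(12 K\log K)$. Summing over the jobs participating in $B$ gives $w(B)/(12 K\log K)$, and summing over all long blocks at time $t$ gives $\sum_{j,v}\alpha''_{j,v,t} \geq w(\cB^t_{\textsf{long}})/(12 K\log K)$.

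Next, for the simple contribution, I would note that at any time $t$ each simple job $j$ is simple w.r.t.\ a unique speed class $\ell$ (the one whose rate band contains the common rate of $T^t(j)$). For that $\ell$, summing $\alpha'_{j,v,t} = w_j/(4K\, n^{\last_{\ell,j}}(j))$ over $v \in T^{\last_{\ell,j}}(j)$ gives $w_j/(4K)$. By Lemma~\ref{lem:simpleb}, for every simple block $B$ (w.r.t.\ some $\ell$) the subset $B'$ of tasks whose jobs are themselves simple w.r.t.\ $\ell$ satisfies $w(B) \leq 5\, w(B')$. Summing over all simple blocks at time $t$, the total weight of jobs that are simple at $t$ is at least $w(\cB^t_{\textsf{simple}})/5$. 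Therefore $\sum_{j,v}\alpha'_{j,v,t} \geq w(\cB^t_{\textsf{simple}})/(20K)$.

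Finally, I would combine the two bounds. Since $\log K \geq 1$, the weaker of the two denominators is $20K\log K$, and adding gives
\[
\sum_{j,v}(\alpha'_{j,v,t}+\alpha''_{j,v,t}) \;\geq\; \frac{w(\cB^t_{\textsf{simple}}) + w(\cB^t_{\textsf{long}})}{20 K\log K}.
\]
Applying Lemma~\ref{lem:final} yields $w(A^t) \leq 90(w(\cB^t_{\textsf{simple}}) + w(\cB^t_{\textsf{long}}))$, so the right-hand side is at least $w(A^t)/(1800 K\log K)$. Summing over $t$ and using $\alpha_{j,v} = \sum_t(\alpha'_{j,v,t}+\alpha''_{j,v,t})$ together with $C^\cA = \sum_t w(A^t)$ gives the claim. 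The main obstacle is really just keeping track of the constants: the identity $\sum_v \alpha''_{j,v,t} = w_j/(12K\log K)$ is the critical calculation for long blocks, while for simple blocks the key non-trivial input is Lemma~\ref{lem:simpleb} which lets us pass from the weight of a simple block to the weight of the simple jobs inside it, with constant loss.
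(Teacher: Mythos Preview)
Your proposal is correct and follows essentially the same approach as the paper: compute $\sum_v \alpha''_{j,v,t} = w_j/(12K\log K)$ for jobs in long blocks, use Lemma~\ref{lem:simpleb} to get $\sum_{j,v}\alpha'_{j,v,t}\ge w(\cB^t_{\textsf{simple}})/(20K)$, and then combine via Lemma~\ref{lem:final}. The only cosmetic remark is that your aside about each simple job being simple with respect to a \emph{unique} class is not needed (and, strictly speaking, the rate bands for consecutive classes can overlap); all that matters here is that a simple job is simple with respect to \emph{some} class, which is immediate.
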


\begin{proof}
 Let $\cB_{\textsf{simple}}^t$ and $\cB_{\textsf{long}}^t$ denote the set of simple and long blocks respectively at time $t$. For a set $\cB$ of blocks at a particular time $t$ and job $j$, let $I[j \in \cB]$ be the indicator variable denoting whether the alive tasks of $j$ belong to a block in $\cB$. 
 
 We first consider the $\alpha'_{j,v}$ variables. Consider a time $t$ such that $j$ is simple with respect to a speed class $\ell$ at this time. Then, by definition, 
 $ \sum_{v \in T^t(j)} \alpha'_{j,v,t} = \frac{w_j}{4K} . $
 Therefore,
 $$ \sum_{j,v} \alpha'_{j,v}~~ =~~ \sum_t \sum_{j,v} \alpha'_{j,v,t}~~ =~~ 
 \sum_t \sum_j \frac{w_j I[j \in \cB^t_{\textsf{simple}}]}{4K}~~ \geq~~ \sum_t \frac{w(\cB^t_{\textsf{simple}})}{20K}\ , $$
 where the last statement follows from Lemma~\ref{lem:simpleb}.

 We now consider $\alpha_{j,v}''$ variables. Recall that if the tasks $T^t(j)$ of a job $j$ belong to a long block $B$ at a time $t$, then $\alpha''_{j,v,t}$ is defined as $\frac{L^t_v \cdot w(B)}{12 K \log K s(B)}$ for every task $v \in T^t(j)$. 
 Since $L^t_v = \frac{\tw^t(v) \cdot s(B)}{w(B)}$, we get $\sum_{v \in T^t(j)} \alpha''_{j,v,t} = \frac{w_j}{12 K \log K} . $
 Thus, 
 $$ \sum_{j,v} \alpha''_{j,v} ~~=~~ \sum_t \sum_{j,v} \alpha''_{j,v,t} ~~=~~ 
 \sum_t \frac{w_j \cdot I[j \in \cB^t_{\textsf{long}}]}{12 K \log K} ~~=~~ \sum_t \frac{w(\cB^t_{\textsf{long}})}{12 K \log K}\ . $$
 
 Combining the above two observations with Lemma~\ref{lem:final}, we get 
$ \sum_{j,v} \alpha_{j,v} \geq \frac{w(A^t)}{1800 \cdot K \log K} .  	$
\end{proof}

We are now ready to bound the competitive ratio of Algorithm $\cA$. 
\begin{theorem}
 \label{thm:final}
 The algorithm $\cA$ is $O(K^3 \log^2 K)$-competitive. 
\end{theorem}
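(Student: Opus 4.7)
I would assemble the theorem in three short steps: combine Claim~\ref{cl:alpha} and Claim~\ref{cl:beta} to get a positive dual value, invoke weak LP duality and Claim~\ref{claim:feasibility} to lower bound $\opt$, and then apply Proposition~\ref{prop:incap} to remove the Increasing-Capacity Assumption.

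First, by Claim~\ref{cl:alpha} we have $\sum_{j,v}\alpha_{j,v}\geq C^\cA/(1800\,K\log K)$, while by Claim~\ref{cl:beta} we have $\sum_{i,t}\beta_{i,t} = C^\cA/(c\,K\log K)$, where the constant $c$ absorbs the numerical constant in the definition of $\beta_{i,t}$. Taking that constant to be a sufficiently large absolute number (comfortably exceeding $1800$) and re-checking Corollaries~\ref{cor:d1} and~\ref{cor:dang}, which only require $\gamma$ to exceed a constant multiple of $cK\log K$, it suffices to use $\gamma = \Theta(K\log K)$ speedup. With this choice, the dual objective satisfies
$$\sum_{j,v}\alpha_{j,v} \;-\; \sum_{i,t}\beta_{i,t} \;\geq\; \Omega\!\left(\frac{C^\cA}{K\log K}\right).$$
Weak LP duality then gives that this dual value is at most the LP optimum, and by Claim~\ref{claim:feasibility} the LP optimum is at most $2\,\opt$. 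Rearranging,
$$C^\cA \;=\; O(K\log K)\cdot \opt$$
on any instance satisfying Assumption~\ref{assump} with $\gamma = \Theta(K\log K)$ speedup. Finally, Proposition~\ref{prop:incap} lets us drop the Assumption at the cost of an additional $O(\gamma K) = O(K^2\log K)$ factor, yielding an overall ratio of $O(K\log K)\cdot O(K^2\log K) = O(K^3\log^2 K)$ for arbitrary instances.

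The main obstacle in this plan is not the final assembly but has already been surmounted by the preceding lemmas: simultaneously satisfying dual feasibility across the four block categories (simple, long, short, cheap) while keeping the dual value as large as $\Omega(C^\cA/(K\log K))$. The critical step is Lemma~\ref{lem:final}, which charges $w(A^t)$ to the weight of simple and long blocks alone, so that Claim~\ref{cl:alpha} can then lower bound $\sum_{j,v}\alpha_{j,v}$ in terms of $C^\cA$. Once that lemma is in hand, the endgame I have sketched is essentially bookkeeping: pick the hidden constant in $\beta_{i,t}$ large enough, observe that this only increases the required speedup by a constant factor, and chain the inequalities.
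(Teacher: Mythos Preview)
Your proposal is correct and follows essentially the same route as the paper: combine Claims~\ref{cl:alpha} and~\ref{cl:beta} (with the constant in $\beta$ chosen large enough) to obtain a dual objective of $\Omega(C^\cA/(K\log K))$, invoke weak duality together with Claim~\ref{claim:feasibility}, and then absorb the $\gamma=\Theta(K\log K)$ speedup and the increasing-capacity preprocessing via Proposition~\ref{prop:incap}. The only cosmetic difference is that the paper separates the $\gamma$ and $O(K)$ penalties rather than bundling them as $O(\gamma K)$, but the arithmetic and logic are identical.
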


\begin{proof}
 We know that $(\alpha:=\alpha'+\alpha''~,~ \beta ~,~ \delta:= \delta'+\delta'')$  is a feasible dual LP solution:
 \Cref{cor:dang} along with \Cref{cor:d1} show 
feasibility of dual constraint~(\ref{eq:dualnew}), and  \Cref{lem:rootd} along with \S\ref{sec:feasibSimpleDuals} prove feasibility of  
dual constraint~(\ref{eq:d2}). Similarly,~\ref{cl:dualsums} and the definition of $\alpha'$ shows the feasibility of~\Cref{eq:dualsum}. 
 So the objective function value is a valid lower bound on the objective value of the optimal solution. By Claim~\ref{cl:alpha} and Claim~\ref{cl:beta}, we see that 
 $$ \sum_{j,v} \alpha_{j,v} - \sum_{i,t} \beta_{i,t} ~=~ \Omega
 \left( \frac{C^\cA}{K \log K} \right) \  . $$
 Therefore, $C^\cA$ is at most $O(K \log K)$ times the optimal cost. However, we had also assumed a speed-up of $\gamma = O(K \log K)$, and another speed-up of $O(K)$ for the increasing capacity \Cref{assump}. Combining everything, we see that the algorithm is $O(K^3 \log^2 K)$-competitive. 
\end{proof}

\subsection{Extension to Arbitrary Release Dates}
\label{sec:genrj}
We now highlight the main steps needed to extend the above analysis to the more general case when jobs can have arbitrary release dates. Recall that for a time $t$, $A^t$ denotes the set of jobs which have not finished processing at time $t$. We extend this definition to the general case by defining $A^t$ to be the set of jobs which have been released by time $t$, but haven't finished processing yet. A related quantity, $U^t$, is the set of all jobs which have not finished processing till time $t$ -- this includes $A^t$ and the jobs which have not been released till time $t$. Note that the total weighted completion time of a schedule is $\sum_t w(U^t).$ 

The definition of $\beta_{i,t}$ remains unchanged except that we replace $w(A^t)$ by $w(U^t)$. The definitions of $\delta'$ and $\delta''$ do not change. For a job $j$, a task $v \in T(j)$ and time $t \geq r_j$, $\alpha_{j,t}$ is 

\else 

\fi

\section{Discussion}
Several interesting problems remain open. (i) Can we close the gap
between lower bound of $\Omega(K)$ and upper bound of $O(K^3 \log^2 K)$?
(ii) Can we prove an analogous result for weighted \emph{flow-time}
(with speed augmentation)? (iii) Can we generalize this result to the
unrelated machines setting? (iv) Our lower bound of
$\Omega(K)$-competitive ratio relies on non-clairvoyance; can we prove
a better bound if the processing times of tasks are known at their
arrival times? 

{\small \bibliography{bib}}

\newcommand{\etalchar}[1]{$^{#1}$}
\begin{thebibliography}{BMP{\etalchar{+}}10}

\bibitem[AC08]{4536445}
C.~{Anglano} and M.~{Canonico}.
\newblock Scheduling algorithms for multiple bag-of-task applications on
  desktop grids: A knowledge-free approach.
\newblock In {\em 2008 IEEE International Symposium on Parallel and Distributed
  Processing}, pages 1--8, 2008.

\bibitem[ALLM16]{ALLM}
Kunal Agrawal, Jing Li, Kefu Lu, and Benjamin Moseley.
\newblock Scheduling parallel {DAG} jobs online to minimize average flow time.
\newblock In {\em Proceedings of SODA}, pages 176--189, 2016.

\bibitem[BMP{\etalchar{+}}10]{BenoitMPRV10}
Anne Benoit, Loris Marchal, Jean{-}Francois Pineau, Yves Robert, and
  Fr{\'{e}}d{\'{e}}ric Vivien.
\newblock Scheduling concurrent bag-of-tasks applications on heterogeneous
  platforms.
\newblock {\em {IEEE} Trans. Computers}, 59(2):202--217, 2010.

\bibitem[BN15]{BNF15}
Abbas Bazzi and Ashkan Norouzi{-}Fard.
\newblock Towards tight lower bounds for scheduling problems.
\newblock In {\em Proceedings of ESA}, pages 118--129, 2015.

\bibitem[CS99]{ChudakS99}
Fabi{\'{a}}n~A. Chudak and David~B. Shmoys.
\newblock Approximation algorithms for precedence-constrained scheduling
  problems on parallel machines that run at different speeds.
\newblock {\em J. Algorithms}, 30(2):323--343, 1999.

\bibitem[CSV09]{CorreaSV09}
Jos{\'{e}}~R. Correa, Martin Skutella, and Jos{\'{e}} Verschae.
\newblock The power of preemption on unrelated machines and applications to
  scheduling orders.
\newblock In {\em Proceedings of {APPROX}/{RANDOM}}, pages 84--97, 2009.

\bibitem[GGKS19]{GGKS-ICALP19}
Naveen Garg, Anupam Gupta, Amit Kumar, and Sahil Singla.
\newblock Non-clairvoyant precedence constrained scheduling.
\newblock In {\em Proceedings of ICALP}, pages 63:1--63:14, 2019.

\bibitem[HSSW97]{Hall}
Leslie~A. Hall, Andreas~S. Schulz, David~B. Shmoys, and Joel Wein.
\newblock Scheduling to minimize average completion time: off-line and on-line
  approximation algorithms.
\newblock {\em Math. Oper. Res.}, 22(3):513--544, 1997.

\bibitem[Li17]{Li17}
Shi Li.
\newblock Scheduling to minimize total weighted completion time via
  time-indexed linear programming relaxations.
\newblock In {\em Proceedings of FOCS}, pages 283--294. 2017.

\bibitem[MK15]{MoschakisK15}
Ioannis~A. Moschakis and Helen~D. Karatza.
\newblock Multi-criteria scheduling of bag-of-tasks applications on
  heterogeneous interlinked clouds with simulated annealing.
\newblock {\em J. Syst. Softw.}, 101:1--14, 2015.

\bibitem[MQS98]{MQS}
Alix Munier, Maurice Queyranne, and Andreas~S. Schulz.
\newblock Approximation bounds for a general class of precedence constrained
  parallel machine scheduling problems.
\newblock In {\em Proceedings of IPCO}, volume 1412, pages 367--382. 1998.

\bibitem[QS01]{QueyranneS01}
Maurice Queyranne and Maxim Sviridenko.
\newblock A (2+epsilon)-approximation algorithm for generalized preemptive open
  shop problem with minsum objective.
\newblock In {\em Proceedings of {IPCO}}, volume 2081, pages 361--369, 2001.

\bibitem[RS08]{RS}
Julien Robert and Nicolas Schabanel.
\newblock Non-clairvoyant scheduling with precedence constraints.
\newblock In {\em Proceedings of SODA}, pages 491--500, 2008.

\end{thebibliography}

\appendix
\ifFULL
\section{Missing Proofs of~\Cref{sec:intro}}
 \label{sec:AppendixOne}
 
 Here we sketch why for general DAGs on related machines, every non-clairvoyant algorithm has a large competitive ratio.
 
 \begin{theorem} 
 Any non-clairvoyant algorithm for related machines DAG scheduling must have $\Omega \big( \frac{\log m}{\log \log m} \big)$-competitive ratio. 
 \end{theorem}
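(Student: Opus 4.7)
The plan is to extend the $\Omega(K)$ hardness of \Cref{prop:lb} from a single star DAG to a single tree DAG, exploiting the precedence structure and the absence of the increasing-capacity restriction to push the number of speed classes all the way up to $K=\Theta(\log m/\log\log m)$.

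\emph{Construction.} Fix $b:=\lceil \log m\rceil$ and $K := \lceil \log m / \log b\rceil = \Theta(\log m/\log\log m)$, so that $b^K \geq m$. We use $K$ speed classes, tuning the machine counts $m_\ell$ and speeds $\sigma_\ell$ so that (i) class $\ell$ is the natural home for a task of size $p_\ell := \sigma_\ell$, (ii) $\sum_\ell m_\ell = m$, and (iii) the counts $m_\ell$ deliberately violate the increasing-capacity part of \Cref{assump}, which is permitted since the lower-bound claim has no such requirement. The adversarial input is a single job whose task-DAG is a balanced $b$-ary rooted tree of depth $K$. Applying Yao's principle, we draw a root-to-leaf path $\pi$ uniformly at random; the depth-$\ell$ task on $\pi$ receives size $p_\ell$, while every task off $\pi$ is assigned a tiny positive size $\eps$ chosen small enough that the off-$\pi$ subtrees cascade quickly but large enough that the algorithm cannot ``probe'' them away for free.

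\emph{Bounding $\opt$ and the online cost.} Knowing $\pi$, the offline optimum routes each depth-$\ell$ task on $\pi$ to a class-$\ell$ machine and uses the remaining machines to absorb the off-$\pi$ cascade, giving $\opt = \Theta(K)$. To prove the lower bound, I would show that any deterministic non-clairvoyant algorithm $\cA$ has expected makespan $\Omega(K^2)$ against this distribution, which by Yao's principle yields the desired $\Omega(K) = \Omega(\log m/\log\log m)$ competitive ratio against randomized algorithms facing an oblivious adversary. The proof proceeds by a level-by-level amortization: when $\cA$ first turns to the $b$ depth-$\ell$ candidates spawned by the just-completed depth-$(\ell-1)$ task on $\pi$, those $b$ candidates are statistically indistinguishable under $\cA$'s transcript (their subtrees are isomorphic and their sizes are hidden), so by symmetry $\cA$ can direct only a $1/b$ fraction of whatever fast capacity it brings to bear at level $\ell$ at the true on-$\pi$ candidate. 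Calibrated against the $(\sigma_\ell, m_\ell)$ pairs of our construction, this forces the on-$\pi$ task at each level to take $\Omega(K)$ times $\opt$'s per-level cost of $1$, and summing across the $K$ levels gives the claimed $\Omega(K^2)$ bound.

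\emph{Main obstacle.} The delicate step is the per-level lower bound. The non-clairvoyant algorithm is free to hedge the $b$ candidates across multiple speed classes and to preempt and migrate freely, so one must rule out any clever strategy that escapes the $1/b$ bottleneck. I would handle this via a convexity/symmetrization reduction: conditioned on $\cA$'s past transcript, the on-$\pi$ candidate is uniformly distributed over the $b$ siblings, hence the expected total work $\cA$ dedicates to the on-$\pi$ candidate by any time $t$ is exactly a $1/b$ fraction of the total work $\cA$ dedicates to all $b$ candidates by time $t$; the latter is then bounded by the total class-$\ell$ capacity times the elapsed time, using the tuning of $(\sigma_\ell, m_\ell)$. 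A secondary technical difficulty is choosing $\eps$ so that the off-$\pi$ cascade neither dominates the schedule (which would inflate $\opt$ and kill the lower bound) nor disappears entirely (which would allow $\cA$ to probe candidates at zero cost and trivially identify $\pi$); this amounts to picking $\eps$ inversely polynomial in $m$ and then verifying that the cascade contributes lower-order additive terms to both the offline and online makespans.
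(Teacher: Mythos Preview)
Your proposal has a genuine gap, and it is precisely the issue you label a ``secondary technical difficulty.'' In your construction the $b$ siblings at level $\ell$ have \emph{different} sizes: one has size $p_\ell=\sigma_\ell$ and the other $b-1$ have size $\eps$. Your symmetry claim---that the on-$\pi$ candidate is uniformly distributed among the $b$ siblings conditioned on the algorithm's transcript---holds only until the first sibling finishes. But the algorithm can simply process all $b$ siblings in parallel until each has received $\eps$ work; at that point the $b-1$ off-path siblings complete and the lone survivor is revealed as the on-path task. The total probing work per level is $b\eps$, and for $\opt=\Theta(K)$ you need the $\Theta(b^K)=\Theta(m)$ off-path tasks to contribute negligible total work, forcing $\eps$ so small that this probing is essentially free. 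There is no choice of $\eps$ that simultaneously keeps $\opt$ small and makes the probing cost $\Omega(K)$ per level; the two requirements are in direct tension, and your convexity/symmetrization reduction cannot repair this because the symmetry is broken the moment task sizes differ.

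The paper's construction avoids this trap in two ways you should note. First, it uses only \emph{two} speed classes (one fast machine of speed $1$ and $m$ slow machines of speed $\epsilon$), not $K$; the $K$-class scaffolding you import from \Cref{prop:lb} is unnecessary. Second---and this is the key idea you are missing---all $b$ children of an on-path node are given size $1$, not just the on-path child: the instance sets every node on $P$ \emph{or adjacent to $P$} to size $1$, and every other node to size $0$. Thus the $b$ candidates at each level are genuinely indistinguishable throughout their processing, and the on-path child is revealed only after it is \emph{fully} processed (its children then have size $1$ while its siblings' children have size $0$). The per-level cost is therefore the time to do $\Theta(b)$ units of work with total fast capacity $O(1)$, namely $\Omega(1/\epsilon)$; over the $\Theta(1/\epsilon)$ levels this gives makespan $\Omega(1/\epsilon^2)$ against $\opt=\Theta(1/\epsilon)$.
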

 \begin{proof}{\em(Sketch)} 
 Consider a single DAG with $m$ nodes, which is formed by a $1/\epsilon$-ary tree where every edge is directed away from the root (so, for instance, every job depends on the root-job) for $\epsilon= \Theta(\frac{\log\log m}{\log m})$. The objective is to minimize the makespan. 
 We are given two kinds of machines: there is one fast machine of speed $s_1=1$ and there are $m$ slow machines of speed $s_2 = \epsilon$. We will set the job lengths such that the offline optimum makespan is $1/\epsilon$ but any non-clairvoyant algorithm has makespan  $\Omega(1/\epsilon^2)$.

 To set the job lengths, consider a random root-leaf path $P$ on the DAG. Now every node/job on $P$ or incident onto $P$ (i.e., those that share an edge with a node on $P$) has length $1$, and every other node/job in the DAG has length $0$. The offline optimum is $1/\epsilon$ because the fast machine can work on the unit-sized jobs on $P$ and  the slow machines can work on the unit-sized jobs incident onto $P$. However, any non-clairvoyant algorithm (which does not know $P$) will spend  $\Omega(1/\epsilon)$ time at each node of $P$ to identify the next node of $P$, and hence has makespan  $\Omega(1/\epsilon^2)$.
 \end{proof}
 
\else
\fi

 \section{Missing Proofs of~\Cref{sec:stmt-lower-bound}}
 \label{sec:missing2}
 \incap*
\begin{proof}%
We show how to transform the instance so that it satisfies the increasing 
capacity assumption, while losing only $O(K)$-factor in the competitive ratio. For sake of brevity, let $\kappa$ denote the constant 64. 

 For a speed class $\ell$, let $C_\ell$ denote $m_\ell \speed_\ell$, i.e., the total processing capacity of the machines in this speed class.  
  Starting from speed class $1$, we construct a subset $X$ of speed classes as follows: if $\ell$ denotes the last speed class added to $X$, then
  let $\ell' > \ell$ be the smallest class such that $C_{\ell'} \geq 2 \kappa C_{\ell}.$ We add $\ell'$ to $X$ and continue this process till we have exhausted all the speed classes. 
  
  Consider the instance $\I'$ in which the set of jobs is the same as those in $\I$, but there are $K m_\ell$ machines of speed class $\ell$ for each $\ell \in X$. For a speed class $\ell \in X$, let $C_\ell'$ denote 
  $ 2 \kappa K m_\ell \speed_\ell,$ which is at most the total capacity of the speed class $\ell$ machines in $\I'$. 
  Let us now consider the optimal solutions of the two instances. 
 We first observe that $\opt(\I') \leq \opt(\I)$. Consider two consecutive speed classes $\ell_1 < \ell_2$ in $X$. From the definition of $X$, we see that $C_{\ell_1}' \geq \sum_{l=\ell_1}^{\ell_2-1} C_l.$ Therefore all the processing done by a solution to $\I$ on machines of speed class $[\ell_1, \ell_2)$ during a timeslot $[t,t+1]$ can be performed on machines of speed class $\ell_1$ in $\I'$ during the same timeslot. Therefore, $\opt(\I') \leq \opt(\I)$. 
 
 For the converse statement, it is easy to see that if we give $2 \kappa K$ speedup to each machine in $\I$, then the processing capacity of each speed class in $\I$ is at least that in $\I'$. Therefore, $\opt(\I) \leq 2 \kappa K \opt(\I')$. Therefore, replacing $\I$ by $\I'$ will result in $O( \kappa K)$ loss in competitive ratio. It is also easy to check that $\I'$ satisfies increasing capacity assumption. 
 
 Observe that the conversion from $\I$ to $\I'$ can be easily done at the beginning -- we just need to identify the index set $X$, and use only these for processing. The factor $K$ loss in competitive ratio is also tight for the instance $\I$ where all speed classes have the same capacity. 
\end{proof}

\section{Missing proofs of~\Cref{sec:scheduling-algorithm}}
\label{sec:missing3}

\feas*
\begin{proof}%
  The rates assigned to  tasks change only when one of these events happen: (i) a new job $j$ arrives, (ii) an existing task finishes. 
  Assuming that the job sizes, release dates are integers, we can find a suitable $\delta > 0$ (which will also depend on the speeds of the machines) such that all the above events happen at integral multiples of $\delta$. 
  
  Consider an interval $[t, t+\delta),$ where $t$ is an integral multiple of $\delta.$ We need to show that if $L^t_v$'s satisfy the condition~($\star$), then we can build a feasible schedule during $[t, t + \delta)$. By feasibility, we mean that each task $v$ can be processed to an extent of $\barp_v := L^t_v \cdot \delta$ extent and at any point of time, it gets processed on at most one machine. 
  
  We follow a greedy strategy to build the schedule. Suppose we have built the schedule till time $t' \in[t, t+\delta)$. At time $t'$, we order the tasks in descending order of the remaining processing requirement for this slot (at time $t$, each task $v$ has processing requirement of $\barp_v$). Let the ordered  tasks at time $t'$ be $v_1, \ldots, v_n.$ We schedule $v_i$ on machine $i$. 
  
  Suppose for the sake of contradiction, a task $\vst$ is not able to complete $\barp_{\vst}$ amount of processing. We first make the following observation: 
  \begin{claim}
    \label{cl:speed1}
    Let $v$ and $v'$ be two tasks such that at some time $t' \in [t, t+\delta)$, we prefer $v$ to $v'$ in the ordering at time $t'$. Then if $v'$ does not complete $\barp_{v'}$ amount of processing during $[t, t+\delta)$, then neither does $v$. 
  \end{claim}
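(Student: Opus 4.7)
The plan is to show that the ``preferred'' relation between $v$ and $v'$ at time $t'$ persists throughout the remainder of the slot $[t', t+\delta)$. Let $r_u(s)$ denote the amount of $\barp_u$ still unprocessed at time $s$. Since the greedy orders tasks in descending order of $r_u$, saying ``$v$ is preferred to $v'$ at $t'$'' is equivalent to $r_v(t') \geq r_{v'}(t')$. My aim is to prove that $r_v(s) \geq r_{v'}(s)$ continues to hold for every $s \in [t', t+\delta)$. Granting this, the claim follows at once: if $v'$ does not complete during $[t, t+\delta)$ then $r_{v'}(t+\delta) > 0$, so $r_v(t+\delta) \geq r_{v'}(t+\delta) > 0$, and hence $v$ also fails to complete its $\barp_v$ units.

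For the persistence of $r_v \geq r_{v'}$, the easy case is when $r_v(s) > r_{v'}(s)$ strictly: $v$ then sits strictly above $v'$ in the descending order, hence is scheduled on a machine no slower than $v'$'s. So $r_v$ is decreasing at a rate at least as large as $r_{v'}$'s, meaning the positive gap $r_v - r_{v'}$ can shrink but cannot flip sign in a single instant. The delicate case is a tie $r_v(s) = r_{v'}(s)$. A naive tie-break that prefers $v$ would place $v$ on the faster of the two corresponding machines and thereby seem to push $r_v$ momentarily below $r_{v'}$; but the instant this happens the descending order flips, the machine assignment reverses, and the imbalance is immediately corrected. Formally I would either (i)~adopt the continuous-time LRPT formulation on related machines, in which tied tasks are time-shared symmetrically across their two ranks so that their remainings stay equal, or (ii)~work with a discretized schedule of step $\epsilon$ and argue that the right derivative of $r_v - r_{v'}$ at any tie is nonnegative in the limit $\epsilon \to 0$.

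The one nontrivial step is this tie-handling; everything else is routine. I would favor the continuous-time route, observing that the greedy schedule on $[t, t+\delta)$ splits into finitely many sub-intervals between finishes and ``tie-crossings,'' on each of which the machine-to-task assignment is constant and rates are constant, so the persistence argument applies sub-interval by sub-interval. Combined with the trivial observation that once a task finishes it stays finished with $r_u \equiv 0$, this yields $r_v(s) \geq r_{v'}(s)$ for all $s \in [t', t+\delta)$, completing the proof of the claim.
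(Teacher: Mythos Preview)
Your argument is correct and follows essentially the same line as the paper's proof: both track the remaining processing $r_v(s)$ and $r_{v'}(s)$, observe that while $r_v>r_{v'}$ the faster machine goes to $v$ so the gap cannot flip sign, and handle the tie case by noting the two tasks then stay equal thereafter. The paper dispatches the tie case in one sentence (``it is easy to check that the remaining processing requirements for both $v$ and $v'$ will remain the same''), whereas you spell out the continuous-time sharing or limiting-$\epsilon$ justification; this extra care is fine but not a different approach.
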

  \begin{proof}
    Since we prefer $v$ at time $t'$, $v$ has more remaining processing time. If we never prefer $v'$ to $v$ after time $t'$, then $v$ always has more remaining processing requirement than $v'$ during this interval. If we prefer $v'$ to $v$ at some point of time during $(t',t+\delta)$, then it is easy to check that the remaining processing requirements for both $v$ and $v'$ will remain the same. The result follows easily from this observation. 
  \end{proof}
  
  Starting from $\{\vst\}$, we build a set $S$ of tasks which has the following property: if $v \in S$, then we add to $S$ all the tasks $v'$ such that $v'$ was preferred over $v$ at some point of time during $[t, t+\delta)$. Repeating application of Claim~\ref{cl:speed1} shows that none of these tasks $v$ complete $\barp_v$ amount of processing during $[t, t+\delta)$. Let ${\bar m}$ denote $|S|$. We note that only tasks in $S$ would have been processed on the first ${\bar m}$ machines during $[t, t+\delta)$ -- otherwise, we can add more tasks to $S$. Since none of these tasks finish their desired amount of processing during this interval, it follows that 
  $$ \sum_{v \in S} \barp_v \geq \gamma \delta \cdot S_{{\bar m}} \ . $$
  Since $\barp_v = \delta L^t_v $, we see that the set of tasks in $S$ violates~($\star$). This is a contradiction, and so such a task $\vst$ cannot exist. 
\end{proof}

\rate*
\begin{proof}
  For any task $v'$, let $\tau_{v'}$ be the value of $\tau$ at which
  $v'$ freezes. We know that
  \begin{gather}
    \sum_{v' \in V} \tw^t(v') \cdot \tau_{v'} = \gamma S_{|V|}\ . \label{eq:1}
  \end{gather}
  Since
  $ \sum_{v' \in V\setminus \{v\}} \tw^t(v') \cdot\tau_{v'} \leq \gamma S_{|V|-1}$
  by feasibility, it follows that
  \begin{gather}
    \tw^t(v) \cdot \tau_v \geq \gamma s_{|V|}\ . \label{eq:2}
  \end{gather}
  Now for all $v' \in V'$, we have $\tau_{v'} \geq \tau_v$,
  so
  \[ \tw^t(V') \cdot \tau_v ~~=~~ \sum_{v' \in V'} \tw^t(v') \cdot \tau_v ~~\leq~~
  \sum_{v' \in V'} \tw^t(v') \cdot\tau_{v'} ~~\leq~~
  \sum_{v' \in V} \tw^t(v') \cdot \tau_{v'} ~~\stackrel{(\ref{eq:1})}{=}~~ \gamma S_{|V|}\ . \] 
  Hence,  the first claim follows: 
    \[ \frac{\tw^t(v)}{s_{|V|}} ~~\stackrel{(\ref{eq:2})}{\geq}~~ \frac{\gamma}{\tau_{v}} ~~\geq~~ \frac{\tw^t(V')}{S_{|V|}}\ . \] 
For the second claim,
  $$ \sum_{v' \in V''} L_{v'}^t ~~=~~ \sum_{v' \in V''} \tw^t_{v'} \cdot
  \tau_{v'} ~~\leq~~  \tw^t(V'') \cdot \tau_v \ .$$
  The claim now follows by the definition $L^t_v = \tw^t(v) \cdot \tau_v$.
\end{proof}

\section{Missing Proofs of~\Cref{sec:lp}}
\label{sec:appendixlp}
\lp*
\begin{proof} %
Consider a schedule $\calS$, and let $x_{ivt}$ be the extent of processing done on a task $v$ (belonging to job $j$) during $[t,t+1]$ on machine $i$. More formally, if the task is processed for $\varepsilon$ units of time on machine $i$ during this time slot, then we set $x_{ivt}$ to $\varepsilon \cdot s_i.$ Constraint~\eqref{eq:alphaDual} states that every task $v$ needs to be processed to an extent of $p_v$, whereas~\eqref{eq:betaDual} requires that we cannot do more than $s_i$ unit of processing in a unit time slot on machine $i$. Now we verify verify~\eqref{eq:deltaDual}. Consider a task job $j$ and a task $v$ belonging to it. The total processing time of $v$ is 
\begin{align}
    \label{eq:flow1}
    \sum_{i,t} \frac{x_{ivt}}{s_i}. 
\end{align}
The completion time $F_j$ of $j$ is at least the processing time of each of the tasks in it. Finally, we check~\eqref{eq:deltatdual}. Define $F_{j,t}$ to be 1 if $j$ is alive at time $t$. The RHS of this constraint is the fraction of $v$ which is done after time $t$; and so if this is non-zero, then $F_{j,t}$ is 1. This shows the validity of this constraint.   

In the objective function, the first term is the total weighted completion time of all the jobs. The second term is also the same quantity, because $F_j$ is equal to $\sum_{t \geq r_j} F_{j,t}.$
\end{proof}

\end{document}